\theoremstyle{definition}
\newtheorem{definition}{Definition}[section]
\newtheorem{theorem}[definition]{Theorem}
\newtheorem{lemma}[definition]{Lemma}
\newtheorem{remark}[definition]{Remark}
\newtheorem{sub-claim}[definition]{Sub-Claim}
\newtheorem{corollary}[definition]{Corollary}
\newcommand{\T}{\mathbb{T}}
\newcommand{\Term}{\mathsf{Term}}
\newcommand{\y}{\mathsf{y}}
\newcommand{\x}{\mathsf{x}}
\newcommand{\mroot}{\mathsf{root}}
\newcommand{\rank}{\mathsf{rank}}
\newcommand{\bmax}{\mathsf{max}}
\newcommand{\Allaliens}{\mathsf{AllAliens}}
\newcommand{\Ind}{\mathsf{Ind}}
\newcommand{\Tmod}{\T\mathsf{mod}}
\newcommand{\cod}{\mathsf{cod}}
\newcommand{\dom}{\mathsf{dom}}
\newcommand{\id}{\mathsf{id}}
\newcommand{\Aut}{\mathsf{Aut}}
\newcommand{\Id}{\mathsf{Id}}
\begin{document}

\title{Isotropy and Combination Problems}
\author{Jason Parker}
\date{\today}

\maketitle

\begin{abstract}
In \cite{MFPSpaper}, the author and his collaborators studied the phenomenon of \emph{isotropy} as introduced in \cite{Funk} in the context of \emph{single-sorted equational theories}, and showed that the isotropy group of the category of models of any such theory encodes a notion of \emph{inner automorphism} for the theory.

Using results from the treatment of \emph{combination problems} in term rewriting theory (cf. \cite[Chapter 9]{Rewriting} and \cite{Combining}), we show in this article that if $\T_1$ and $\T_2$ are (disjoint) equational theories satisfying minimal assumptions, then any free, finitely generated model of the disjoint union theory $\T_1 + \T_2$ has \emph{trivial} isotropy group, and hence the only \emph{inner automorphisms} of such models, i.e. the only automorphisms of such models that are \emph{coherently extendible}, are the \emph{identity} automorphisms. 

As a corollary, we show that the \emph{global isotropy group} of the category of models $(\T_1 + \T_2)\mathsf{mod}$, i.e. the group of invertible elements of the centre of this category, is the trivial group.
\end{abstract}

\section{Background}

In this section we briefly review the relevant background material from \cite{MFPSpaper} and \cite[Chapter 2]{thesis} on the isotropy groups of free models of equational theories. 

A single-sorted \emph{equational theory} $\T$ is a set of equations between the terms of a first-order signature $\Sigma$ consisting of single-sorted operation symbols. For example, the theories of semigroups, (commutative) monoids, (abelian) groups, and (commutative) rings with unit are all single-sorted equational theories. A (set-based) \emph{model} $M$ of a single-sorted equational theory $\T$ is a set equipped with functions on $M$ interpreting the function symbols of the signature, which satisfies the axioms of $\T$. For example, a group is just a model of the equational theory of groups. One can then form the category $\Tmod$ of (set-based) models of $\T$ and homomorphisms between them (i.e. functions that respect the interpretations of the operation symbols). An \emph{isomorphism} in $\T\mathsf{mod}$ can then be easily shown to be just a bijective homomorphism.  
 
Given $M \in \Tmod$, the \emph{(covariant) isotropy group} of $M$ is the group $\mathcal{Z}_\T(M)$ of all natural automorphisms of the forgetful functor $M / \Tmod \to \Tmod$, where $M / \Tmod$ is the slice category under $M$. More concretely, an element of $\mathcal{Z}_\T(M)$ is a family of automorphisms
\[ \pi = \left(\pi_h : \cod(h) \xrightarrow{\sim} \cod(h)\right)_{\dom(h) = M} \] in $\Tmod$ indexed by morphisms $h \in \Tmod$ with domain $M$ that has the following \emph{naturality} property: if $h : M \to M'$ and $h' : M' \to M''$ are homomorphisms in $\Tmod$, then the following diagram commutes:
\begin{center}
\begin{tikzcd}[ampersand replacement=\&, row sep = huge, column sep = huge]
M' \arrow{r}{\pi_{h}}\arrow{d}[swap]{h'} \& M'\arrow{d}{h'} \\
M''\arrow{r}[swap]{\pi_{h' \circ h}} \& M'' \\
\end{tikzcd}
\end{center}     
We then say that an automorphism $h : M \xrightarrow{\sim} M$ of $M \in \Tmod$ is a \emph{(categorical) inner automorphism} (or is \emph{coherently extendible}) if there is some $\pi \in \mathcal{Z}_\T(M)$ with $h = \pi_{\mathsf{id}_M} : M \xrightarrow{\sim} M$; roughly, $h$ is a categorical inner automorphism if it can be coherently extended along any morphism out of $M$. This terminology is motivated by the fact that the categorical inner automorphisms of \emph{groups} are exactly the ordinary inner automorphisms of groups (defined in terms of conjugation), as shown by Bergman in \cite[Theorem 1]{Bergman}. In keeping with this terminology, we will sometimes also refer to the elements of $\mathcal{Z}_\T(M)$ as \emph{extended inner automorphisms} of $M$.

In \cite{MFPSpaper} and \cite{thesis} the author and his collaborators gave a \emph{logical} characterization of the isotropy group of a model of an equational theory. We will only review the (simpler) characterization for the \emph{free, finitely generated} models, as these are the only models that will concern us in this article. If $\T$ is an equational theory and $n \geq 0$, then a model $M_n \in \Tmod$ is \emph{free on} $n$ \emph{generators} if it contains $n$ distinct elements (the \emph{generators}) $m_1, \ldots, m_n$ and has the following universal property: for any $N \in \Tmod$ and any elements $a_1, \ldots, a_n \in N$, there is a unique homomorphism $h_{a_1, \ldots, a_n} : M_n \to N$ with $h_{a_1, \ldots, a_n}(m_i) = a_i$ for each $1 \leq i \leq n$. For a fixed $n \geq 0$, the free $\T$-models on $n$ generators are all isomorphic, so we may speak of \emph{the} free $\T$-model on $n$ generators (unique up to isomorphism). 

The free $\T$-model on $n$ generators can be given the following explicit description: let $\{\y_1, \ldots, \y_n\}$ be a set of $n$ constants, and let $\Sigma(\y_1, \ldots, \y_n)$ be the single-sorted signature obtained from $\Sigma$ by adding the elements $\y_1, \ldots, \y_n$ as new constant symbols. Let $\T(\y_1, \ldots, \y_n)$ be the equational theory with the same axioms as $\T$, but now regarded as an equational theory over the signature $\Sigma(\y_1, \ldots, \y_n)$. Consider the set $\Term^c(\Sigma(\y_1, \ldots, \y_n))$ of \emph{closed} terms over the signature $\Sigma(\y_1, \ldots, \y_n)$ (i.e. terms over this signature that do \emph{not} contain variables). We define a relation $\sim_{\T, n} \ = \ \sim_\T$ on $\Term^c(\Sigma(\y_1, \ldots, \y_n))$ by setting $s \sim_{\T} t$ iff \[ \T(\y_1, \ldots, \y_n) \vdash s = t \] for any $s, t \in \Term^c(\Sigma(\y_1, \ldots, \y_n))$. Roughly, we have $s \sim_{\T} t$ iff $s$ can be proved equal to $t$ using (only) the axioms of $\T$. Then $\sim_{\T}$ is a $\Sigma$-\emph{congruence} relation on $\Term^c(\Sigma(\y_1, \ldots, \y_n))$, i.e. an equivalence relation that is compatible with the function symbols in $\Sigma$. We can then form the \emph{quotient} $\T$-model \[ \Term^c(\Sigma(\y_1, \ldots, \y_n))/{\sim_{\T}}, \] whose objects are $\sim_{\T}$-congruence classes, which will have the desired universal property, with generators $[\y_1], \ldots, [\y_n]$. So we can take \[ M_n :=\Term^c(\Sigma(\y_1, \ldots, \y_n))/{\sim_{\T}} \] as an explicit construction of the free $\T$-model on $n$ generators.

Now let $G_\T(M_n)$ be the set of all elements
\[ [t] \in \Term^c(\Sigma(\x, \y_1, \ldots, \y_n))/{\sim_{\T}} \] (note the additional constant $\x$) with the following properties:
\begin{itemize}
\item $[t]$ is \emph{invertible}, meaning that there is some $t^{-1} \in \Term^c(\Sigma(\x, \y_1, \ldots, \y_n))$ such that
\[ \T(\x, \y_1, \ldots, \y_n) \vdash t[t^{-1}/\x] = \x = t^{-1}[t/\x]. \]

\item $[t]$ \emph{commutes generically with} the operation symbols of $\Sigma$, meaning that if $f$ is an $m$-ary function symbol of $\Sigma$, then
\[ \T(\x_1, \ldots, \x_m, \y_1, \ldots, \y_n) \vdash t[f(\x_1, \ldots, \x_m)/\x] = f(t[\x_1/\x], \ldots, t[\x_m/\x]). \] 
\end{itemize}  
Then this set $G_\T(M_n)$ can be given the structure of a group (with unit element $[\x]$ and multiplication given by substitution into $\x$), and we then have (cf. \cite[Corollary 2.4.15]{thesis})
\[ \mathcal{Z}_\T(M_n) \cong G_\T(M_n). \] We refer to $G_\T(M_n)$ as the \emph{logical} isotropy group of $M_n$; thus, the (categorical) covariant isotropy group of $M_n$ is isomorphic to its \emph{logical} isotropy group. Thus, the extended inner automorphisms of $M$ can essentially be identified with those (congruence classes of) closed $\Sigma$-terms over the constant $\x$ and the generating constants $\y_1, \ldots, \y_n$ that are invertible and commute generically with the operations of $\Sigma$. 

Given any $[t] \in \Term^c(\Sigma(\x, \y_1, \ldots, \y_n))/{\sim_{\T}}$ and any $\T$-model $N$ with $n$ distinct elements $a_1, \ldots, a_n \in N$, the element $[t]$ induces a function
\[ [t]^{N, a_1, \ldots, a_n} : N \to N; \] roughly, given any $a \in N$, the value $[t]^{N, a_1, \ldots, a_n}(a) \in N$ is the element of $N$ obtained by substituting $a_1, \ldots, a_n$ for $\y_1, \ldots, \y_n$ and $a$ for $\x$ in $t$, and then interpreting/evaluating the result in $N$. The following results then follow from the definition of the isomorphism $\mathcal{Z}_\T(M_n) \cong G_\T(M_n)$, cf. \cite[Corollary 2.2.42]{thesis}. For any homomorphism $h : M_n \to N$ in $\Tmod$, let us write $h_1, \ldots, h_n \in N$ for the images $h([\y_1]), \ldots, h([\y_n]) \in N$ of the generators of $M_n$ under $h$.
\begin{itemize}
\item Given any (not necessarily \emph{natural}) family 
\[ \pi = \left(\pi_h : \cod(h) \to \cod(h)\right)_{\dom(h) = M_n} \] of endomorphisms in $\Tmod$ indexed by morphisms with domain $M_n$, we have $\pi \in \mathcal{Z}_\T(M_n)$ iff there is some (uniquely determined) element $[t] \in G_\T(M_n)$ with the property that
\[ \pi_h = [t]^{N, h_1, \ldots, h_n} : N \to N \] for each homomorphism $h : M_n \to N$ in $\Tmod$ with domain $M_n$ (in particular, every such function $[t]^{N, h_1, \ldots, h_n}$ will be a $\T$-model automorphism).    

\item Given any endomorphism $h : M_n \to M_n$ in $\Tmod$, we have that $h$ is a \emph{categorical inner automorphism} of $M_n$ iff there is some element $[t] \in G_\T(M_n)$ with \[ h = [t]^{M, \id_1, \ldots, \id_n} : M \to M \] (where $\id : M_n \to M_n$ is the identity morphism). 
\end{itemize}

\section{Main Result}

If $\T$ is an equational theory over a signature $\Sigma$, then we say that a function symbol $f \in \Sigma$ of arity $n \geq 1$ is a \emph{projection} (in $\T$) if there is some $1 \leq i \leq n$ such that $\T$ proves the equation $f(y_1, \ldots, y_n) = y_i$ for pairwise distinct variables $y_1, \ldots, y_n$. We also say that $f$ is \emph{constant} (in $\T$) if there is some (not necessarily closed) term $s$ over $\Sigma$ for which $\T$ proves the equation $f(y_1, \ldots, y_n) = s$, where $y_1, \ldots, y_n$ are pairwise distinct variables \emph{none of which} occur in $s$.  

Suppose that $\T_1$ and $\T_2$ are equational theories over respective disjoint signatures $\Sigma_1$ and $\Sigma_2$, such that each theory has at least one function symbol that is neither constant nor a projection (in that theory). More precisely, we suppose that $\Sigma_1$ contains some function symbol $f_1$ of arity $n_1 \geq 1$ that is neither constant nor a projection in $\T_1$, and we suppose that $\Sigma_2$ contains a function symbol $f_2$ of arity $n_2 \geq 1$ that is neither constant nor a projection in $\T_2$. 

Let $\T_1 + \T_2$ be the \emph{(disjoint) union} or \emph{combination} of the theories $\T_1$ and $\T_2$, i.e. the equational theory over the single-sorted signature $\Sigma_1 + \Sigma_2 := \Sigma_1 \cup \Sigma_2$ whose axioms are all those of $\T_1$ and $\T_2$ combined. We will show in this section (cf. Theorem \ref{maintheorem}) that every free, finitely generated model of the disjoint union theory $\T_1 + \T_2$ has \emph{trivial} isotropy group. So let $n \geq 0$ and let $M_n$ be the free model of $\T_1 + \T_2$ on the $n$ generators $\y_1, \ldots, \y_n$. Recall from Section 1 that $M_n$ has the explicit description $\Term^c((\Sigma_1 + \Sigma_2)(\y_1, \ldots, \y_n))/{\sim_{\T_1+\T_2}}$, where $\Term^c((\Sigma_1 + \Sigma_2)(\y_1, \ldots, \y_n))$ is the set of all \emph{closed} terms over the signature $(\Sigma_1 + \Sigma_2)(\y_1, \ldots, \y_n)$ obtained from $\Sigma_1 + \Sigma_2$ by adding $n$ pairwise distinct new constant symbols $\y_1, \ldots, \y_n$, and $\sim_{\T_1+\T_2}$ is the $\Sigma_1 + \Sigma_2$-congruence relation on $\Term^c((\Sigma_1 + \Sigma_2)(\y_1, \ldots, \y_n))$ given by 
\[ s \sim_{\T_1+\T_2} t \ \Longleftrightarrow \ (\T_1 + \T_2)(\y_1, \ldots, \y_n) \vdash s = t \] for any $s, t \in \Term^c((\Sigma_1 + \Sigma_2)(\y_1, \ldots, \y_n))$. 

Recall also from Section 1 that the logical isotropy group $G_{\T_1 + \T_2}(M_n)$ consists of all elements
\[ [t] \in \Term^c((\Sigma_1 + \Sigma_2)(\x, \y_1, \ldots, \y_n))/{\sim_{\T_1 + \T_2}} \] (note the additional constant $\x$) with the following properties:
\begin{itemize}
\item $[t]$ is \emph{invertible}, meaning that there is some $t^{-1} \in \Term^c((\Sigma_1 + \Sigma_2)(\x, \y_1, \ldots, \y_n))$ such that
\[ (\T_1 + \T_2)(\x, \y_1, \ldots, \y_n) \vdash t[t^{-1}/\x] = \x = t^{-1}[t/\x]. \]

\item $[t]$ \emph{commutes generically with} the function symbols of $\Sigma_1 + \Sigma_2$, meaning that if $f$ is an $m$-ary function symbol of $\Sigma_1$ or $\Sigma_2$, then
\[ (\T_1 + \T_2)(\x_1, \ldots, \x_m, \y_1, \ldots, \y_n) \vdash t[f(\x_1, \ldots, \x_m)/\x] = f(t[\x_1/\x], \ldots, t[\x_m/\x]). \] 
\end{itemize}  

\noindent Our primary aim in this section (cf. Theorem \ref{maintheorem}) will be to show that this group $G_{\T_1 + \T_2}(M_n)$ is \emph{trivial}, i.e. that $G_{\T_1 + \T_2}(M_n) = \{ [\x] \}$. Let us now try to give some intuition for why this will be the case. If $[t] \in G_{\T_1 + \T_2}(M_n)$ and $[t] \neq [\x]$, then $t \not\equiv \x$, and (unless $\T_1 + \T_2$ is the trivial theory, i.e. no model of $\T_1 + \T_2$ has more than one element) also $t \not\equiv \y_i$ for each $1 \leq i \leq n$. So the `root' or `outer' function symbol of $t$ will belong to either $\Sigma_1$ or $\Sigma_2$, say $\Sigma_1$. Then $[t]$, being an element of the logical isotropy group of $M_n$, must commute generically with all function symbols of $\Sigma_2$. But since the axioms of $\T_1 + \T_2$ are just those of $\T_1$ and $\T_2$ combined, and the signatures $\Sigma_1$ and $\Sigma_2$ are disjoint, it follows that $\T_1 + \T_2$ will not contain any (new) axioms that force the function symbols of the two signatures to interact in any `non-trivial' way. In particular, since the root function symbol of $t$ is in $\Sigma_1$, then $[t]$ will not commute generically with all function symbols of $\Sigma_2$ (more specifically, it will not commute generically with any function symbol of $\Sigma_2$ that is neither constant nor a projection in $\T_2$), and hence $[t] \notin G_{\T_1 + \T_2}(M_n)$, contrary to assumption. So we must have $[t] = [x]$. 
   
To prove this main result, we now need to introduce some notation, definitions, and results from the treatment of combination problems in term rewriting theory (see \cite[Chapter 9]{Rewriting}, \cite{Combining}). We first need to define a new signature \[ \Sigma_3 := \{ \y_1, \ldots, \y_n \} \] containing just the constants $\y_1, \ldots, \y_n$. We also need to define a fourth signature $\Sigma_4$ of `indeterminates' \[ \Sigma_4 := \{ \x \} \cup \{ \x_i : i \geq 1\}, \] where $\x, \x_i$ ($i \geq 1$) are constant symbols. We assume that these signatures $\Sigma_1$ -- $\Sigma_4$ are pairwise disjoint. Finally, we set \[ \Sigma := \Sigma_1 \cup \Sigma_2 \cup \Sigma_3 \cup \Sigma_4. \] 

\noindent Let $V$ be a countable set of variables (disjoint from $\Sigma$), and for $k \in \{1, 2\}$, let $\Term(\Sigma_k, V)$ be the set of all terms over $\Sigma_k$ that (may) contain variables from $V$. For $k \in \{1, 2\}$, a \emph{context} over the signature $\Sigma_k$ is a term $C \in \Term(\Sigma_k, V)$. If $C$ contains at least one variable from $V$ and is not itself a variable, then we say that $C$ is a \emph{proper} context. By a \emph{closed term} over $\Sigma$, we will mean an element of $\Term^c(\Sigma)$, so that a closed term over $\Sigma$ does not contain any variables. A closed term $s$ over $\Sigma$ will be called \emph{pure} if $s \in \Term^c(\Sigma_i)$ for some $1 \leq i \leq 4$ (i.e. $s$ contains symbols from only one signature); otherwise, $s$ will be called \emph{impure}. If $s$ is a closed term over $\Sigma$, then we define $\mroot(s) \in \Sigma$, the `root' symbol of $s$, as follows. If $s$ is a constant, then $\mroot(s) \equiv s$. Otherwise, there is some function symbol $g \in \Sigma_1 \cup \Sigma_2$ of arity $m \geq 1$ such that $s \equiv g(s_1, \ldots, s_m)$ for some closed terms $s_1, \ldots, s_m$ over $\Sigma$, and we define $\mroot(s) := g$. 

Given an impure closed term $s$ over $\Sigma$, if $\mroot(s) \in \Sigma_k$ for $k \in \{1, 2\}$, then there is a proper $\Sigma_k$-context $C$ containing variables $y_1, \ldots, y_\ell \in V$ (for some $\ell \geq 1$) and there are distinct closed terms $s_1, \ldots, s_\ell$ over $\Sigma$ such that $s \equiv C[s_1/y_1, \ldots, s_\ell/y_\ell]$ and $\mroot(s_j) \notin \Sigma_k$ for each $1 \leq j \leq \ell$. Note that $C$ and the (distinct) closed terms $s_1, \ldots, s_\ell$ are uniquely determined by $s$. The closed terms $s_1, \ldots, s_\ell$ are called the \emph{alien} subterms of $s$. We write $s \equiv C[s_1, \ldots, s_\ell]$ with square brackets to indicate that $s_1, \ldots, s_\ell$ are the \emph{alien} subterms of $s$. If one of the terms $s_1, \ldots, s_\ell$ is \emph{not} alien, i.e. if $\mroot(s_j) \in \Sigma_k$ for some $1 \leq j \leq \ell$, then we just write $s \equiv C(s_1, \ldots, s_\ell)$ with round rather than square brackets. 

Next, we define the \emph{rank} of a closed term over $\Sigma$, which essentially measures how many different `levels' the term has, or how impure the term is:

\begin{definition}
Let $s$ be a closed term over $\Sigma$. 
\begin{itemize}
\item If $s$ is pure, then $\rank(s) := 0$. 

\item Otherwise, there is some $k \in \{1, 2\}$ and some (uniquely determined) proper $\Sigma_k$-context $C(y_1, \ldots, y_\ell)$ and distinct alien subterms $s_1, \ldots, s_\ell$ over $\Sigma$ such that $s \equiv C[s_1, \ldots, s_\ell]$, and we then set \[ \rank(s) := 1 + \bmax\{\rank(s_1), \ldots, \rank(s_\ell)\}. \] \qed
\end{itemize}
\end{definition}  

\noindent Now let $\sim$ be an arbitrary equivalence relation on $\Term^c(\Sigma)$. For any $1 \leq k \leq 4$, let $\Term^c(\Sigma_k, \Term^c(\Sigma)/{\sim})$ be the set of all closed terms over $\Sigma_k$ that may contain elements from $\Term^c(\Sigma)/{\sim}$ as constants. We now define the notion of `abstracting alien subterms' with respect to $\sim$ as follows. 

\begin{definition}
Let $1 \leq k \leq 4$. For any term $s \in \Term^c(\Sigma)$, we define $[s]_{\sim}^k \in \Term^c(\Sigma_k, \Term^c(\Sigma)/{\sim})$ as follows: 

\begin{enumerate}

\item If $\mroot(s) \notin \Sigma_k$, then we set $[s]_\sim^k := [s]_\sim$. 

\item If $\mroot(s) \in \Sigma_k$ and $s$ is pure, then we set $[s]_{\sim}^k := s$. 

If $\mroot(s) \in \Sigma_k$ and $s$ is impure and $s \equiv C[s_1, \ldots, s_\ell]$ for some proper $\Sigma_k$-context $C$ and distinct closed terms $s_1, \ldots, s_\ell$, then we set \[ [s]_{\sim}^k := C\left([s_1]_{\sim}, \ldots, [s_\ell]_{\sim}\right). \] \qed
\end{enumerate}
\end{definition}

\noindent Essentially, one abstracts alien subterms from a closed term $s$ over $\Sigma$ by converting all alien subterms of $s$ to their $\sim$-classes. 

We now require the following series of definitions. For $k \in \{1, 2\}$ and $\sim$ an equivalence relation on $\Term^c(\Sigma)$, a $\Sigma_k$-congruence on the set $\Term^c(\Sigma_k, \Term^c(\Sigma)/{\sim})$ is an equivalence relation $R$ on $\Term^c(\Sigma_k, \Term^c(\Sigma)/{\sim})$ that is compatible with the function symbols of $\Sigma_k$, in the sense that if $f$ is an $m$-ary function symbol of $\Sigma_k$ and $s_1, \ldots, s_m, t_1, \ldots, t_m \in \Term^c(\Sigma_k, \Term^c(\Sigma)/{\sim})$ with $s_i \ R \ t_i$ for each $1 \leq i \leq m$, then $f(s_1, \ldots, s_m) \ R \ f(t_1, \ldots, t_m)$.

\begin{definition}
For any equivalence relation $\sim$ on $\Term^c(\Sigma)$ and any $1 \leq k \leq 4$, we define the following relation $\approx_{k, \sim}$ (on $\Term^c(\Sigma_k, \Term^c(\Sigma)/{\sim})$ if $k \in \{1, 2\}$, and on $\Term^c(\Sigma_k) = \Sigma_k$ if $k \in \{3, 4\}$): 

\begin{enumerate}

\item We define $\approx_{1, \sim}$ to be the smallest $\Sigma_1$-congruence relation on \linebreak $\Term^c(\Sigma_1, \Term^c(\Sigma)/{\sim})$ containing all pairs of the form $(\sigma(u), \sigma(v))$, where $u, v \in \Term(\Sigma_1, V)$ and $u = v$ is an axiom of $\T_1$ and $\sigma : V \to \Term^c(\Sigma_1, \Term^c(\Sigma)/{\sim})$ is a substitution. 

\item Similarly, we define $\approx_{2, \sim}$ to be the smallest $\Sigma_2$-congruence relation on \linebreak $\Term^c(\Sigma_2, \Term^c(\Sigma)/{\sim})$ containing all pairs of the form $(\sigma(u), \sigma(v))$, where $u, v \in \Term(\Sigma_2, V)$ and $u = v$ is an axiom of $\T_2$ and $\sigma : V \to \Term^c(\Sigma_2, \Term^c(\Sigma)/{\sim})$ is a substitution. 

\item We define $\approx_{3, \sim}$ to be just the reflexive relation on $\Term^c(\Sigma_3) = \Sigma_3$, i.e. for any $1 \leq j_1, j_2 \leq n$, we have $\y_{j_1} \approx_{3, \sim} \y_{j_2}$ iff $j_1 = j_2$.

\item Similarly, we define $\approx_{4, \sim}$ to be just the reflexive relation on $\Term^c(\Sigma_4) = \Sigma_4$. \qed 
\end{enumerate}
\end{definition}

\begin{definition}
We define the relation $\cong$ on $\Term^c(\Sigma)$ by recursion on the rank of terms. For any $u, v \in \Term^c(\Sigma)$, 

\begin{center}
$u \cong v$ iff $\exists 1 \leq k \leq 4$ such that $\mroot(u), \mroot(v) \in \Sigma_k$ and $[u]_{\cong}^k \approx_k [v]_{\cong}^k$,
\end{center}
\noindent where we have written $\approx_k$ instead of $\approx_{k, \cong}$. \qed
\end{definition}

\noindent Observe that $\cong$ is an equivalence relation, because each $\approx_k$ is one. Also observe that for any $u, v \in \Term^c(\Sigma_3 \cup \Sigma_4) = \Sigma_3 \cup \Sigma_4$, we have $u \cong v$ iff $u \equiv v$. 

Essentially, two closed terms over $\Sigma$ are isomorphic iff their root symbols belong to the same signature $\Sigma_k$ and when one abstracts their alien subterms (with respect to $\cong$), the resulting terms are congruent modulo $\approx_k \ = \ \approx_{k, \cong}$. 

\begin{definition}
Let $\mathcal{A} := \Term^c(\Sigma)/{\cong}$. We turn $\mathcal{A}$ into a $\Sigma$-algebra as follows. For any $1 \leq k \leq 4$ and any $f \in \Sigma_k$ of arity $i \geq 0$, we define $f^{\mathcal{A}} : \mathcal{A}^i \to \mathcal{A}$ as follows. Let $[u_1]_{\cong}, \ldots, [u_i]_{\cong} \in \mathcal{A} = \Term^c(\Sigma)/{\cong}$ be arbitrary.
\begin{itemize}

\item If $u$ is an alien subterm of $f(u_1, \ldots, u_i)$ such that $[f(u_1, \ldots, u_i)]_{\cong}^k \approx_k [u]_{\cong}$, then \[ f^{\mathcal{A}}([u_1]_{\cong}, \ldots, [u_i]_{\cong}) := [u]_{\cong}. \] We say that $f(u_1, \ldots, u_i)$ \emph{collapses} to the alien subterm $u$.

\item Otherwise, if there is no alien subterm of $f(u_1, \ldots, u_i)$ that satisfies the condition above, i.e. if there is no alien subterm to which $f(u_1, \ldots, u_i)$ collapses, then \[ f^{\mathcal{A}}([u_1]_{\cong}, \ldots, [u_i]_{\cong}) := [f(u_1, \ldots, u_i)]_{\cong}. \] \qed
\end{itemize}
\end{definition}

\noindent It is shown in \cite[Lemma 3.7]{Combining} that each $f^{\mathcal{A}}$ is well-defined.

\begin{definition}
For any closed term $s \in \Term^c(\Sigma)$, we can define the interpretation $s^{\mathcal{A}} \in \mathcal{A} = \Term^c(\Sigma)/{\cong}$ as follows:

\begin{enumerate}

\item If $s$ is a constant of $\Sigma$, then $s^{\mathcal{A}} := [s]_{\cong}$.

\item Otherwise, there is some function symbol $g$ of arity $m \geq 1$ of $\Sigma_1 \cup \Sigma_2$ such that $s \equiv g(s_1, \ldots, s_m)$ for some closed terms $s_1, \ldots, s_m \in \Term^c(\Sigma)$. Then 
\[ s^{\mathcal{A}} = g(s_1, \ldots, s_m)^{\mathcal{A}} := g^{\mathcal{A}}\left(s_1^{\mathcal{A}}, \ldots, s_m^{\mathcal{A}}\right). \] \qed
\end{enumerate}
\end{definition}

\noindent We may now prove the following simple lemma:

\begin{lemma}
\label{pureterminterpretation}
{\em If $s \in \Term^c(\Sigma)$ is pure, then $s^{\mathcal{A}} = [s]_{\cong} \in \mathcal{A}$.
}
\end{lemma}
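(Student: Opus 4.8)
The plan is to prove this by induction on the structure of the pure closed term $s$. (Note that $\rank(s) = 0$ for every pure term, so the recursion on rank used to define $\cong$ does not help here; what does the work is ordinary structural induction on $s$, whose proper subterms are again pure over the same signature.)

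For the base case, suppose $s$ is a constant of $\Sigma$. Then by clause (1) of the definition of the interpretation $(-)^{\mathcal{A}}$ we have $s^{\mathcal{A}} = [s]_{\cong}$ immediately, and we are done. This already disposes of every pure term whose root lies in $\Sigma_3 \cup \Sigma_4$, since those signatures consist only of constants; so in the inductive step we may assume $s$ is not a constant, i.e. $s \equiv g(s_1, \ldots, s_m)$ for some function symbol $g \in \Sigma_1 \cup \Sigma_2$ of arity $m \geq 1$ and closed terms $s_1, \ldots, s_m \in \Term^c(\Sigma)$. Because $s$ is pure, say $s \in \Term^c(\Sigma_k)$ with $k \in \{1,2\}$, we must have $g \in \Sigma_k$ and each $s_j \in \Term^c(\Sigma_k)$, so each $s_j$ is again pure. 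By the induction hypothesis $s_j^{\mathcal{A}} = [s_j]_{\cong}$ for each $1 \leq j \leq m$, whence clause (2) of the definition of $(-)^{\mathcal{A}}$ gives
\[ s^{\mathcal{A}} = g^{\mathcal{A}}\!\left(s_1^{\mathcal{A}}, \ldots, s_m^{\mathcal{A}}\right) = g^{\mathcal{A}}\!\left([s_1]_{\cong}, \ldots, [s_m]_{\cong}\right). \]
It then remains to compute the right-hand side from the definition of the $\Sigma$-algebra structure on $\mathcal{A}$. Since $g^{\mathcal{A}}$ is well-defined (by \cite[Lemma 3.7]{Combining}) we may evaluate it using the specific representatives $s_1, \ldots, s_m$ of the classes $[s_1]_{\cong}, \ldots, [s_m]_{\cong}$; the relevant term is then $g(s_1, \ldots, s_m) \equiv s$, which is \emph{pure} and therefore has no alien subterms at all. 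Consequently the first (``collapse'') clause in the definition of $g^{\mathcal{A}}$ cannot be triggered, and the ``otherwise'' clause yields $g^{\mathcal{A}}([s_1]_{\cong}, \ldots, [s_m]_{\cong}) = [g(s_1,\ldots,s_m)]_{\cong} = [s]_{\cong}$, completing the induction.

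The only point that requires genuine care is the invocation of well-definedness of $g^{\mathcal{A}}$: the recipe for $g^{\mathcal{A}}$ is phrased in terms of an arbitrary representative tuple, and for representatives $u_j \cong s_j$ other than the $s_j$ the term $g(u_1, \ldots, u_m)$ need not be pure and could in principle collapse, so the argument really does depend on \cite[Lemma 3.7]{Combining} to guarantee that the value obtained from the pure representatives $s_1, \ldots, s_m$ is the genuine value of $g^{\mathcal{A}}$. Beyond that, everything is a routine unwinding of the definitions of $[-]^k_{\cong}$, $(-)^{\mathcal{A}}$, and the algebra structure on $\mathcal{A}$, so I do not anticipate any serious obstacle.
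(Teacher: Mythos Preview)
Your proof is correct and follows essentially the same approach as the paper's: structural induction on $s$, with the key observation in the inductive step that the pure term $g(s_1,\ldots,s_m)$ has no alien subterms and hence cannot collapse. Your additional remark about needing well-definedness of $g^{\mathcal{A}}$ (so that one may compute using the particular representatives $s_1,\ldots,s_m$) is a point the paper leaves implicit, but it does not constitute a different route.
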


\begin{proof}
We prove this by induction on the structure of $s$. If $s$ is a constant, then by definition we have $s^{\mathcal{A}} := [s]_{\cong}$. Otherwise, there are some $k \in \{1, 2\}$ and some function symbol $g \in \Sigma_k$ of arity $m \geq 1$ and pure closed terms $s_1, \ldots, s_m \in \Term^c(\Sigma_k)$ such that $s \equiv g(s_1, \ldots, s_m)$. Then we have 
\begin{align*}
s^{\mathcal{A}}	&= g(s_1, \ldots, s_m)^{\mathcal{A}} \\
			&= g^{\mathcal{A}}\left(s_1^{\mathcal{A}}, \ldots, s_m^{\mathcal{A}}\right) \\
			&= g^{\mathcal{A}}\left([s_1]_{\cong}, \ldots, [s_m]_{\cong}\right) \\
			&= [g(s_1, \ldots, s_m)]_{\cong} \\
			&= [s]_{\cong},
\end{align*}
\noindent as desired. The third equality follows by the induction hypothesis, and the fourth equality follows because $g(s_1, \ldots, s_m)$ has no alien subterms (to which it could collapse), being pure. This completes the induction.
\end{proof}

\begin{definition}
Let $k \in \{1, 2\}$. If $C(y_1, \ldots, y_\ell) \in \Term(\Sigma_k, V)$ is a $\Sigma_k$-context with free variables $y_1, \ldots, y_\ell \in V$, then $C$ induces a function $C^{\mathcal{A}} : \mathcal{A}^\ell \to \mathcal{A}$ defined as follows:

\begin{enumerate}

\item If $C$ has no free variables, then $C \in \Term^c(\Sigma_k)$, and so $C^{\mathcal{A}} : \mathcal{A}^\ell \to \mathcal{A}$ is just the constant function on $C^{\mathcal{A}} \in \mathcal{A}$.

\item If $C \equiv y_j$ for some $1 \leq j \leq \ell$, then \[ C^{\mathcal{A}} = \pi_j^{\mathcal{A}} : \mathcal{A}^\ell \to \mathcal{A}, \] the $j^{\text{th}}$ projection function on $\mathcal{A}^\ell$. 

\item If $g \in \Sigma_k$ is a function symbol of arity $r \geq 1$ and $C \equiv g(C_1, \ldots, C_r)$ for some $\Sigma_k$-contexts $C_1, \ldots, C_r$ over the free variables $y_1, \ldots, y_\ell$, then for any $a_1, \ldots, a_\ell \in \mathcal{A}$ we set \[ C^{\mathcal{A}}(a_1, \ldots, a_\ell) := g^{\mathcal{A}}\left(C_1^{\mathcal{A}}(a_1, \ldots, a_\ell), \ldots, C_r^{\mathcal{A}}(a_1, \ldots, a_\ell)\right). \] \qed
\end{enumerate}
\end{definition}

\begin{remark}
It is then easy to prove by induction on the structure of contexts that if $C \in \Term(\Sigma_k, V)$ is a $\Sigma_k$-context with free variables $y_1, \ldots, y_\ell$ and $s_1, \ldots, s_\ell$ are any closed terms over $\Sigma$, then 
\[ C(s_1, \ldots, s_\ell)^{\mathcal{A}} = C^{\mathcal{A}}\left(s_1^{\mathcal{A}}, \ldots, s_\ell^{\mathcal{A}}\right) \in \mathcal{A}. \] \qed
\end{remark}

\begin{remark}
The discussion before \cite[Lemma 3.8]{Combining} also indicates that Definition 5 can be extended to contexts. More precisely, if $C(y_1, \ldots, y_\ell) \in \Term(\Sigma_k, V)$ is a proper $\Sigma_k$-context (for $k \in \{1, 2\}$) with variables $y_1, \ldots, y_\ell \in V$, then for any $[u_1]_{\cong}, \ldots, [u_\ell]_{\cong} \in \mathcal{A} = \Term^c(\Sigma)/{\cong}$, we have the following facts:
\begin{itemize}

\item If $u$ is an alien subterm of $C(u_1, \ldots, u_\ell)$ such that $[C(u_1, \ldots, u_\ell)]_{\cong}^k \approx_k [u]_{\cong}$, then $C^{\mathcal{A}}([u_1]_{\cong}, \ldots, [u_\ell]_{\cong}) = [u]_{\cong}$. We say that $C(u_1, \ldots, u_\ell)$ \emph{collapses} to the alien subterm $u$.  

\item Otherwise, if there is no alien subterm of $C(u_1, \ldots, u_\ell)$ that satisfies the condition above, i.e. if there is no alien subterm to which $C(u_1, \ldots, u_\ell)$ collapses, then $C^{\mathcal{A}}([u_1]_{\cong}, \ldots, [u_\ell]_{\cong}) = [C(u_1, \ldots, u_\ell)]_{\cong}$. \qed
\end{itemize}
\end{remark} 

\noindent We will now require the following definitions and lemmas. Let us write $\sim_{M_n}$ for the $\Sigma_1 + \Sigma_2$-congruence relation $\sim_{\T_1 + \T_2}$ on $\Term^c((\Sigma_1 + \Sigma_2)(\y_1, \ldots, \y_n))$, so that 
\[ s \sim_{M_n} t \ \Longleftrightarrow \ (\T_1 + \T_2)(\y_1, \ldots, \y_n) \vdash s = t \] for any $s, t \in \Term^c((\Sigma_1 + \Sigma_2)(\y_1, \ldots, \y_n))$. Let us also write $\sim_{M_n, \x}$ for the similarly defined congruence relation on $\Term^c((\Sigma_1 + \Sigma_2)(\x, \y_1, \ldots, \y_n))$, so that \[ s \sim_{M_n, 
\x} t \ \Longleftrightarrow \ (\T_1 + \T_2)(\x, \y_1, \ldots, \y_n) \vdash s = t \] for any $s, t \in \Term^c((\Sigma_1 + \Sigma_2)(\x, \y_1, \ldots, \y_n))$. It is a standard fact about equational logic that $\sim_{M_n}$ is the smallest $\Sigma_1 + \Sigma_2$-congruence relation on $\Term^c((\Sigma_1 + \Sigma_2)(\y_1, \ldots, \y_n)) = \Term^c(\Sigma_1 \cup \Sigma_2 \cup \Sigma_3)$ that contains all pairs of the form $(\sigma(u), \sigma(v))$, where $u, v \in \Term(\Sigma_1 \cup \Sigma_2, V)$ are such that $u = v$ is an axiom of $\T_1 + \T_2$ and $\sigma : V \to \Term^c(\Sigma_1 \cup \Sigma_2 \cup \Sigma_3)$ is a substitution (i.e. a function), and similarly for $\sim_{M_n, \x}$.     

We now extend the definition of $\sim_{M_n, \x}$ to include other indeterminates from $\Sigma_4$ as well. 

\begin{definition}
For the purposes of this definition, we write $\x \equiv \x_0$, so that $\Sigma_4 = \{\x_i : i \geq 0\}$. Let $J$ be an arbitrary subset of $\mathbb{N} = \{0, 1, 2, \ldots \}$. Then we define $\sim_{M_n, J}$ to be the smallest $\Sigma_1 + \Sigma_2$-congruence relation on $\Term^c(\Sigma_1 \cup \Sigma_2 \cup \Sigma_3 \cup \{\x_j : j \in J\})$ containing all pairs of the form $(\sigma(u), \sigma(v))$, where $u, v \in \Term(\Sigma_1 \cup \Sigma_2, V)$ are such that $u = v$ is an axiom of $\T_1 + \T_2$ and $\sigma : V \to \Term^c(\Sigma_1 \cup \Sigma_2 \cup \Sigma_3 \cup \{\x_j : j \in J\})$ is a substitution.

Notice that $\sim_{M_n, \x} \: = \: \sim_{M_n, \{0\}}$. Also, we write $\sim_{M_n, \infty}$ for $\sim_{M_n, \mathbb{N}}$. \qed
\end{definition}

\noindent Note that two closed terms $s, t \in \Term^c(\Sigma)$ are provably equal in the equational theory $\T_1 + \T_2$ over the signature $\Sigma$ (rather than just over the signature $\Sigma_1 + \Sigma_2$) exactly when $s \sim_{M_n, \infty} t$.

\begin{lemma}
\label{infinitecongruencelemma}
\
\begin{itemize}
{\em
\item Let $I, J \subseteq \mathbb{N}$. If $I \subseteq J$, then $\sim_{M_n, I} \: \subseteq \: \sim_{M_n, J}$. 

\item Let $s, t \in \Term^c(\Sigma)$, and let $J$ be a non-empty subset of $\mathbb{N}$ with the property that if $\x_j$ occurs in $s$ or $t$, then $j \in J$. Then \[ s \sim_{M_n, \infty} t \ \Longrightarrow \ s \sim_{M_n, J} t. \]
}
\end{itemize}

\end{lemma}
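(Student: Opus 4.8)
The plan is to handle the two bullets separately; the first is immediate from minimality, and the second rests on a simple ``folding'' homomorphism that collapses the unwanted indeterminates.

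For the first bullet, suppose $I \subseteq J$. Then the term set $\Term^c(\Sigma_1 \cup \Sigma_2 \cup \Sigma_3 \cup \{\x_j : j \in I\})$ is a sub-$(\Sigma_1 + \Sigma_2)$-algebra of $\Term^c(\Sigma_1 \cup \Sigma_2 \cup \Sigma_3 \cup \{\x_j : j \in J\})$. Restricting $\sim_{M_n, J}$ to the smaller term set yields a $\Sigma_1 + \Sigma_2$-congruence on that smaller set, and this restriction contains every generating pair $(\sigma(u), \sigma(v))$ of $\sim_{M_n, I}$: indeed, any substitution $\sigma$ into the smaller term set is also a substitution into the larger one, so $(\sigma(u), \sigma(v)) \in {\sim_{M_n, J}}$, and both components lie in the smaller set. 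By minimality of $\sim_{M_n, I}$ among $\Sigma_1 + \Sigma_2$-congruences on the smaller set containing those generating pairs, we get $\sim_{M_n, I} \; \subseteq \; \sim_{M_n, J}$.

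For the second bullet, since $J \neq \emptyset$ I would fix some $j_0 \in J$ and define a map $\rho$ on $\Term^c(\Sigma) = \Term^c(\Sigma_1 \cup \Sigma_2 \cup \Sigma_3 \cup \Sigma_4)$ recursively: $\rho$ fixes each $\y_i$, fixes each $\x_j$ with $j \in J$, sends each $\x_j$ with $j \notin J$ to $\x_{j_0}$, and commutes with every function symbol of $\Sigma_1 \cup \Sigma_2$. Then $\rho$ maps $\Term^c(\Sigma)$ into $\Term^c(\Sigma_1 \cup \Sigma_2 \cup \Sigma_3 \cup \{\x_j : j \in J\})$ and is a $(\Sigma_1 + \Sigma_2)$-homomorphism fixing the constants of $\Sigma_3$. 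I would then consider the relation $\approx$ on $\Term^c(\Sigma)$ given by $a \approx b$ iff $\rho(a) \sim_{M_n, J} \rho(b)$. It is a $\Sigma_1 + \Sigma_2$-congruence, since $\rho$ commutes with the operations of $\Sigma_1 \cup \Sigma_2$ and $\sim_{M_n, J}$ is a congruence; and it contains every generating pair of $\sim_{M_n, \infty}$, since for an axiom $u = v$ of $\T_1 + \T_2$ and a substitution $\sigma : V \to \Term^c(\Sigma)$ we have $\rho(\sigma(u)) = (\rho \circ \sigma)(u)$ and $\rho(\sigma(v)) = (\rho \circ \sigma)(v)$ (because $u, v$ are built only from symbols of $\Sigma_1 \cup \Sigma_2$ and variables), while $\rho \circ \sigma$ is a substitution $V \to \Term^c(\Sigma_1 \cup \Sigma_2 \cup \Sigma_3 \cup \{\x_j : j \in J\})$, so $(\rho(\sigma(u)), \rho(\sigma(v))) \in {\sim_{M_n, J}}$. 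By minimality of $\sim_{M_n, \infty}$ it follows that $\sim_{M_n, \infty} \; \subseteq \; {\approx}$, so $s \sim_{M_n, \infty} t$ yields $\rho(s) \sim_{M_n, J} \rho(t)$. Finally, since by hypothesis every $\x_j$ occurring in $s$ or $t$ has $j \in J$, the map $\rho$ fixes $s$ and $t$, i.e. $\rho(s) \equiv s$ and $\rho(t) \equiv t$, whence $s \sim_{M_n, J} t$, as required.

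The only point requiring care — and it is a mild one — is verifying that $\rho \circ \sigma$ is again an admissible substitution and that $\rho$ genuinely passes through the axiom terms $u, v$; this is precisely where it matters that the axioms of $\T_1 + \T_2$ live over $\Sigma_1 \cup \Sigma_2$ only and never mention the indeterminates of $\Sigma_4$, so that collapsing indeterminates cannot spoil an equational derivation. Beyond this bookkeeping I do not anticipate any genuine obstacle.
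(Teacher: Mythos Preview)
Your argument is correct. For the first bullet you use the same minimality argument the paper sketches. For the second bullet the paper does not actually give a proof: it simply cites an external reference (\cite[Lemma 5.1.30]{thesis}). Your self-contained argument via the retraction $\rho$ that collapses the unwanted indeterminates to a fixed $\x_{j_0}$ is the standard way to establish this kind of conservativity result and is almost certainly what the cited lemma does in spirit; it has the merit of being explicit and keeping the paper self-contained. The one bookkeeping point you flag---that $\rho$ passes through axiom terms because axioms of $\T_1 + \T_2$ mention only symbols of $\Sigma_1 \cup \Sigma_2$---is exactly the right observation, and nothing further is needed.
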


\begin{proof}
The proof of the first fact is immediate, because $\sim_{M_n, I}$ is defined to be the smallest congruence relation satisfying certain conditions, and if $I \subseteq J$, then $\sim_{M_n, J}$ satisfies those conditions. 

The second fact follows by \cite[Lemma 5.1.30]{thesis}. 
\end{proof}

\begin{definition}
For any closed term $u$ over $\Sigma$, we define the set $\Allaliens(u)$ of all alien subterms of $u$ of arbitrary depth. We define $\Allaliens(u)$ by recursion on $\rank(u)$ as follows: 

\begin{itemize}

\item If $\rank(u) = 0$, then $\Allaliens(u) := \emptyset$. 

\item If $\rank(u) > 0$, then $u \equiv C[u_1, \ldots, u_\ell]$ for some proper $\Sigma_k$-context $C$ (for some $k \in \{1, 2\}$) and alien subterms $u_1, \ldots, u_\ell$ over $\Sigma$. Then we set 
\[ \Allaliens(u) := \{u_1, \ldots, u_\ell\} \cup \bigcup_{1 \leq j \leq \ell} \Allaliens(u_j). \] \qed
\end{itemize}
\end{definition}

\noindent We will require the following lemma, which essentially states that any closed term over $\Sigma_1 \cup \Sigma_2 \cup \Sigma_3 \cup \{\x\}$ is congruent (modulo $\sim_{M_n, \x}$) to a term of minimal rank whose interpretation in $\mathcal{A}$ is just its $\cong$-class: 

\begin{lemma}
\label{canonicaltermlemma}
{\em
For any closed term $t$ over $\Sigma_1 \cup \Sigma_2 \cup \Sigma_3 \cup \{\x\}$, there is some closed term $t'$ over $\Sigma_1 \cup \Sigma_2 \cup \Sigma_3 \cup \{\x\}$ with the following properties:

\begin{enumerate}

\item $\rank(t') \leq \rank(t)$.

\item $t \sim_{M_n, \x} t'$.

\item $(t')^{\mathcal{A}} = [t']_{\cong}$, and if $u \in \Allaliens(t')$, then $u^{\mathcal{A}} = [u]_{\cong}$.

\item $t'$ is not isomorphic to any term over $\Sigma_1 \cup \Sigma_2 \cup \Sigma_3 \cup \{\x\}$ of strictly smaller rank.
\end{enumerate}
}
\end{lemma}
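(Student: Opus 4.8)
The plan is to build $t'$ from $t$ by a bottom-up recursion on $\rank(t)$. If $\rank(t)=0$ then $t$ is pure and we set $t':=t$; property~(3) then holds by Lemma~\ref{pureterminterpretation} and the other three properties are immediate. If $\rank(t)\geq 1$, write $t\equiv C[t_1,\ldots,t_\ell]$ with $C$ a proper $\Sigma_k$-context ($k\in\{1,2\}$) and $t_1,\ldots,t_\ell$ its distinct alien subterms, recursively let $t_j'$ be the term associated with $t_j$ (legitimate, since $\rank(t_j)<\rank(t)$), and put $\hat{t}:=C(t_1',\ldots,t_\ell')$. If $\hat{t}$ collapses at its root to one of its alien subterms $u$, set $t'$ to be the term associated with $u$ (legitimate, since $\rank(u)<\rank(\hat{t})\leq\rank(t)$, the last inequality because substituting terms of no greater rank into a $\Sigma_k$-context cannot raise the rank); otherwise set $t':=\hat{t}$. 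This recursion is well-founded because every recursive call is on a term of strictly smaller rank, it stays within $\Term^c(\Sigma_1\cup\Sigma_2\cup\Sigma_3\cup\{\x\})$, and it visibly gives $\rank(t')\leq\rank(t)$, i.e.\ property~(1).

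Property~(3) is proved by induction on $\rank(t)$, the base case being Lemma~\ref{pureterminterpretation}. In the inductive step, if $t'$ arose from a collapse then it is the term associated with the lower-rank term $u$ and we are done by the induction hypothesis. If instead $t'=\hat{t}=C(t_1',\ldots,t_\ell')$ with no collapse, then the two Remarks above on interpreting and on collapsing contexts, together with the induction hypothesis $(t_j')^{\mathcal{A}}=[t_j']_{\cong}$ for each $j$, give $\hat{t}^{\mathcal{A}}=C^{\mathcal{A}}\!\left([t_1']_{\cong},\ldots,[t_\ell']_{\cong}\right)=[\hat{t}]_{\cong}$, the final equality being exactly the no-collapse clause of the second Remark; and every member of $\Allaliens(\hat{t})$ lies in $\{t_1',\ldots,t_\ell'\}\cup\bigcup_j\Allaliens(t_j')$ and so interprets to its own $\cong$-class by the induction hypothesis. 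Property~(4) then follows from~(3): by construction $t'$ and all of its iterated alien subterms interpret in $\mathcal{A}$ to their own $\cong$-classes, equivalently none of them collapses, and a closed term over $\Sigma$ with this property has minimal rank within its $\cong$-class (part of the combination-theoretic analysis of \cite{Combining}); since isomorphism of closed terms over $\Sigma$ means $\cong$-equivalence, this is exactly property~(4).

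Property~(2) is where the real work lies, and I would again argue by induction on $\rank(t)$. Since $\sim_{M_n,\x}$ is a $\Sigma_1+\Sigma_2$-congruence, replacing the aliens $t_j$ of $t$ by the $\sim_{M_n,\x}$-equivalent terms $t_j'$ (induction hypothesis) yields $t\sim_{M_n,\x}\hat{t}$, so it remains to see that a root collapse preserves $\sim_{M_n,\x}$: if $v\equiv D[w_1,\ldots,w_p]$ is a closed term over $\Sigma_1\cup\Sigma_2\cup\Sigma_3\cup\{\x\}$ with $D$ a proper $\Sigma_k$-context and $[v]_{\cong}^k\approx_k[u]_{\cong}$ for one of its aliens $u$, then $v\sim_{M_n,\x}u$. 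Here $[v]_{\cong}^k=D\!\left([w_1]_{\cong},\ldots,[w_p]_{\cong}\right)$, and since $\approx_k$ is the smallest $\Sigma_k$-congruence containing the instances of the axioms of $\T_k$, a derivation witnessing $[v]_{\cong}^k\approx_k[u]_{\cong}$ treats the occurring $\cong$-classes as rigid constants; substituting for each such class a representing closed term over $\Sigma_1\cup\Sigma_2\cup\Sigma_3\cup\{\x\}$ (and choosing $u$ itself to represent $[u]_{\cong}$) converts it into a $\T_k$-, hence $\T_1+\T_2$-, derivation which — using that $\cong$-equivalent closed terms over $\Sigma_1\cup\Sigma_2\cup\Sigma_3\cup\{\x\}$ are $\sim_{M_n,\x}$-equivalent (itself provable by induction on rank via the same substitution technique, or extractable from \cite{Combining}) — yields $v\sim_{M_n,\infty}u$; finally the second part of Lemma~\ref{infinitecongruencelemma}, applied with $J=\{0\}$ since $v$ and $u$ contain no indeterminate of $\Sigma_4$ other than $\x=\x_0$, upgrades this to $v\sim_{M_n,\x}u$. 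Chaining these observations gives $t\sim_{M_n,\x}t'$.

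The step I expect to be the main obstacle is the lifting argument just sketched: the abstracted term $[v]_{\cong}^k$ may carry the same $\cong$-class at several argument positions even though the underlying alien subterms are syntactically distinct, so recovering $v$ from a pure-$\Sigma_k$ derivation genuinely needs the auxiliary fact that $\cong$-equivalent terms over $\Sigma_1\cup\Sigma_2\cup\Sigma_3\cup\{\x\}$ are $\sim_{M_n,\x}$-equivalent, and arranging this fact at the correct level of the rank induction so as to avoid circularity is the delicate point. A secondary, more bookkeeping-flavoured difficulty is tracking exactly which subterms remain alien once a $\Sigma_k$-rooted canonical form $t_j'$ has been substituted into the $\Sigma_k$-context $C$, which is needed to identify the collapsed-to alien correctly and to compute $\Allaliens(\hat{t})$ and $\rank(\hat{t})$.
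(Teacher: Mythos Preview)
Your inductive scheme for building $t'$ and your verification of properties (1), (2), and (3) are essentially the same as the paper's, including the use of Lemma~\ref{infinitecongruencelemma} to pass from $\sim_{M_n,\infty}$ to $\sim_{M_n,\x}$ (the paper streamlines your ``lifting'' step by quoting \cite[Lemmas 3.5 and 3.8]{Combining} directly rather than re-deriving them).

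The genuine gap is your derivation of property~(4). You assert that a closed term all of whose iterated alien subterms interpret in $\mathcal{A}$ to their own $\cong$-classes (i.e.\ none of them collapses) must have minimal rank in its $\cong$-class. This is false. Take $\Sigma_1$ containing a unary $g$ and a constant $c$ with the $\T_1$-axiom $g(y)=c$, and $\Sigma_2$ containing a unary $h$ with no axioms. Then $t := g(h(\x))$ has rank~$2$, and your recursion returns $t' = t$ unchanged: $h(\x)$ does not collapse to $\x$, and $g(h(\x))$ does not collapse to its alien $h(\x)$ either, since $[g(h(\x))]^1_{\cong} = g([h(\x)]_{\cong}) \approx_1 c$ but $c \not\approx_1 [h(\x)]_{\cong}$. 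Hence $t$ satisfies~(3). Yet $g([h(\x)]_{\cong}) \approx_1 c = [c]^1_{\cong}$ shows $t \cong c$, and $c$ is a term over $\Sigma_1 \cup \Sigma_2 \cup \Sigma_3 \cup \{\x\}$ of rank~$0$, so $t$ violates~(4). The point is that $\cong$-equivalence to a lower-rank term can arise from an equation internal to $\T_k$ that rewrites the outer $\Sigma_k$-context to a term not mentioning the aliens at all, without any collapse to an alien occurring.

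The paper repairs exactly this: in the no-collapse branch, after verifying (1)--(3) for $C(s_1',\ldots,s_\ell')$, it explicitly checks whether~(4) holds; if it fails, it picks a term $u$ over $\Sigma_1 \cup \Sigma_2 \cup \Sigma_3 \cup \{\x\}$ of minimal rank with $u \cong C(s_1',\ldots,s_\ell')$, applies the induction hypothesis to $u$ (legitimate since $\rank(u) < \rank(t)$), and takes the resulting $u'$ as $t'$. Property~(2) survives this extra step via \cite[Lemma 3.5]{Combining} and Lemma~\ref{infinitecongruencelemma}. Your construction needs this additional branch.
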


\begin{proof}
We prove the result by induction on $\rank(t)$. For the base case, suppose that $\rank(t) = 0$, so that $t$ is pure. We show that we can take $t' := t$. Conditions 1 and 2 are clearly satisfied. By Lemma \ref{pureterminterpretation}, since $t$ is pure, we have $t^{\mathcal{A}} = [t]_{\cong}$, satisfying the first part of Condition 3. Since $t$ is pure, $t$ has no alien subterms, so the second part of Condition 3 is vacuously satisfied. Since $\rank(t) = 0$, Condition 4 is obviously satisfied. \par

Now suppose that $\rank(t) = m + 1$ for some $m \geq 0$. Then there are some $k \in \{1, 2\}$ and some proper $\Sigma_k$-context $C(y_1, \ldots, y_\ell) \in \Term(\Sigma_k, V)$ such that $t \equiv C[s_1, \ldots, s_\ell]$ for some closed terms $s_1, \ldots, s_\ell$ over $\Sigma_1 \cup \Sigma_2 \cup \Sigma_3 \cup \{\x\}$ with $\mroot(s_j) \notin \Sigma_k$ for each $1 \leq j \leq \ell$. Since $\rank(t) = m + 1$, it follows that $\rank(s_j) \leq m$ for each $1 \leq j \leq \ell$, so by the induction hypothesis, for each $1 \leq j \leq \ell$ there is a closed term ${s_j}'$ over $\Sigma_1 \cup \Sigma_2 \cup \Sigma_3 \cup \{\x\}$ that satisfies Conditions 1 through 4 (with $s_j$ in place of $t$). \par

Now consider the closed term $C\left({s_1}', \ldots, {s_\ell}'\right)$. Then we have $C(s_1, \ldots, s_\ell) \sim_{M_n, \x} C({s_1}', \ldots, {s_\ell}')$, since $\sim_{M_n, \x}$ is a $\Sigma_1 + \Sigma_2$-congruence and $s_j \sim_{M_n, \x} {s_j}'$ for each $1 \leq j \leq \ell$ by Condition 2. Also, we have 
\[ C({s_1}', \ldots, {s_\ell}')^{\mathcal{A}} = C^{\mathcal{A}}\left(({s_1}')^{\mathcal{A}}, \ldots, ({s_\ell}')^{\mathcal{A}}\right) = C^{\mathcal{A}}\left([{s_1}']_{\cong}, \ldots, [{s_\ell}']_{\cong}\right), \]
\noindent with the first equality justified by Remark 1 and the second by the first part of Condition 3 for each $s_j$. Now we must compute $C^{\mathcal{A}}([{s_1}']_{\cong}, \ldots, [{s_\ell}']_{\cong})$, by considering whether $C({s_1}', \ldots, {s_\ell}')$ collapses to any alien subterm (cf. Remark 2). 
\begin{itemize}
\item Suppose first that $C({s_1}', \ldots, {s_\ell}')$ does \emph{not} collapse to any alien subterm. Then we have 
\[ C({s_1}', \ldots, {s_\ell}')^{\mathcal{A}} = C^{\mathcal{A}}([{s_1}']_{\cong}, \ldots, [{s_\ell}']_{\cong}) = [C({s_1}', \ldots, {s_\ell}')]_{\cong}. \]

\noindent Now we show that $C({s_1}', \ldots, {s_\ell}')$ satisfies Conditions 1 to 3. To show that Condition 1 is satisfied, we must show that $\rank(C({s_1}', \ldots, {s_\ell}')) \leq \rank(C[s_1, \ldots, s_\ell])$. If ${s_j}'$ is a pure $\Sigma_k$-term for each $1 \leq j \leq \ell$, then $\rank(C({s_1}', \ldots, {s_\ell}')) = 0$ and the claim holds. Otherwise, there is at least one $1 \leq j \leq \ell$ such that ${s_j}'$ is \emph{not} a pure $\Sigma_k$-term, so write $C({s_1}', \ldots, {s_\ell}') \equiv C'[u_1, \ldots, u_r]$ for some proper $\Sigma_k$-context $C'$ and closed terms $u_1, \ldots, u_r$ over $\Sigma_1 \cup \Sigma_2 \cup \Sigma_3 \cup \{\x\}$ such that $\mroot(u_q) \notin \Sigma_k$ for every $1 \leq q \leq r$. Furthermore, for every $1 \leq q \leq r$, there is some $1 \leq j \leq \ell$ such that either $u_q \equiv {s_j}'$ (if $\mroot({s_j}') \notin \Sigma_k$), or $u_q$ is an alien subterm of ${s_j}'$ (if $\mroot({s_j}') \in \Sigma_k$). So for every $1 \leq q \leq r$, there is some $1 \leq j \leq \ell$ such that $\rank(u_q) \leq \rank({s_j}')$, and hence $\bmax\{\rank(u_1), \ldots, \rank(u_r)\} \leq \bmax\{\rank({s_1}'), \ldots, \rank({s_\ell}')\}$. Thus, we have
\begin{align*}
\rank(C({s_1}', \ldots, {s_\ell}'))	&= \rank(C'[u_1, \ldots, u_r]) \\
						&= 1 + \bmax\{\rank(u_1), \ldots, \rank(u_r)\} \\
						&\leq 1 + \bmax\{\rank({s_1}'), \ldots, \rank({s_\ell}')\} \\
						&\leq 1 + \bmax\{\rank({s_1}), \ldots, \rank({s_\ell})\} \\
						&= \rank(C[s_1, \ldots, s_\ell]) \\
						&= \rank(t),
\end{align*}
\noindent as desired, with the fourth inequality justified by the induction hypothesis. We already noted above that Condition 2 is satisfied for $C({s_1}', \ldots, {s_l}')$, and the first part of Condition 3 was shown above, after we supposed that $C({s_1}', \ldots, {s_l}')$ does not collapse to any alien subterm. Now we must show that the second part of Condition 3 holds for $C({s_1}', \ldots, {s_l}')$. If $C({s_1}', \ldots, {s_l}')$ happens to have no alien subterms, then this part of Condition 3 vacuously holds. Otherwise, as above, let $C({s_1}', \ldots, {s_l}') \equiv C'[u_1, \ldots, u_r]$, so that $u_1, \ldots, u_r$ are the alien subterms of $C({s_1}', \ldots, {s_l}')$. For every $1 \leq q \leq r$, we must show that $u_q^{\mathcal{A}} = [u_q]_{\cong}$, and that $v^{\mathcal{A}} = [v]_{\cong}$ for every $v \in \Allaliens\left(u_q\right)$. If $1 \leq q \leq r$, then there is some $1 \leq j \leq \ell$ such that either $u_q \equiv {s_j}'$, or $u_q$ is an alien subterm of ${s_j}'$. In the first case, the desired result follows by Condition 3 for ${s_j}'$. Now suppose that the second case holds. Since Condition 3 holds for ${s_j}'$ and $u_q \in \Allaliens({s_j}')$, it follows that $u_q^{\mathcal{A}} = [u_q]_{\cong}$. Also, we have $\Allaliens(u_q) \subseteq \Allaliens({s_j}')$, so the second part of Condition 3 holds for $u_q$ because it holds for ${s_j}'$. So both parts of Condition 3 hold for $C({s_1}', \ldots, {s_\ell}')$. \par

Now, if $C({s_1}', \ldots, {s_\ell}')$ already satisfies Condition 4, then we may take $t' := C({s_1}', \ldots, {s_\ell}')$. Otherwise, let $u$ be a term of minimal rank over $\Sigma_1 \cup \Sigma_2 \cup \Sigma_3 \cup \{\x\}$ such that $\rank(u) < \rank(C({s_1}', \ldots, {s_\ell}'))$ and $C({s_1}', \ldots, {s_\ell}') \cong u$. Since $\rank(u) < \rank(C({s_1}', \ldots, {s_\ell}')) \leq \rank(t)$ (as shown above), we can appeal to the induction hypothesis to obtain a closed term $u'$ over $\Sigma_1 \cup \Sigma_2 \cup \Sigma_3 \cup \{\x\}$ that satisfies Conditions 1 to 4 (with $u$ in place of $t$). Now we show that we can take $t' := u'$. First, Condition 1 is satisfied because we have $\rank(u') \leq \rank(u) < \rank(t)$. Since $C({s_1}', \ldots, {s_\ell}') \cong u$, it follows by \cite[Lemma 3.5]{Combining} that $C({s_1}', \ldots, {s_\ell}') \sim_{M_n, \infty} u$, which implies $C({s_1}', \ldots, {s_\ell}') \sim_{M_n, \x} u$ by Lemma \ref{infinitecongruencelemma} (because $C({s_1}', \ldots, {s_\ell}')$ and $u$ are closed terms over $\Sigma_1 \cup \Sigma_2 \cup \Sigma_3 \cup \{\x\}$). So Condition 2 is satisfied, since we have $t \sim_{M_n, \x} C({s_1}', \ldots, {s_\ell}') \sim_{M_n, \x} u \sim_{M_n, \x} u'$. Conditions 3 and 4 are true for $u'$, by definition of $u'$. So we may take $t' := u'$. 

\item Now suppose that $C({s_1}', \ldots, {s_\ell}')$ \emph{does} collapse to an alien subterm $u$. Then there is some $1 \leq j \leq \ell$ such that either $u \equiv {s_j}'$ (if $\mroot({s_j}') \notin \Sigma_k$) or $u$ is an alien subterm of ${s_j}'$ (if $\mroot({s_j}') \in \Sigma_k$). And we have \[ C({s_1}', \ldots, {s_\ell}')^\mathcal{A} = [u]_{\cong}. \] Suppose first that $u \equiv {s_j}'$. Then we show that $t' := u \equiv {s_j}'$ satisfies Conditions 1 to 4. First, since ${s_j}'$ is a closed term over $\Sigma_1 \cup \Sigma_2 \cup \Sigma_3 \cup \{\x\}$, so is $t'$. Then, we have
\[ \rank(t') = \rank(u) = \rank({s_j}') \leq \rank(s_j) < \rank(t), \] so that Condition 1 is satisfied. To show that $t \sim_{M_n, \x} u$, note that since $C({s_1}', \ldots, {s_\ell}')^\mathcal{A} = [u]_{\cong}$, it follows by \cite[Lemma 3.8]{Combining} that $C({s_1}', \ldots, {s_\ell}') \sim_{M_n, \infty} u$, and hence that $C({s_1}', \ldots, {s_\ell}') \sim_{M_n, \x} u$ by Lemma \ref{infinitecongruencelemma} (because $C({s_1}', \ldots, {s_\ell}')$ and $u$ are closed terms over $\Sigma_1 \cup \Sigma_2 \cup \Sigma_3 \cup \{\x\}$). So then we have $t \sim_{M_n, \x} C({s_1}', \ldots, {s_\ell}') \sim_{M_n, \x} u$, as desired. Finally, $t' \equiv {s_j}'$ satisfies Conditions 3 and 4 because ${s_j}'$ does, by the induction hypothesis. \par

Now suppose that $u$ is an alien subterm of ${s_j}'$. Then $\rank(u) < \rank({s_j}') \leq \rank(s_j) < \rank(t)$. So by the induction hypothesis, there is a closed term $u'$ over $\Sigma_1 \cup \Sigma_2 \cup \Sigma_3 \cup \{\x\}$ that satisfies Conditions 1 to 4 (with $u$ in place of $t$). Now we show that $t' := u'$ satisfies Conditions 1 to 4. First, Condition 1 is satisfied, because we have
\[ \rank(t') = \rank(u') \leq \rank(u) < \rank({s_j}') \leq \rank(s_j) < \rank(t). \] Next, Condition 2 is satisfied, because we have \[ t \sim_{M_n, \x} C({s_1}', \ldots, {s_\ell}') \sim_{M_n, \x} u \sim_{M_n, \x} u'. \] And we know that $u'$ satisfies Conditions 3 and 4, by definition of $u'$. So $u'$ satisfies Conditions 1 to 4. \par
\end{itemize}
This completes the induction on $\rank(t)$, and hence completes the proof of the lemma.
\end{proof}

\noindent We now require the following technical lemma, which essentially states that the relation $\approx_k$ (for $k \in \{1, 2\}$) is `closed' under certain kinds of substitutions:

\begin{lemma}
\label{closedundersubs}
{\em
Let $k \in \{1, 2\}$. Then $\approx_k$ is closed under well-defined substitutions on $\Term^c(\Sigma)/{\cong}$ in the following two senses. First, if $\sigma : \Term^c(\Sigma)/{\cong} \to \Term(\Sigma_k, V)$ is a well-defined function, then for any $u, v \in \Term^c(\Sigma_k, \Term^c(\Sigma)/{\cong})$, we have \[ u \approx_k v \Longrightarrow \T_k \vdash \sigma^*(u) = \sigma^*(v), \]

\noindent where $\sigma^*$ is the extension of $\sigma$ from $\Term^c(\Sigma)/{\cong}$ to $\Term^c(\Sigma_k, \Term^c(\Sigma)/{\cong})$. 

Moreover, if $\sigma : \Term^c(\Sigma)/{\cong} \to \Term^c(\Sigma_k, \Term^c(\Sigma)/{\cong})$ is a well-defined function, then for any $u, v \in \Term^c(\Sigma_k, \Term^c(\Sigma)/{\cong})$, we have \[ u \approx_k v \Longrightarrow \sigma^*(u) \approx_k \sigma^*(v). \]
}
\end{lemma}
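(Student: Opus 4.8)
The plan is to exploit the fact that, for $k \in \{1,2\}$, the relation $\approx_k \: = \: \approx_{k, \cong}$ is by definition the \emph{smallest} $\Sigma_k$-congruence on $\Term^c(\Sigma_k, \Term^c(\Sigma)/{\cong})$ containing the generating pairs $(\tau(u), \tau(v))$, where $u = v$ ranges over the axioms of $\T_k$ and $\tau : V \to \Term^c(\Sigma_k, \Term^c(\Sigma)/{\cong})$ ranges over all substitutions (I rename the generating substitution from the definition to $\tau$ to avoid a clash with the $\sigma$ appearing in the lemma). Hence, to prove that every pair in $\approx_k$ enjoys some property, it suffices to check that the set $R$ of pairs with that property is a $\Sigma_k$-congruence and contains all of the generating pairs; minimality then yields $\approx_k \subseteq R$, which is exactly the desired implication.

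For the first claim I would take $R := \{(u,v) \in \Term^c(\Sigma_k, \Term^c(\Sigma)/{\cong})^2 : \T_k \vdash \sigma^*(u) = \sigma^*(v)\}$. That $R$ is an equivalence relation is immediate from the corresponding properties of provable equality in $\T_k$ (reflexivity, symmetry, transitivity). Compatibility of $R$ with an $m$-ary $f \in \Sigma_k$ follows because $\sigma^*$ is, by construction, homomorphic on the $\Sigma_k$-function symbols, so $\sigma^*(f(u_1, \ldots, u_m)) \equiv f(\sigma^*(u_1), \ldots, \sigma^*(u_m))$, and equational deduction is itself a congruence. For the generating pairs, the key observation is the substitution-composition identity $\sigma^*(\tau(w)) \equiv (\sigma^* \circ \tau)(w)$, valid for every $w \in \Term(\Sigma_k, V)$, where $\sigma^* \circ \tau : V \to \Term(\Sigma_k, V)$ is again a substitution; since $\T_k$-provability is closed under substitution and $u = v$ is an axiom of $\T_k$, this gives $\T_k \vdash (\sigma^* \circ \tau)(u) = (\sigma^* \circ \tau)(v)$, i.e. $(\tau(u), \tau(v)) \in R$.

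For the second claim I would argue in exactly the same way with $R := \{(u,v) : \sigma^*(u) \approx_k \sigma^*(v)\}$. Here $R$ is automatically an equivalence relation and compatible with the $\Sigma_k$-function symbols, since $\approx_k$ already has these properties and $\sigma^*$ is $\Sigma_k$-homomorphic. The only remaining point is the generating pairs: using the same identity $\sigma^*(\tau(w)) \equiv (\sigma^* \circ \tau)(w)$, now with $\sigma^* \circ \tau : V \to \Term^c(\Sigma_k, \Term^c(\Sigma)/{\cong})$, we see that $(\sigma^*(\tau(u)), \sigma^*(\tau(v))) = ((\sigma^* \circ \tau)(u), (\sigma^* \circ \tau)(v))$ is itself one of the defining generating pairs of $\approx_k$ (for the axiom $u = v$ of $\T_k$ and the substitution $\sigma^* \circ \tau$), hence lies in $\approx_k$. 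Therefore $(\tau(u), \tau(v)) \in R$, and minimality gives $\approx_k \subseteq R$.

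I expect the genuinely routine part to be verifying that $R$ is an equivalence relation and $\Sigma_k$-compatible. The only real (though still minor) obstacle should be the bookkeeping around the various extension operators $(-)^*$: pinning down that $\sigma^*$ has the correct codomain ($\Term(\Sigma_k, V)$ in the first case, $\Term^c(\Sigma_k, \Term^c(\Sigma)/{\cong})$ in the second), that it acts as the identity on $\Sigma_k$-function symbols, and that it composes with substitutions $\tau : V \to \cdots$ to give the substitution-composition identity used above. These are standard facts about free $\Sigma_k$-algebras, provable by a straightforward structural induction on terms of $\Term(\Sigma_k, V)$, and I would state them explicitly before running the two arguments.
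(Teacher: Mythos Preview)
Your proposal is correct and follows essentially the same approach as the paper: define the target relation $R$, verify it is a $\Sigma_k$-congruence containing the generating pairs $(\tau(w),\tau(w'))$ via the substitution-composition identity $\sigma^*(\tau(w)) \equiv (\sigma^*\circ\tau)(w)$, and invoke minimality of $\approx_k$. The paper treats the second claim by saying the argument is ``extremely similar,'' whereas you spell it out; but the underlying strategy is identical.
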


\begin{proof}
By definition, since $k \in \{1, 2\}$, we know that $\approx_k$ is the smallest $\Sigma_k$-congruence relation on $\Term^c(\Sigma_k, \Term^c(\Sigma)/{\cong})$ containing all pairs of the form $(\tau(w), \tau(w'))$, where $w, w' \in \Term(\Sigma_k, V)$ and $w = w'$ is an axiom of $\T_k$ and $\tau : V \to \Term^c(\Sigma_k, \Term^c(\Sigma)/{\cong})$ is a substitution. To show that $\approx_k$ is closed under well-defined substitutions in the first sense, let $\sigma : \Term^c(\Sigma)/{\cong} \to \Term(\Sigma_k, V)$ be a well-defined function. \par
First let $w, w' \in \Term(\Sigma_k, V)$ be such that $w = w'$ is an axiom of $\T_k$, and let $\tau : V \to \Term^c(\Sigma_k, \Term^c(\Sigma)/{\cong})$ be a substitution. We must show that $\T_k \vdash \sigma^*(\tau(w)) = \sigma^*(\tau(w'))$. But $\sigma^* \circ \tau : V \to \Term(\Sigma_k, V)$ is a well-defined substitution, so because $w = w'$ is an axiom of $\T_k$, it follows that $\T_k \vdash (\sigma^* \circ \tau)(w) = (\sigma^* \circ \tau)(w')$ and hence $\T_k \vdash \sigma^*(\tau(w)) = \sigma^*(\tau(w'))$, as desired. \par
It remains to show that \[ \left\{(u, v) \in \Term^c(\Sigma_k, \Term^c(\Sigma)/{\cong})^2 : \T_k \vdash \sigma^*(u) = \sigma^*(v) \right\} \] is a $\Sigma_k$-congruence relation on $\Term^c(\Sigma_k, \Term^c(\Sigma)/{\cong})$. It is obviously an equivalence relation. To show that it is a $\Sigma_k$-congruence, let $g$ be a function symbol of $\Sigma_k$ of arity $m \geq 1$, and let $u_1, \ldots, u_m, v_1, \ldots, v_m \in \Term^c(\Sigma_k, \Term^c(\Sigma)/{\cong})$ with $\T_k \vdash \sigma^*(u_i) = \sigma^*(v_i)$ for each $1 \leq i \leq m$. We must show that $\T_k \vdash \sigma^*(g(u_1, \ldots, u_m)) = \sigma^*(g(v_1, \ldots, v_m))$. But this follows by the assumption, because provable equality in $\T_k$ is a $\Sigma_k$-congruence, and because $\sigma^*(g(u_1, \ldots, u_m)) \equiv g(\sigma^*(u_1), \ldots, \sigma^*(u_m))$ and $\sigma^*(g(v_1, \ldots, v_m)) \equiv g(\sigma^*(v_1), \ldots, \sigma^*(v_m))$. This completes the argument. The proof that $\approx_k$ is also closed under well-defined substitutions in the second sense is extremely similar.  
\end{proof}

\noindent The following lemma roughly states that if two closed terms with root symbols in the same signature $\Sigma_k$ ($k \in \{1, 2\}$) are congruent modulo $\approx_k$ after abstracting alien subterms, then we can substitute (distinct) \emph{variables} for these alien subterms, and the resulting terms will then be provably equal in $\T_k$.

\begin{lemma}
\label{complexlemma}
{\em
Let $k \in \{1, 2\}$, and let $C(y_1, \ldots, y_\ell), D(z_1, \ldots, z_r) \in \Term(\Sigma_k, V)$ be $\Sigma_k$-contexts, where $C$ or $D$ could just be a single variable (but $C$ and $D$ must each contain at least one variable). Let $s_1, \ldots, s_\ell, t_1, \ldots, t_r$ be closed terms over $\Sigma$ such that for all $1 \leq j \leq \ell$ and all $1 \leq q \leq r$, we have $\mroot(s_j) \notin \Sigma_k$ and $\mroot(t_q) \notin \Sigma_k$. Suppose that the terms $s_1, \ldots, s_\ell, t_1, \ldots, t_r$ can be partitioned into $p \geq 1$ $\cong$-classes. For any $1 \leq j \leq \ell$, let $1 \leq p_j \leq p$ be such that $s_j$ belongs to the $p_j^{\text{th}}$ $\cong$-class, and for any $1 \leq q \leq r$, let $1 \leq p_{\ell+q} \leq p$ be such that $t_q$ belongs to the $p_{\ell+q}^{\text{th}}$ $\cong$-class. 

Let $w_1, \ldots, w_p$ be new variables. Then
\[ \left[C[s_1, \ldots, s_\ell]\right]_{\cong}^k \approx_k \left[D[t_1, \ldots, t_r]\right]_{\cong}^k \]
\[ \Longrightarrow \T_k \vdash C(w_{p_1}/y_1, \ldots, w_{p_\ell}/y_\ell) = D(w_{p_{\ell+1}}/z_1, \ldots, w_{p_{\ell+r}}/z_r). \]
}
\end{lemma}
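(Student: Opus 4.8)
The plan is to deduce the implication from Lemma \ref{closedundersubs} by collapsing each $\cong$-class of alien subterms to its own variable. First I would unwind the abstraction operation. Since $\mroot(s_j) \notin \Sigma_k$ for every $1 \leq j \leq \ell$, when $C$ is a proper $\Sigma_k$-context the terms $s_1, \ldots, s_\ell$ are exactly the alien subterms of $C[s_1, \ldots, s_\ell]$ (every maximal non-$\Sigma_k$ subterm of $C[s_1, \ldots, s_\ell]$ lies in one of the substitution slots), so by the definition of $[\cdot]_\cong^k$ we get $[C[s_1, \ldots, s_\ell]]_\cong^k = C([s_1]_\cong, \ldots, [s_\ell]_\cong)$, the term of $\Term^c(\Sigma_k, \Term^c(\Sigma)/{\cong})$ obtained by substituting the constant $[s_j]_\cong$ for $y_j$ in $C$; and in the degenerate case $C \equiv y_1$ we have $C[s_1] \equiv s_1$ with $\mroot(s_1) \notin \Sigma_k$, so $[C[s_1]]_\cong^k = [s_1]_\cong$, which is again $C([s_1]_\cong)$ under the same reading. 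The same holds for $D$, so the hypothesis becomes $C([s_1]_\cong, \ldots, [s_\ell]_\cong) \approx_k D([t_1]_\cong, \ldots, [t_r]_\cong)$.

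Next I would construct the substitution. By hypothesis the closed terms $s_1, \ldots, s_\ell, t_1, \ldots, t_r$ fall into exactly $p$ distinct $\cong$-classes; fix an enumeration $\mathcal{C}_1, \ldots, \mathcal{C}_p$ of these classes with $[s_j]_\cong = \mathcal{C}_{p_j}$ and $[t_q]_\cong = \mathcal{C}_{p_{\ell+q}}$, and define $\sigma : \Term^c(\Sigma)/{\cong} \to V \subseteq \Term(\Sigma_k, V)$ by $\sigma(\mathcal{C}_i) := w_i$ for $1 \leq i \leq p$ and $\sigma(x) := w_1$ for every other $\cong$-class $x$. Because the $\mathcal{C}_i$ are genuinely distinct --- this is the only place the ``$p$ classes'' hypothesis is used --- $\sigma$ is a well-defined function, so Lemma \ref{closedundersubs}, applied in its first sense to the relation above, gives
\[ \T_k \vdash \sigma^*(C([s_1]_\cong, \ldots, [s_\ell]_\cong)) = \sigma^*(D([t_1]_\cong, \ldots, [t_r]_\cong)), \]
where $\sigma^*$ fixes the $\Sigma_k$-function symbols and acts as $\sigma$ on the constants coming from $\Term^c(\Sigma)/{\cong}$.

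Finally, since $\sigma^*$ commutes with substitution of constants into a $\Sigma_k$-context, the left-hand side $\sigma^*(C([s_1]_\cong, \ldots, [s_\ell]_\cong))$ is the term obtained from $C$ by replacing each $y_j$ with $\sigma([s_j]_\cong) = w_{p_j}$, i.e. it is $C(w_{p_1}/y_1, \ldots, w_{p_\ell}/y_\ell)$, and likewise the right-hand side equals $D(w_{p_{\ell+1}}/z_1, \ldots, w_{p_{\ell+r}}/z_r)$; substituting these into the displayed provability yields exactly the conclusion.

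I do not expect a genuine obstacle here: morally the proof is just ``replace each $\cong$-class of aliens by a separate variable and quote Lemma \ref{closedundersubs}.'' The points requiring care are the degenerate cases where $C$ or $D$ is a single variable, the bookkeeping identity $[C[s_1, \ldots, s_\ell]]_\cong^k = C([s_1]_\cong, \ldots, [s_\ell]_\cong)$, the well-definedness of $\sigma$ (which needs exactly $p$ distinct classes so the assignment $\mathcal{C}_i \mapsto w_i$ is consistent across repeated occurrences), and checking that $\sigma^*$ genuinely converts the abstracted contexts back into $C$ and $D$ with $w_1, \ldots, w_p$ plugged in for the appropriate variables.
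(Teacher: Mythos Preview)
Your proposal is correct and follows essentially the same route as the paper: unwind the abstraction to get $C([s_1]_\cong,\ldots,[s_\ell]_\cong)\approx_k D([t_1]_\cong,\ldots,[t_r]_\cong)$, define a substitution $\sigma:\Term^c(\Sigma)/{\cong}\to\Term(\Sigma_k,V)$ sending each of the $p$ relevant $\cong$-classes to its own variable and every other class to a default value, then apply Lemma~\ref{closedundersubs} and read off the conclusion. The only cosmetic difference is that the paper sends the irrelevant $\cong$-classes to a fresh variable $z$ distinct from $w_1,\ldots,w_p$, whereas you reuse $w_1$; either choice works since those classes never occur among the $[s_j]_\cong$ or $[t_q]_\cong$.
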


\begin{proof}
Assume that $[C[s_1, \ldots, s_\ell]]_{\cong}^k \approx_k [D[t_1, \ldots, t_r]]_{\cong}^k$. Then we have 
\[ C([s_1]_{\cong}, \ldots, [s_\ell]_{\cong}) \approx_k D([t_1]_{\cong}, \ldots, [t_r]_{\cong}), \] since $C$ and $D$ are $\Sigma_k$-contexts. Now define the substitution $\sigma : \Term^c(\Sigma)/{\cong} \to \Term(\Sigma_k, V)$ as follows. Let $u \in \Term^c(\Sigma)$ be arbitrary. If $u$ is isomorphic to some term $u'$ in the list $s_1, \ldots, s_\ell, t_1, \ldots, t_r$, then say that $u'$ belongs to the $p_0^{\text{th}}$ $\cong$-class, for some unique $1 \leq p_0 \leq p$, and set $\sigma([u]_{\cong}) := w_{p_0}$. Otherwise, set $\sigma([u]_{\cong}) := z$ for some fixed variable $z \in V$ distinct from $w_1, \ldots, w_p$. \par

Now we show that $\sigma$ is well-defined. So let $u, v \in \Term^c(\Sigma)$ with $u \cong v$. We must show that $\sigma([u]_{\cong}) = \sigma([v]_{\cong})$. If neither $u$ nor $v$ are isomorphic to any term in the list $s_1, \ldots, s_\ell, t_1, \ldots, t_r$, then we have $\sigma([u]_{\cong}) = z = \sigma([v]_{\cong})$, as desired. Otherwise, suppose that at least one of $u$ or $v$ is isomorphic to a term in this list. Since $u \cong v$, it then follows that both terms are isomorphic to terms in the list, which must belong to the same $\cong$-class of terms in the list, say the $p_0^{\text{th}}$ $\cong$-class, for some unique $1 \leq p_0 \leq p$. Then we have $\sigma([u]_{\cong}) = w_{p_0} = \sigma([v]_{\cong})$, as desired. \par

By Lemma \ref{closedundersubs} and the assumption, it then follows that 
\[ \T_k \vdash \sigma^*(C([s_1]_{\cong}, \ldots, [s_\ell]_{\cong})) = \sigma^*(D([t_1]_{\cong}, \ldots, [t_r]_{\cong})), \] which implies that
\[ \T_k \vdash C(\sigma^*([s_1]_{\cong}), \ldots, \sigma^*([s_\ell]_{\cong})) = D(\sigma^*([t_1]_{\cong}), \ldots, \sigma^*([t_r]_{\cong})), \] and hence that 
\[ \T_k \vdash C(w_{p_1}/y_1, \ldots, w_{p_\ell}/y_\ell) = D(w_{p_{\ell+1}}/z_1, \ldots, w_{p_{\ell+r}}/z_r), \] as desired.
\end{proof}

\noindent We will require the following two technical lemmas. 

\begin{lemma}
\label{indeterminatelemma}
{\em
For any $i \geq 1$ and closed terms $u, v$ over $\Sigma \setminus \{\x_i\}$, \[ u[\x_i/\x] \cong v[\x_i/\x] \Longleftrightarrow u \cong v. \] 
}
\end{lemma}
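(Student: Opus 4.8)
The plan is to prove the equivalence by a simultaneous induction on $\rank(u) + \rank(v)$ (or equivalently on the ranks, using the recursive definition of $\cong$), exploiting the fact that $\x$ and $\x_i$ are both constants in $\Sigma_4$ that are interchangeable under the renaming $\x_i \leftrightarrow \x$, since the axioms of $\T_1 + \T_2$ involve no symbols from $\Sigma_4$ at all. The key observation to establish first is that the substitution $[\x_i/\x]$ interacts well with all the machinery: it preserves $\mroot$ up to the renaming (i.e. $\mroot(u[\x_i/\x]) \in \Sigma_k$ iff $\mroot(u) \in \Sigma_k$ for $k \in \{1,2\}$, and $\mroot(u[\x_i/\x]) = \x_i$ iff $\mroot(u) = \x$), it preserves rank ($\rank(u[\x_i/\x]) = \rank(u)$), and it commutes with the decomposition into context and alien subterms: if $u \equiv C[u_1, \ldots, u_m]$ with $C$ a proper $\Sigma_k$-context, then $u[\x_i/\x] \equiv C[u_1[\x_i/\x], \ldots, u_m[\x_i/\x]]$ (the context $C$ is unchanged since it lies over $\Sigma_k$ with $k \in \{1,2\}$, and the alien subterms are exactly the $u_j[\x_i/\x]$, still having roots outside $\Sigma_k$).

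Given these structural facts, I would argue as follows. Since $u, v$ do not contain $\x_i$, the map $t \mapsto t[\x_i/\x]$ is injective on terms over $\Sigma \setminus \{\x_i\}$ and its inverse on the image is $t \mapsto t[\x/\x_i]$. Unwinding Definition 7, $u[\x_i/\x] \cong v[\x_i/\x]$ holds iff there is $k$ with $\mroot(u[\x_i/\x]), \mroot(v[\x_i/\x]) \in \Sigma_k$ and $[u[\x_i/\x]]^k_{\cong} \approx_k [v[\x_i/\x]]^k_{\cong}$. Using the structural facts above, the first condition is equivalent to $\mroot(u), \mroot(v) \in \Sigma_k$ (where for $k \in \{3,4\}$ one uses the remark that on $\Sigma_3 \cup \Sigma_4$, $\cong$ is just syntactic identity, handling the pure-constant cases directly). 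For $k \in \{1,2\}$, the abstracted terms satisfy $[u[\x_i/\x]]^k_{\cong} \equiv C([u_1[\x_i/\x]]_{\cong}, \ldots, [u_m[\x_i/\x]]_{\cong})$, and by the induction hypothesis applied to the alien subterms (which have strictly smaller rank), $[u_j[\x_i/\x]]_{\cong} = [v_{j'}[\x_i/\x]]_{\cong}$ in $\mathcal{A}$ iff $[u_j]_{\cong} = [v_{j'}]_{\cong}$; hence the relabelling of $\cong$-classes of alien subterms induced by $[\x_i/\x]$ is a bijection, and it conjugates $[u]^k_{\cong} \approx_k [v]^k_{\cong}$ to $[u[\x_i/\x]]^k_{\cong} \approx_k [v[\x_i/\x]]^k_{\cong}$. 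Here one invokes Lemma \ref{closedundersubs} (the second sense) with the well-defined substitution on $\mathcal{A}$ renaming the relevant constant-classes, to transport $\approx_k$-equivalence across the bijection in both directions.

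I expect the main obstacle to be the bookkeeping around the equivalence $[u_j[\x_i/\x]]_{\cong} = [v_{j'}[\x_i/\x]]_{\cong} \Leftrightarrow [u_j]_{\cong} = [v_{j'}]_{\cong}$, i.e. making rigorous that $[\x_i/\x]$ descends to a well-defined, injective map on $\mathcal{A} = \Term^c(\Sigma)/{\cong}$ when restricted to classes of terms avoiding $\x_i$ — this is precisely where the induction hypothesis is consumed, and one must be careful that the two terms being compared both avoid $\x_i$ so that the inverse substitution $[\x/\x_i]$ is available. The rest is a careful but routine unwinding of the definitions of $[-]^k_{\cong}$, $\approx_k$, and $\cong$, together with the earlier lemmas; the symmetry of the desired biconditional means the same argument runs in both directions once the bijection on alien-subterm classes is in hand.
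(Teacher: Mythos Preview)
Your proposal is correct and follows essentially the same route as the paper's proof: induct on rank, note that $[\x_i/\x]$ preserves root signature and the context/alien decomposition, and for $k \in \{1,2\}$ transport $\approx_k$ in both directions via Lemma~\ref{closedundersubs} (second sense) using a substitution on $\cong$-classes sending $[s_j]_\cong \mapsto [s_j(\x_i)]_\cong$ (respectively the inverse), whose well-definedness is precisely the induction hypothesis applied to the alien subterms. One small technical point: induct on $\max\{\rank(u),\rank(v)\}$ rather than the sum, since to check well-definedness of your substitution you need the hypothesis for pairs of aliens \emph{both} coming from $u$, and their rank-sum need not be below $\rank(u)+\rank(v)$ when $\rank(v)$ is small; the paper uses $\max$ and also explicitly separates out the mixed case where one of $u,v$ is pure in $\Sigma_k$ and the other is not.
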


\begin{proof}
Let $u, v \in \Term^c(\Sigma \setminus \{\x_i\})$. We prove by induction on \[ \bmax\{\rank(u), \rank(v)\} \geq 0 \] that the desired claim holds. First suppose that $\rank(u) = \rank(v) = 0$, so that $u$ and $v$ are pure, and assume that $u \cong v$. Now, either $u \equiv \x$ or $u$ does not contain $\x$. If $u \equiv \x$, then since $u \cong v$, it follows that $v \equiv \x$ as well. Then we have $u[\x_i/\x] \equiv \x_i \cong \x_i \equiv v[\x_i/\x]$, as desired. If $u$ does not contain $\x$, then since $u \cong v$, it follows that $v \not\equiv \x$, so that $v$ does not contain $\x$ either (since $v$ is pure). But then we have $u[\x_i/\x] \equiv u \cong v \equiv v[\x_i/\x]$, as desired. \par Conversely, suppose that $u[\x_i/\x] \cong v[\x_i/\x]$. Again, either $u \equiv \x$ or $u$ does not contain $\x$. In the first case, we obtain $u[\x_i/\x] \equiv \x_i \cong v[\x_i/\x]$. It then follows that $v[\x_i/\x] \equiv \x_i$, so that either $v \equiv \x$ or $v \equiv \x_i$. In the first case, since $u \equiv \x$ as well, we obtain $u \cong v$, as desired. And the second case is impossible, because $v \in \Term^c(\Sigma \setminus \{\x_i\})$. Now suppose that $u$ does not contain $\x$. Then we have $u \equiv u[\x_i/\x] \cong v[\x_i/\x]$. Now, either $v \equiv \x$ or $v$ does not contain $\x$. In the first case, we would have $u \cong v[\x_i/\x] \equiv \x_i$, which is impossible, since $u \not\equiv \x_i$, since $u \in \Term^c(\Sigma \setminus \{\x_i\})$. In the second case, we have $u \cong v[\x_i/\x] \equiv v$, as desired. This completes the proof of the base case. \par

Now suppose that $\bmax\{\rank(u), \rank(v)\} = m > 0$, and assume that the result holds for all $u', v' \in \Term^c(\Sigma \setminus \{\x_i\})$ with $\bmax\{\rank(u'), \rank(v'))\} < m$. Since $\bmax\{\rank(u), \rank(v)\} > 0$, either $\rank(u) > 0$ or $\rank(v) > 0$. Suppose without loss of generality that $\rank(u) > 0$, and let $u \equiv C[s_1, \ldots, s_\ell]$ for some $\Sigma_k$-context $C \in \Term(\Sigma_k, V)$ (for some $k \in \{1, 2\}$) and closed terms $s_1, \ldots, s_\ell \in \Term^c(\Sigma \setminus \{\x_i\})$ with $\mroot(s_j) \notin \Sigma_k$ for each $1 \leq j \leq \ell$. Also, we have $\rank(s_j) < \rank(u) \leq m$ for all $1 \leq j \leq \ell$. Suppose first that $\rank(v) = 0$, and assume that $u \cong v$. Since $v$ is pure, either $v \equiv \x$ or $v$ does not contain $\x$. In the first case, since $u \cong v$, it follows that $u \equiv \x$ as well, which is impossible, since $\rank(u) > 0$.

So $v$ cannot contain $\x$, in which case $v[\x_i/\x] \equiv v$, and we must show $u[\x_i/\x] \cong v$. Since $\mroot(u) \in \Sigma_k$ and $u \cong v$, it follows that $\mroot(v) \in \Sigma_k$, so that $v \in \Term^c(\Sigma_k)$, since $v$ is pure. Since $C[s_1, \ldots, s_\ell] \equiv u \cong v$, this means that $[C[s_1, \ldots, s_\ell]]_{\cong}^k \approx_k [v]_{\cong}^k = v$, i.e. 
\[ C([s_1]_{\cong}, \ldots, [s_\ell]_{\cong}) \approx_k v. \]
\noindent Now we define the following substitution \[ \sigma : \Term^c(\Sigma)/{\cong} \to \Term^c(\Sigma_k, \Term^c(\Sigma)/{\cong}). \] Let $w \in \Term^c(\Sigma)$ be arbitrary. If $w \cong s_j$ for some $1 \leq j \leq \ell$, then we set $\sigma([w]_{\cong}) := [s_j(\x_i)]_{\cong}$. Otherwise, we set $\sigma([w]_{\cong}) := [\x]_{\cong}$. To show that $\sigma$ is well-defined, it suffices to show that if $s_{j_1} \cong s_{j_2}$ for some $1 \leq j_1, j_2 \leq \ell$, then $s_{j_1}(\x_i) \cong s_{j_2}(\x_i)$. But this follows by the induction hypothesis, since $\rank(s_{j_1}), \rank(s_{j_2}) < m$. \par

Then by Lemma \ref{closedundersubs}, we have \[ \sigma^*(C([s_1]_{\cong}, \ldots, [s_\ell]_{\cong})) \approx_k \sigma^*(v), \]
\noindent i.e. 
\[ C(\sigma([s_1]_{\cong}), \ldots, \sigma([s_\ell]_{\cong})) \approx_k v, \]
\noindent i.e.
\[ C([s_1(\x_i)]_{\cong}, \ldots, [s_\ell(\x_i)]_{\cong}) \approx_k v, \]
\noindent i.e.
\[ [C[s_1(\x_i), \ldots, s_\ell(\x_i)]]_{\cong}^k \approx_k [v]_{\cong}^k, \]
\noindent so that $C[s_1, \ldots, s_\ell](\x_i) \equiv u[\x_i/\x] \cong v$, as desired. So if $u \cong v$ and $\rank(v) = 0$, then $u[\x_i/\x] \cong v[\x_i/\x] \equiv v$. If we instead assume $u[\x_i/\x] \cong v$ in order to show $u \cong v$, then we reverse the reasoning just used, except that we define the substitution $\sigma : \Term^c(\Sigma)/{\cong} \to \Term^c(\Sigma_k, \Term^c(\Sigma)/{\cong})$ so that if $w \in \Term^c(\Sigma)$ is isomorphic to $s_j(\x_i)$ for some $1 \leq j \leq \ell$, then $\sigma([w]_{\cong}) := [s_j]_{\cong}$, and $\sigma([w]_{\cong}) := [\x]_{\cong}$ otherwise (and $\sigma$ will be well-defined by the induction hypothesis, as above). \par

Now suppose that $\rank(v) > 0$ as well, and let $v \equiv D[t_1, \ldots, t_r]$ for some $\Sigma_{k'}$-context $D \in \Term(\Sigma_{k'}, V)$ (for some $k' \in \{1, 2\}$) and closed terms $t_1, \ldots, t_r \in \Term^c(\Sigma \setminus \{\x_i\})$ with $\mroot(t_q) \notin \Sigma_{k'}$ for all $1 \leq q \leq r$. Then $\rank(t_q) < \rank(v) \leq m$ for all $1 \leq q \leq r$. Suppose first that $u \cong v$, so that $k = k'$. Then we have
\[ [C[s_1, \ldots, s_\ell]]_{\cong}^k = [u]_{\cong}^k \approx_k [v]_{\cong}^k = [D[t_1, \ldots, t_r]]_{\cong}^k, \]
\noindent i.e. \[ C([s_1]_{\cong}, \ldots, [s_\ell]_{\cong}) \approx_k D([t_1]_{\cong}, \ldots, [t_r]_{\cong}). \]
Now, the terms $s_1, \ldots, s_\ell, t_1, \ldots, t_r$ all have rank $< m$, so by the induction hypothesis, we have the following facts:
\begin{itemize}

\item For all $1 \leq j, j' \leq \ell$, $s_j \cong s_{j'}$ iff $s_j(\x_i) \cong s_{j'}(\x_i)$.

\item For all $1 \leq q, q' \leq r$, $t_q \cong t_{q'}$ iff $t_q(\x_i) \cong t_{q'}(\x_i)$.

\item For all $1 \leq j \leq \ell$ and $1 \leq q \leq r$, $s_j \cong t_q$ iff $s_j(\x_i) \cong t_q(\x_i)$.

\end{itemize}

\noindent We now define the following substitution \[ \sigma : \Term^c(\Sigma)/{\cong} \to \Term^c(\Sigma_k, \Term^c(\Sigma)/{\cong}). \] Let $w \in \Term^c(\Sigma)$ be arbitrary. If $w \cong s_j$ for some $1 \leq j \leq \ell$, then we set $\sigma([w]_{\cong}) = [s_j(\x_i)]_{\cong}$. If $w \cong t_q$ for some $1 \leq q \leq r$, then we set $\sigma([w]_{\cong}) = [t_q(\x_i)]_{\cong}$. Otherwise, we set $\sigma([w]_{\cong}) = [\x]_{\cong}$. To show that $\sigma$ is well-defined, it suffices to show that if $s_j \cong s_{j'}$ for some $1 \leq j, j' \leq \ell$, then $s_j(\x_i) \cong s_{j'}(\x_i)$, that if $t_q \cong t_{q'}$ for some $1 \leq q, q' \leq r$, then $t_q(\x_i) \cong t_{q'}(\x_i)$, and that if $s_j \cong t_q$ for some $1 \leq j \leq \ell$ and some $1 \leq q \leq r$, then $s_j(\x_i) \cong t_q(\x_i)$. But these claims follow from the above facts. \par

Then by Lemma \ref{closedundersubs}, we have 
\[ \sigma^*(C([s_1]_{\cong}, \ldots, [s_\ell]_{\cong})) \approx_k \sigma^*(D([t_1]_{\cong}, \ldots, [t_r]_{\cong})), \]
\noindent i.e.
\[ C(\sigma([s_1]_{\cong}), \ldots, \sigma([s_\ell]_{\cong})) \approx_k D(\sigma([t_1]_{\cong}), \ldots, \sigma([t_r]_{\cong})), \]
\noindent i.e.
\[ C([s_1(\x_i)]_{\cong}, \ldots, [s_\ell(\x_i)]_{\cong}) \approx_k D([t_1(\x_i)]_{\cong}, \ldots, [t_r(\x_i)]_{\cong}), \]
\noindent i.e. 
\[ C[s_1(\x_i), \ldots, s_\ell(\x_i)]_{\cong}^k \approx_k D[t_1(\x_i), \ldots, t_r(\x_i)]_{\cong}^k, \]
\noindent i.e.
\[ u[\x_i/\x] \equiv C[s_1, \ldots, s_\ell](\x_i) \cong D[t_1, \ldots, t_r](\x_i) \equiv v[\x_i/\x], \]
\noindent as desired. 

If we instead assume $u[\x_i/\x] \cong v[\x_i/\x]$ in order to show $u \cong v$, then we essentially reverse the above reasoning, by modifying the substitution defined above and showing that it is well-defined by using the other directions of the equivalences that follow by the induction hypothesis. This completes the proof.
\end{proof}

\begin{lemma}
\label{collapsinglemma}
{\em
Let $t \in \Term^c(\Sigma_1 \cup \Sigma_2 \cup \Sigma_3 \cup \{\x\})$ with $t^{\mathcal{A}} = [t]_{\cong}$ and $u^{\mathcal{A}} = [u]_{\cong}$ for every $u \in \Allaliens(t)$. Then for any $i \geq 1$, writing $t(\x_i)$ for $t[\x_i/\x]$, we have $t(\x_i)^{\mathcal{A}} = [t(\x_i)]_{\cong}$ and $v^{\mathcal{A}} = [v]_{\cong}$ for every $v \in \Allaliens(t(\x_i))$.
} 
\end{lemma}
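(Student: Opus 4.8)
The plan is to prove the lemma by induction on $\rank(t)$. For the base case $\rank(t) = 0$, the term $t$ is pure, so either $t \equiv \x$ and hence $t(\x_i) \equiv \x_i$ is pure, or $\x$ does not occur in $t$ and hence $t(\x_i) \equiv t$ is pure; either way, Lemma \ref{pureterminterpretation} gives $t(\x_i)^{\mathcal{A}} = [t(\x_i)]_{\cong}$, while $\Allaliens(t(\x_i)) = \emptyset$ since $\rank(t(\x_i)) = 0$, so there is nothing further to check.

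For the inductive step, suppose $\rank(t) > 0$, so that $t \equiv C[s_1, \ldots, s_\ell]$ for some $k \in \{1,2\}$, some proper $\Sigma_k$-context $C(y_1, \ldots, y_\ell)$, and distinct closed terms $s_1, \ldots, s_\ell$ over $\Sigma_1 \cup \Sigma_2 \cup \Sigma_3 \cup \{\x\}$ with $\mroot(s_j) \notin \Sigma_k$ and $\rank(s_j) < \rank(t)$ for each $j$. Since $C$ contains no occurrence of $\x$, we have $t(\x_i) \equiv C[s_1(\x_i), \ldots, s_\ell(\x_i)]$; as none of the $s_j$ contains $\x_i$, the terms $s_j(\x_i)$ are again distinct, and since $\mroot(s_j(\x_i)) \notin \Sigma_k$ this is the canonical alien decomposition of $t(\x_i)$, so that $\Allaliens(t(\x_i)) = \{s_1(\x_i), \ldots, s_\ell(\x_i)\} \cup \bigcup_{j} \Allaliens(s_j(\x_i))$. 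From $\Allaliens(t) = \{s_1, \ldots, s_\ell\} \cup \bigcup_j \Allaliens(s_j)$ and the hypotheses on $t$, each $s_j$ satisfies the hypotheses of the lemma (in place of $t$), so the induction hypothesis applies to each $s_j$ and yields $s_j(\x_i)^{\mathcal{A}} = [s_j(\x_i)]_{\cong}$ together with $v^{\mathcal{A}} = [v]_{\cong}$ for every $v \in \Allaliens(s_j(\x_i))$. This already gives the ``alien'' part of the conclusion for $t(\x_i)$, so it remains only to establish $t(\x_i)^{\mathcal{A}} = [t(\x_i)]_{\cong}$.

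For this, I would combine Remark 1 with the induction hypothesis to get $t(\x_i)^{\mathcal{A}} = C^{\mathcal{A}}([s_1(\x_i)]_{\cong}, \ldots, [s_\ell(\x_i)]_{\cong})$, so by Remark 2 it suffices to show that $t(\x_i)$ collapses to none of its aliens, i.e. that $C([s_1(\x_i)]_{\cong}, \ldots, [s_\ell(\x_i)]_{\cong}) \approx_k [s_j(\x_i)]_{\cong}$ fails for every $1 \leq j \leq \ell$. Assume it holds for some $j$; I would then define $\sigma : \Term^c(\Sigma)/{\cong} \to \Term^c(\Sigma_k, \Term^c(\Sigma)/{\cong})$ by $\sigma([w]_{\cong}) := [s_{j'}]_{\cong}$ whenever $w \cong s_{j'}(\x_i)$ for some $j'$, and $\sigma([w]_{\cong}) := [\x]_{\cong}$ otherwise, which is well-defined precisely by Lemma \ref{indeterminatelemma} (since $s_{j_1}(\x_i) \cong s_{j_2}(\x_i)$ implies $s_{j_1} \cong s_{j_2}$). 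Applying $\sigma^*$ and the second part of Lemma \ref{closedundersubs} gives $C([s_1]_{\cong}, \ldots, [s_\ell]_{\cong}) \approx_k [s_j]_{\cong}$, i.e. $[t]_{\cong}^k \approx_k [s_j]_{\cong}$; by Remark 2 and Remark 1 (using $s_j^{\mathcal{A}} = [s_j]_{\cong}$) this forces $t^{\mathcal{A}} = [s_j]_{\cong}$, whence $[t]_{\cong} = t^{\mathcal{A}} = [s_j]_{\cong}$, i.e. $t \cong s_j$. But $\mroot(t) \in \Sigma_k$ while $\mroot(s_j) \notin \Sigma_k$, so $t \not\cong s_j$ by disjointness of the signatures --- a contradiction. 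Hence $t(\x_i)$ does not collapse, so $t(\x_i)^{\mathcal{A}} = [t(\x_i)]_{\cong}$, closing the induction.

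The main obstacle is the final (collapsing) step: one must rule out that replacing $\x$ by the fresh indeterminate $\x_i$ creates a new collapse absent for $t$ itself. The crux is the substitution $\sigma$ that reverses $\x \mapsto \x_i$ at the level of $\cong$-classes; its well-definedness is exactly Lemma \ref{indeterminatelemma}, and combined with the substitution-closure of $\approx_k$ (Lemma \ref{closedundersubs}) and the disjointness of $\Sigma_1, \Sigma_2$ it reduces a hypothetical collapse of $t(\x_i)$ to a collapse of $t$, which the hypothesis $t^{\mathcal{A}} = [t]_{\cong}$ excludes.
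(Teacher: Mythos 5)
Your proof is correct and follows essentially the same route as the paper's: induction on $\rank(t)$, reduction to showing that $C[s_1(\x_i), \ldots, s_\ell(\x_i)]$ does not collapse to any alien subterm, and the same key substitution $\sigma$ reversing $\x \mapsto \x_i$ at the level of $\cong$-classes, whose well-definedness rests on Lemma \ref{indeterminatelemma} and which is pushed through $\approx_k$ via Lemma \ref{closedundersubs}. The only (harmless) difference is that you make the final contradiction explicit by deriving $t \cong s_j$ and invoking disjointness of the root signatures, where the paper appeals directly to the fact that $[t]_{\cong}^k \approx_k [s_j]_{\cong}$ would mean $t$ collapses, contradicting $t^{\mathcal{A}} = [t]_{\cong}$.
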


\begin{proof}
We prove this by induction on $\rank(t)$. If $\rank(t) = 0$, then either $t$ does not contain $\x$ or $t \equiv \x$. If $t$ does not contain $\x$, then $t(\x_i) \equiv t$ and the result holds by the assumption on $t$ (or simply because $t$ has rank $0$, using Lemma \ref{pureterminterpretation}). If $t \equiv \x$, then $t(\x_i) \equiv \x_i$ and the result follows by Lemma \ref{pureterminterpretation}, since $\x_i$ is pure. \par

Now suppose that $\rank(t) = m + 1$ for some $m \geq 0$, and let $t \equiv C[s_1, \ldots, s_\ell]$ for some proper $\Sigma_k$-context $C$ (for some $k \in \{1, 2\}$) and some closed terms $s_1, \ldots, s_\ell \in \Term^c(\Sigma_1 \cup \Sigma_2 \cup \Sigma_3 \cup \{\x\})$ of rank $\leq m$, with $\mroot(s_j) \notin \Sigma_k$ for all $1 \leq j \leq \ell$. Suppose that $t^{\mathcal{A}} = [t]_{\cong}$ and $u^{\mathcal{A}} = [u]_{\cong}$ for every $u \in \Allaliens(t)$. So $C[s_1, \ldots, s_\ell]^{\mathcal{A}} = [C[s_1, \ldots, s_\ell]]_{\cong}$ and for every $1 \leq j \leq \ell$ we have $s_j^{\mathcal{A}} = [s_j]_{\cong}$ and $u_j^{\mathcal{A}} = [u_j]_{\cong}$ for every $u_j \in \Allaliens(s_j)$. Since $C[s_1, \ldots, s_\ell]^{\mathcal{A}} = [C[s_1, \ldots, s_\ell]]_{\cong}$, it follows that $C[s_1, \ldots, s_\ell]$ does not collapse to any of its alien subterms $s_1, \ldots, s_\ell$. Also, by the induction hypothesis, for every $1 \leq j \leq \ell$ it follows that $s_j(\x_i)^{\mathcal{A}} = [s_j(\x_i)]_{\cong}$ and $v^{\mathcal{A}} = [v]_{\cong}$ for every $v \in \Allaliens(s_j(\x_i))$. \par

Now we want to show that $t(\x_i)^{\mathcal{A}} = [t(\x_i)]_{\cong}$ and $v^{\mathcal{A}} = [v]_{\cong}$ for every $v \in \Allaliens(t(\x_i))$. First, note that $t(\x_i) \equiv C[s_1, \ldots, s_\ell](\x_i) \equiv C[s_1(\x_i), \ldots, s_\ell(\x_i)]$, because $C$ is a $\Sigma_k$-context (where $k \in \{1, 2\}$) and hence does not contain $\x$ (which belongs to $\Sigma_4$). \par

So, to show that $t(\x_i)^{\mathcal{A}} = [t(\x_i)]_{\cong}$, we must show that \[ C[s_1(\x_i), \ldots, s_\ell(\x_i)]^{\mathcal{A}} = [C[s_1(\x_i), \ldots, s_\ell(\x_i)]]_{\cong}. \] 

\noindent Well, we have 
\[ C[s_1(\x_i), \ldots, s_\ell(\x_i)]^{\mathcal{A}} = C^{\mathcal{A}}\left(s_1(\x_i)^{\mathcal{A}}, \ldots, s_\ell(\x_i)^{\mathcal{A}}\right) = C^{\mathcal{A}}([s_1(\x_i)]_{\cong}, \ldots, [s_\ell(\x_i)]_{\cong}), \]
\noindent with the second equality justified by the induction hypothesis. So we must now show that
\[ C^{\mathcal{A}}([s_1(\x_i)]_{\cong}, \ldots, [s_\ell(\x_i)]_{\cong}) = [C[s_1(\x_i), \ldots, s_\ell(\x_i)]]_{\cong}, \]
\noindent and to show this, we must show that $C[s_1(\x_i), \ldots, s_\ell(\x_i)]$ does not collapse to any of its alien subterms $s_1(\x_i), \ldots, s_\ell(\x_i)$. Suppose towards a contradiction that there is some $1 \leq j \leq \ell$ such that $C[s_1(\x_i), \ldots, s_\ell(\x_i)]$ collapses to $s_j(\x_i)$, which means that 
\[ [C[s_1(\x_i), \ldots, s_\ell(\x_i)]]_{\cong}^k \approx_k [s_j(\x_i)]_{\cong}, \] 
\noindent i.e.
\[ C([s_1(\x_i)]_{\cong}, \ldots, [s_j(\x_i)]_{\cong}, \ldots, [s_\ell(\x_i)]_{\cong}) \approx_k [s_j(\x_i)]_{\cong}. \]
\noindent Now we define the following substitution \[\sigma : \Term^c(\Sigma)/{\cong} \to \Term^c(\Sigma_k, \Term^c(\Sigma)/{\cong}). \] Let $w \in \Term^c(\Sigma)$ be arbitrary. If $w \cong s_j(\x_i)$ for some $1 \leq j \leq \ell$, then we set $\sigma([w]_{\cong}) = [s_j]_{\cong}$. Otherwise, we set $\sigma([w]_{\cong}) = [\x]_{\cong}$ (it does not really matter what the output of $\sigma$ is in this case; we just chose $[\x]_{\cong}$ to be concrete). To show that $\sigma$ is well-defined, it suffices to show that if $s_{j_1}(\x_i) \cong s_{j_2}(\x_i)$ for some $1 \leq j_1, j_2 \leq \ell$, then $s_{j_1} \cong s_{j_2}$. But this follows by Lemma \ref{indeterminatelemma}, since $s_{j_1}$ and $s_{j_2}$ do not contain $\x_i$. \par

Then by Lemma \ref{closedundersubs}, we have \[ \sigma^*(C([s_1(\x_i)]_{\cong}, \ldots, [s_\ell(\x_i)]_{\cong})) \approx_k \sigma^*([s_j(\x_i)]_{\cong}), \]
\noindent i.e. 
\[ C(\sigma([s_1(\x_i)]_{\cong}), \ldots, \sigma([s_\ell(\x_i)]_{\cong})) \approx_k \sigma([s_j(\x_i)]_{\cong}), \]
\noindent i.e.
\[ C([s_1]_{\cong}, \ldots, [s_\ell]_{\cong}) \approx_k [s_j]_{\cong}, \]
\noindent i.e.
\[ [C[s_1, \ldots, s_\ell]]_{\cong}^k \approx_k [s_j]_{\cong}. \] 
\noindent But this means that $C[s_1, \ldots, s_\ell]$ collapses to the alien subterm $s_j$, which contradicts the assumption on $t \equiv C[s_1, \ldots, s_\ell]$. So it must be that $C[s_1(\x_i), \ldots, s_\ell(\x_i)]$ does \emph{not} collapse to any of its alien subterms, and hence $t(\x_i)^{\mathcal{A}} = [t(\x_i)]_{\cong}$, as desired. \par

To complete the proof, we must show that $v^{\mathcal{A}} = [v]_{\cong}$ for every \[ v \in \Allaliens(t(\x_i)) = \Allaliens(C[s_1(\x_i), \ldots, s_\ell(\x_i)]). \] If $v \in \Allaliens(C[s_1(\x_i), \ldots, s_\ell(\x_i)])$, then there is some $1 \leq j \leq \ell$ such that either $v \equiv s_j(\x_i)$ or $v \in \Allaliens(s_j(\x_i))$. But then by the induction hypothesis, the desired result holds. 
\end{proof}

\noindent For the following lemma, recall that if $u$ is a closed term over $\Sigma$, then $u^{\mathcal{A}} \in \mathcal{A} = \Term^c(\Sigma)/{\cong}$, so that $u^{\mathcal{A}}$ is a $\cong$-class of closed $\Sigma$-terms. 

\begin{lemma}
\label{existencesmallerranklemma}
{\em
For any closed term $u$ over $\Sigma$, there is a closed term $u' \in u^{\mathcal{A}}$ such that $\rank(u') \leq \rank(u)$.
} 
\end{lemma}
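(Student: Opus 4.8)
The plan is to argue by induction on $\rank(u)$, reusing the rank-bookkeeping already carried out in the proof of Lemma \ref{canonicaltermlemma}. For the base case $\rank(u) = 0$ the term $u$ is pure, so Lemma \ref{pureterminterpretation} gives $u^{\mathcal{A}} = [u]_{\cong}$; thus $u \in u^{\mathcal{A}}$ and we may take $u' := u$.

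For the inductive step, suppose $\rank(u) = m+1$ and write $u \equiv C[s_1, \ldots, s_\ell]$ for a proper $\Sigma_k$-context $C$ (for some $k \in \{1,2\}$) and alien subterms $s_1, \ldots, s_\ell$ over $\Sigma$, so that $\rank(s_j) \leq m$ for each $j$. By the induction hypothesis, for each $j$ there is a closed term $s_j' \in s_j^{\mathcal{A}}$ with $\rank(s_j') \leq \rank(s_j) \leq m$; since $s_j^{\mathcal{A}}$ is the $\cong$-class containing $s_j'$, we have $s_j^{\mathcal{A}} = [s_j']_{\cong}$. By Remark 1 together with this, $u^{\mathcal{A}} = C(s_1, \ldots, s_\ell)^{\mathcal{A}} = C^{\mathcal{A}}(s_1^{\mathcal{A}}, \ldots, s_\ell^{\mathcal{A}}) = C^{\mathcal{A}}([s_1']_{\cong}, \ldots, [s_\ell']_{\cong})$, and I would then evaluate the last expression via Remark 2, splitting on whether $C(s_1', \ldots, s_\ell')$ collapses to an alien subterm.

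If $C(s_1', \ldots, s_\ell')$ collapses to an alien subterm $u''$, then $u^{\mathcal{A}} = [u'']_{\cong}$; moreover $u''$ is either some $s_j'$ with $\mroot(s_j') \notin \Sigma_k$, or an alien subterm of some $s_j'$ with $\mroot(s_j') \in \Sigma_k$, so in either case $\rank(u'') \leq \rank(s_j') \leq m < \rank(u)$, and $u' := u''$ works. If $C(s_1', \ldots, s_\ell')$ does not collapse, then $u^{\mathcal{A}} = [C(s_1', \ldots, s_\ell')]_{\cong}$, so I would set $u' := C(s_1', \ldots, s_\ell')$ and check that $\rank(u') \leq \rank(u)$: if every $s_j'$ is a pure $\Sigma_k$-term this is clear since then $\rank(u') = 0$, and otherwise, re-decomposing $C(s_1', \ldots, s_\ell') \equiv C'[t_1, \ldots, t_r]$ for a proper $\Sigma_k$-context $C'$, each $t_q$ is either some $s_j'$ or an alien subterm of some $s_j'$, whence $\bmax\{\rank(t_1), \ldots, \rank(t_r)\} \leq \bmax\{\rank(s_1'), \ldots, \rank(s_\ell')\} \leq \bmax\{\rank(s_1), \ldots, \rank(s_\ell)\}$ and therefore $\rank(u') = 1 + \bmax\{\rank(t_1), \ldots, \rank(t_r)\} \leq \rank(u)$. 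This completes the induction.

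The only delicate point — and the place I expect to spend the most care — is this last rank estimate in the non-collapsing case: the terms $s_j'$ delivered by the induction hypothesis need not have their root symbols outside $\Sigma_k$, so $C(s_1', \ldots, s_\ell')$ is in general not already in canonical alien-decomposed form and must be re-decomposed before the ``merging of contexts does not raise rank'' computation (exactly as in the first bullet of the proof of Lemma \ref{canonicaltermlemma}) can be applied.
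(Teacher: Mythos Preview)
Your proposal is correct and is essentially the paper's own proof: induction on $\rank(u)$, applying the induction hypothesis to the alien subterms, evaluating $C^{\mathcal{A}}([s_1']_{\cong},\ldots,[s_\ell']_{\cong})$ via Remark~2, and in the non-collapsing case re-decomposing $C(s_1',\ldots,s_\ell')$ into $C'[t_1,\ldots,t_r]$ to get the rank bound. The only differences are cosmetic---you treat the collapsing case first and explicitly separate out the sub-case where every $s_j'$ is a pure $\Sigma_k$-term (which the paper tacitly absorbs into the general calculation).
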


\begin{proof}
We prove this by induction on $\rank(u)$. If $\rank(u) = 0$, then $u$ is pure and so $u^{\mathcal{A}} = [u]_{\cong}$ by Lemma \ref{pureterminterpretation}. So we have $u \in u^{\mathcal{A}}$ and obviously $\rank(u) \leq \rank(u)$. 

Now let $\rank(u) = m + 1$ for some $m \geq 0$ and suppose that the claim holds for all terms of rank $\leq m$. Let $u \equiv C[s_1, \ldots, s_\ell]$ for some proper $\Sigma_k$-context $C \in \Term(\Sigma_k, V)$ (for some $k \in \{1, 2\}$) and closed terms $s_1, \ldots, s_\ell$ over $\Sigma$ with $\mroot(s_j) \notin \Sigma_k$ for each $1 \leq j \leq \ell$. Moreover, we have $\rank(s_j) \leq m$ for all $1 \leq j \leq \ell$. Then by the induction hypothesis, for each $1 \leq j \leq \ell$ there is a closed term $s_j' \in s_j^{\mathcal{A}}$ such that $\rank(s_j') \leq \rank(s_j)$. Hence, we have $s_j^{\mathcal{A}} = [s_j']_{\cong}$ for each $1 \leq j \leq \ell$.

Now, we have \[ u^{\mathcal{A}} = [C[s_1, \ldots, s_\ell]]^{\mathcal{A}} = C^{\mathcal{A}}\left(s_1^{\mathcal{A}}, \ldots, s_\ell^{\mathcal{A}}\right) = C^{\mathcal{A}}([s_1']_{\cong}, \ldots, [s_\ell']_{\cong}). \]
\noindent Let $C(s_1', \ldots, s_\ell') \equiv C'[t_1, \ldots, t_r]$ for some $\Sigma_k$-context $C'$ and closed terms $t_1, \ldots, t_r$ over $\Sigma$ with $\mroot(t_q) \notin \Sigma_k$ for each $1 \leq q \leq r$. So for any $1 \leq q \leq r$, there is some $1 \leq j \leq \ell$ such that either $t_q \equiv s_j'$ (if $\mroot(s_j') \notin \Sigma_k$) or $t_q$ is an alien subterm of $s_j'$ (if $\mroot(s_j') \in \Sigma_k$). So for all $1 \leq q \leq r$, there is some $1 \leq j \leq \ell$ such that $\rank(t_q) \leq \rank(s_j')$, and hence \[ \bmax\{\rank(t_1), \ldots, \rank(t_r)\} \leq \bmax\{\rank(s_1'), \ldots, \rank(s_\ell')\}. \] 

\noindent To compute $u^{\mathcal{A}} = C^{\mathcal{A}}([s_1']_{\cong}, \ldots, [s_\ell']_{\cong})$, we must determine if $C(s_1', \ldots, s_\ell') \equiv C'[t_1, \ldots, t_r]$ collapses to any of its alien subterms $t_1, \ldots, t_r$ (cf. Remark 2). Suppose first that $C(s_1', \ldots, s_\ell') \equiv C'[t_1, \ldots, t_r]$ does \emph{not} collapse to any alien subterm. Then we have 
\begin{align*}
u^{\mathcal{A}}	&= C^{\mathcal{A}}([s_1']_{\cong}, \ldots, [s_\ell']_{\cong}) \\
			&= [C(s_1', \ldots, s_\ell')]_{\cong} \\
			&= [C'[t_1, \ldots, t_r]]_{\cong},
\end{align*}
\noindent so that $C'[t_1, \ldots, t_r] \in u^{\mathcal{A}}$ and 
\begin{align*}
\rank(C'[t_1, \ldots, t_r])	&= 1 + \bmax\{\rank(t_1), \ldots, \rank(t_r)\} \\
					&\leq 1 + \bmax\{\rank(s_1'), \ldots, \rank(s_\ell')\} \\
					&\leq 1 + \bmax\{\rank(s_1), \ldots, \rank(s_\ell)\} \\
					&= \rank(C[s_1, \ldots, s_\ell]) \\
					&= \rank(u),
\end{align*}
\noindent as desired.

Now suppose that $C(s_1', \ldots, s_\ell') \equiv C'[t_1, \ldots, t_r]$ \emph{does} collapse to the alien subterm $t_q$ for some $1 \leq q \leq r$. So then $u^{\mathcal{A}} = [t_q]_{\cong}$. Then $t_q \in u^{\mathcal{A}}$ and we have
\begin{align*}
\rank(t_q)	&\leq \bmax\{\rank(s_1'), \ldots, \rank(s_\ell')\} \\
		&< 1 + \bmax\{\rank(s_1), \ldots, \rank(s_\ell)\} \\
		&= \rank(u),
\end{align*}
\noindent as desired. This completes the induction and hence the proof.
\end{proof}

\noindent The following lemma asserts that substituting a function symbol applied to distinct indeterminates into the indeterminate $\x$ results in a term of rank at most \emph{one} greater than the rank of the original term. 

\begin{lemma}
\label{functionsymbollemma}
{\em
Let $g \in \Sigma_1 \cup \Sigma_2$ be a function symbol of arity $p \geq 1$. Then for any closed term $u$ over $\Sigma$, 
\[ \rank(u[g(\x_1, \ldots, \x_p)/\x]) \leq \rank(u) + 1. \]
}
\end{lemma}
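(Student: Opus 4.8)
The plan is to argue by induction on $\rank(u)$, abbreviating $\hat{g} := g(\x_1, \ldots, \x_p)$. I would note first that, since $g$ lies in some $\Sigma_k$ (for $k \in \{1,2\}$) while $\x_1, \ldots, \x_p$ are pairwise distinct constants of $\Sigma_4$ and the four signatures are pairwise disjoint, the term $\hat{g}$ is impure with alien subterms exactly $\x_1, \ldots, \x_p$, so that $\rank(\hat{g}) = 1 + \bmax\{0, \ldots, 0\} = 1$.

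For the base case $\rank(u) = 0$, the term $u$ is pure. If $\x$ does not occur in $u$, then $u[\hat{g}/\x] \equiv u$ and $\rank(u[\hat{g}/\x]) = 0 \le \rank(u) + 1$. If $\x$ does occur in $u$, then purity forces $u \in \Term^c(\Sigma_4)$, and since $\Sigma_4$ consists only of constants, $u$ must be the single constant $\x$; hence $u[\hat{g}/\x] \equiv \hat{g}$, whose rank is $1 = \rank(u) + 1$.

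For the inductive step, suppose $\rank(u) = m + 1$ and write $u \equiv C[s_1, \ldots, s_\ell]$ with $C(y_1, \ldots, y_\ell)$ a proper $\Sigma_k$-context and closed terms $s_1, \ldots, s_\ell$ over $\Sigma$ satisfying $\mroot(s_j) \notin \Sigma_k$ and $\rank(s_j) \le m$ for each $j$. Since $C \in \Term(\Sigma_k, V)$ contains no occurrence of $\x$ (as $\x \in \Sigma_4$ is disjoint from $\Sigma_k$ and from $V$), the substitution distributes through $C$: setting $s_j' := s_j[\hat{g}/\x]$, we get $u[\hat{g}/\x] \equiv C(s_1', \ldots, s_\ell')$, where round brackets are needed because $\mroot(s_j') \in \Sigma_k$ is possible — but only when $s_j \equiv \x$ and $g \in \Sigma_k$, in which case $s_j' \equiv \hat{g}$. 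The induction hypothesis yields $\rank(s_j') \le \rank(s_j) + 1 \le m + 1$. If $C(s_1', \ldots, s_\ell')$ is pure we are done; otherwise I would re-parse it as $C'[t_1, \ldots, t_r]$ for a proper $\Sigma_k$-context $C'$ with alien subterms $t_1, \ldots, t_r$, exactly as in the proofs of Lemmas \ref{canonicaltermlemma} and \ref{existencesmallerranklemma}. Each $t_q$ then either equals some $s_j'$ with $\mroot(s_j') \notin \Sigma_k$, so that $\rank(t_q) = \rank(s_j') \le m + 1$; or it is an alien subterm of some $s_j'$ with $\mroot(s_j') \in \Sigma_k$, so that $s_j' \equiv \hat{g}$ and $t_q$ is one of $\x_1, \ldots, \x_p$, giving $\rank(t_q) = 0$. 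In either case $\rank(t_q) \le m+1$, whence
\[ \rank(u[\hat{g}/\x]) = \rank(C'[t_1, \ldots, t_r]) = 1 + \bmax\{\rank(t_1), \ldots, \rank(t_r)\} \le 1 + (m+1) = \rank(u) + 1, \]
which completes the induction.

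The one genuinely delicate point — and essentially the whole content of the lemma — is that substituting $\hat{g}$ for $\x$ can convert an alien leaf $s_j \equiv \x$ into the subterm $\hat{g}$ whose root lies in $\Sigma_k$, thereby "fusing" two adjacent levels of $u$ into a single level. I expect no real obstacle once one observes that this is harmless: $\hat{g}$ itself has rank exactly $1$ with only rank-$0$ alien subterms, so each such fusion increases the overall rank by at most one. The remaining work — distributing the substitution through the $\Sigma_k$-context and re-parsing the result to identify its alien subterms — is routine and runs parallel to manipulations already performed in the excerpt.
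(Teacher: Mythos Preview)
Your proof is correct and follows essentially the same approach as the paper's: induction on $\rank(u)$, with the same base case analysis and the same re-parsing $C(s_1',\ldots,s_\ell') \equiv C'[t_1,\ldots,t_r]$ in the inductive step. The only cosmetic difference is that the paper splits the inductive step into the two cases $g \notin \Sigma_k$ and $g \in \Sigma_k$, whereas you handle both at once via the observation that $\mroot(s_j') \in \Sigma_k$ forces $s_j \equiv \x$ and $g \in \Sigma_k$; the substance is identical.
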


\begin{proof}
We prove this by induction on $\rank(u)$. If $\rank(u) = 0$, then $u$ is pure. So either $u \equiv \x$ or $u$ does not contain $\x$. In the first case, we have $u[g(\overline{\x})/\x] \equiv g(\overline{\x})$. Note that the alien subterms of $g(\x_1, \ldots, \x_p)$ are $\x_1, \ldots, \x_p$. Then we have 
\begin{align*}
\rank(u[g(\overline{\x})/\x])	&= \rank(g(\overline{\x})) \\
					&= 1 + \bmax\{\rank(\x_1), \ldots, \rank(\x_p)\} \\
					&= 1 + 0 \\
					&= 0 + 1 \\
					&= \rank(u) + 1,
\end{align*}
\noindent as required. If $u$ does not contain $\x$, then we have $u[g(\overline{\x})/\x] \equiv u$, and we clearly have $\rank(u) \leq \rank(u) + 1$.

Now suppose that $u$ has rank $m + 1$ for some $m \geq 0$, so that $u \equiv C[s_1, \ldots, s_\ell]$ for some proper $\Sigma_k$-context $C$ (for some $k \in \{1, 2\}$) and closed terms $s_1, \ldots, s_\ell$ over $\Sigma$ with $\mroot(s_j) \notin \Sigma_k$ for each $1 \leq j \leq \ell$. So for every $1 \leq j \leq \ell$, we have $\rank(s_j) \leq m$, and hence we have $\rank(s_j[g(\overline{\x})/\x]) \leq \rank(s_j) + 1$ by the induction hypothesis. 

Suppose first that $g \notin \Sigma_k$. Then for each $1 \leq j \leq \ell$, we have $\mroot(s_j[g(\overline{\x})/\x]) \notin \Sigma_k$. So the alien subterms of \[ u[g(\overline{\x})/\x] \equiv C[s_1, \ldots, s_\ell][g(\overline{\x})/\x] \equiv C(s_1[g(\overline{\x})/\x], \ldots, s_\ell[g(\overline{\x})/\x]) \] are exactly $s_1[g(\overline{\x})/\x], \ldots, s_\ell[g(\overline{\x})/\x]$, so that we may write \[ u[g(\overline{\x})/\x] \equiv C[s_1[g(\overline{\x})/\x], \ldots, s_\ell[g(\overline{\x})/\x]]. \]
\noindent Then we have
\begin{align*}
\rank(u[g(\overline{\x})/\x])	&= \rank(C[s_1[g(\overline{\x})/\x], \ldots, s_\ell[g(\overline{\x})/\x]]) \\
					&= 1 + \bmax\{\rank(s_1[g(\overline{\x})/\x]), \ldots, \rank(s_\ell[g(\overline{\x})/\x])\} \\
					&\leq 1 + \bmax\{\rank(s_1) + 1, \ldots, \rank(s_\ell) + 1\} \\
					&= 1 + \bmax\{\rank(s_1), \ldots, \rank(s_\ell)\} + 1 \\
					&= \rank(C[s_1, \ldots, s_\ell]) + 1\\
					&= \rank(u) + 1,
\end{align*}
\noindent as required. 

Now suppose $g \in \Sigma_k$. Then there \emph{may} be some $1 \leq j \leq \ell$ such that $s_j \equiv \x$, in which case $s_j[g(\overline{\x})/\x] \equiv g(\overline{\x})$, so that $s_j[g(\overline{\x})/\x]$ would \emph{not} be an alien subterm of $u[g(\overline{\x})/\x] \equiv C(s_1[g(\overline{\x})/\x], \ldots, s_\ell[g(\overline{\x})/\x])$, since $g \in \Sigma_k$ and $C$ is a $\Sigma_k$-context. So let $C(s_1[g(\overline{\x})/\x], \ldots, s_\ell[g(\overline{\x})/\x]) \equiv C'[t_1, \ldots, t_r]$ for some proper $\Sigma_k$-context $C'$ and closed terms $t_1, \ldots, t_r$ over $\Sigma$ with $\mroot(t_q) \notin \Sigma_k$ for each $1 \leq q \leq r$. So for any $1 \leq q \leq r$, there is some $1 \leq j \leq \ell$ such that either $s_j \not\equiv \x$ and $t_q \equiv s_j[g(\overline{\x})/\x]$, or $s_j \equiv \x$ and $t_q \equiv \x_i$ for some $1 \leq i \leq p$. So for any $1 \leq q \leq r$, either $\rank(t_q) = 0$ or there is some $1 \leq j \leq \ell$ such that $\rank(t_q) =\ rank(s_j[g(\overline{\x})/\x]) \leq \rank(s_j) + 1$. So we have 
\[ \bmax\{\rank(t_1), \ldots, \rank(t_r)\} \leq \bmax\{\rank(s_1) + 1, \ldots, \rank(s_\ell) + 1\}. \] Hence, we have
\begin{align*}
\rank(u[g(\overline{\x})/\x])	&= \rank(C'[t_1, \ldots, t_r]) \\
					&= 1 + \bmax\{\rank(t_1), \ldots, \rank(t_r)\} \\
					&\leq 1 + \bmax\{\rank(s_1) + 1, \ldots, \rank(s_\ell) + 1\} \\
					&= 1 + \bmax\{\rank(s_1), \ldots, \rank(s_\ell)\} + 1 \\
					&= \rank(C[s_1, \ldots, s_\ell]) + 1\\
					&= \rank(u) + 1,
\end{align*}
\noindent as required. This completes the induction and hence the proof.
\end{proof}

\noindent The following lemma will be crucial in the proof of our main theorem. It essentially enumerates all possible interpretations that the term $u[g(\x_1, \ldots, \x_p)/\x]$ could have in $\mathcal{A}$, where $u$ is a closed term over $\Sigma$ and $g$ is a function symbol of $\Sigma_1 \cup \Sigma_2$ of arity $p \geq 1$. 

\begin{lemma}
\label{mostimportantlemma}
{\em
Let $g \in \Sigma_1 \cup \Sigma_2$ be a function symbol of arity $p \geq 1$, and let $g(\overline{\x}) \equiv g(\x_1, \ldots, \x_p)$. Then for any closed term $u$ over $\Sigma$ and any $v \in \{u\} \cup \Allaliens(u)$, one of the following is true:
\begin{enumerate}

\item $\rank(v) > 0$ and $v \equiv C[s_1, \ldots, s_\ell]$ for some $\Sigma_k$-context $C$ (for some $k \in \{1, 2\}$) and closed terms $s_1, \ldots, s_\ell$ over $\Sigma$ such that for any $1 \leq j \leq \ell$ we have $\mroot(s_j) \notin \Sigma_k$ and $s_j[g(\overline{\x})/\x]^{\mathcal{A}} = [s_j']_{\cong}$ and \[ v[g(\overline{\x})/\x]^{\mathcal{A}} = [C(s_1', \ldots, s_\ell')]_{\cong}. \]

\item $v[g(\overline{\x})/\x]^{\mathcal{A}} = [w]_{\cong}$ for some closed term $w$ over $\Sigma_1 \cup \Sigma_2 \cup \Sigma_3$. 

\item $v[g(\overline{\x})/\x]^{\mathcal{A}} = [g(\overline{\x})]_{\cong}$.

\item $v[g(\overline{\x})/\x]^{\mathcal{A}} = [\x_i]_{\cong}$ for some $i \geq 1$.

\item $\rank(v) > 0$ and $\mroot(v) \in \Sigma_k$ for some $k \in \{1, 2\}$ and $v[g(\overline{\x})/\x]^{\mathcal{A}} = [D(t_1', \ldots, t_r')]_{\cong}$ for some proper $\Sigma_{k'}$-context $D$ (for $k' \neq k \in \{1, 2\}$) such that for every $1 \leq q \leq r$, there is a term $t_q$ of rank $\leq \rank(v) - 2$ with $\mroot(t_q) \notin \Sigma_{k'}$ and $t_q[g(\overline{\x})/\x]^{\mathcal{A}} = [t_q']_{\cong}$. 
\end{enumerate}
}
\end{lemma}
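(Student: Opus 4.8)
The plan is to prove the lemma by induction on $\rank(v)$. Since the five alternatives in the conclusion refer only to $v$ and $g$ (and not otherwise to $u$), and since every closed term $v$ over $\Sigma$ lies in $\{v\} \cup \Allaliens(v)$, it suffices to prove that for \emph{every} closed term $v$ over $\Sigma$ one of the five alternatives holds.

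For the base case $\rank(v) = 0$, the term $v$ is pure. If $v$ does not contain $\x$, then $v[g(\overline{\x})/\x] \equiv v$: if moreover $v \in \Term^c(\Sigma_1 \cup \Sigma_2 \cup \Sigma_3)$ then Lemma \ref{pureterminterpretation} gives $v^{\mathcal{A}} = [v]_{\cong}$ and alternative 2 holds, while if $v \in \Term^c(\Sigma_4)$ then $v \equiv \x_i$ for some $i \geq 1$ and $v^{\mathcal{A}} = [\x_i]_{\cong}$, so alternative 4 holds. If instead $v \equiv \x$, then $v[g(\overline{\x})/\x] \equiv g(\overline{\x})$, so $v[g(\overline{\x})/\x]^{\mathcal{A}} = g^{\mathcal{A}}([\x_1]_{\cong}, \ldots, [\x_p]_{\cong})$; by the definition of $g^{\mathcal{A}}$ this equals $[\x_i]_{\cong}$ if $g(\overline{\x})$ collapses to one of its alien subterms $\x_1, \ldots, \x_p$ (which, by Lemma \ref{complexlemma}, can happen only when $g$ is a projection in its theory) and $[g(\overline{\x})]_{\cong}$ otherwise --- alternative 4 or alternative 3, respectively.

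For the inductive step, suppose $\rank(v) = m+1$ and write $v \equiv C[s_1, \ldots, s_\ell]$ for a proper $\Sigma_k$-context $C$ ($k \in \{1,2\}$) and distinct alien subterms $s_1, \ldots, s_\ell$ with $\mroot(s_j) \notin \Sigma_k$ and $\rank(s_j) \leq m$. Apply the induction hypothesis to each $s_j$ and use the resulting alternative to name a representative $s_j'$ of the $\cong$-class $s_j[g(\overline{\x})/\x]^{\mathcal{A}}$ --- namely $s_j' \equiv C_j(s_{j,1}', \ldots)$ under alternative 1, $s_j' \equiv w$, $g(\overline{\x})$, or $\x_i$ under alternatives 2, 3, 4, and $s_j' \equiv D_j(t_{j,1}', \ldots)$ under alternative 5 --- recording in each case the signature containing $\mroot(s_j')$ and the rank data carried by that alternative. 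Since $C$ is a $\Sigma_k$-context that does not mention $\x$, we have $v[g(\overline{\x})/\x] \equiv C(s_1[g(\overline{\x})/\x], \ldots, s_\ell[g(\overline{\x})/\x])$, and hence, by Remark 1 and $s_j[g(\overline{\x})/\x]^{\mathcal{A}} = [s_j']_{\cong}$,
\[ v[g(\overline{\x})/\x]^{\mathcal{A}} = C^{\mathcal{A}}\left([s_1']_{\cong}, \ldots, [s_\ell']_{\cong}\right). \]
By Remark 2 we split on whether $C(s_1', \ldots, s_\ell')$ collapses. If it does not, then $v[g(\overline{\x})/\x]^{\mathcal{A}} = [C(s_1', \ldots, s_\ell')]_{\cong}$ and alternative 1 holds with the chosen $s_j'$. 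If it collapses to an alien subterm $r$, then $v[g(\overline{\x})/\x]^{\mathcal{A}} = [r]_{\cong}$, and $r$ is either one of the $s_j'$ with $\mroot(s_j') \notin \Sigma_k$, or an alien subterm of some $s_j'$ with $\mroot(s_j') \in \Sigma_k$; the recorded data show that the latter can occur only when alternative 3 holds for $s_j$ with $g \in \Sigma_k$ (so $s_j' \equiv g(\overline{\x})$ and $r \equiv \x_i$) or alternative 5 holds for $s_j$ with its context $D_j$ again a $\Sigma_k$-context. Tracing each of the five induction-hypothesis alternatives for the relevant $s_j$ through this collapse then yields one of alternatives 2--5 for $v$: a collapse target pure over $\Sigma_1 \cup \Sigma_2 \cup \Sigma_3$ gives alternative 2; a target of the form $g(\overline{\x})$ or $\x_i$ gives alternative 3 or 4; and a collapse onto an alien whose root lies in the signature $\Sigma_{k'}$ opposite to $\Sigma_k$ gives alternative 5, with the bound $\rank(t_q) \leq \rank(v) - 2$ following from $\rank(s_j) \leq m = \rank(v) - 1$ together with the one-level drop incurred by extracting aliens of $s_j$ (and the two-level drop already recorded when alternative 5 holds for $s_j$). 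As in the proof of Lemma \ref{collapsinglemma}, the auxiliary substitutions on $\Term^c(\Sigma)/{\cong}$ used to transfer statements about the substituted terms back to the original terms are checked to be well-defined via Lemma \ref{indeterminatelemma}, and Lemmas \ref{closedundersubs} and \ref{complexlemma} ensure that a $\Sigma_k$-context cannot collapse onto a subterm with root in $\Sigma_{k'}$ unless the pre-substitution configuration already did.

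The step I expect to be the real work is this last case analysis inside the inductive step: one must carry, through each of the five alternatives available for the alien subterms $s_j$, precise information about which of $\Sigma_1, \Sigma_2, \Sigma_3, \Sigma_4$ contains the root of the collapse target and about how ranks decrease, and verify that these always combine to land in exactly one of the five alternatives for $v$ with the stated bounds. Two features make this delicate and account for the shape of the statement: $g$ may itself belong to $\Sigma_k$, so $g(\overline{\x})$ can have root in $\Sigma_k$ and thereby fail to be an alien of the enclosing $\Sigma_k$-context; and $g$ may be a projection in its theory, so that $g(\overline{\x})$ collapses to some $\x_i$, which is why alternative 4 cannot be avoided.
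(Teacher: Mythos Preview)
Your approach is essentially the paper's. The observation that the five alternatives depend only on $v$, so that one may induct directly on $\rank(v)$ and ignore the outer quantification over $u$, is a harmless simplification: the paper inducts on $\rank(u)$ and then disposes of $v \in \Allaliens(u)$ by invoking the hypothesis on the alien subterms $s_j$, which amounts to the same thing. Your base case and the no-collapse branch of the inductive step match the paper exactly, and the overall structure of the collapse analysis (split on whether the collapse target $r$ equals some $s_j'$ or is a proper alien of some $s_j'$, then case on which of (1)--(5) the induction hypothesis gave for that $s_j$) is also the paper's.

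A few points of your sketch do not line up with the paper and should be corrected. First, your claim that the sub-case ``$r$ is a proper alien of $s_j'$ with $\mroot(s_j') \in \Sigma_k$'' arises \emph{only} under alternatives 3 or 5 for $s_j$ is false: alternative 2 for $s_j$ may produce a representative $s_j' \equiv w$ with $\mroot(w) \in \Sigma_k$, and then any alien of $w$ is again a closed term over $\Sigma_1 \cup \Sigma_2 \cup \Sigma_3$, yielding alternative 2 for $v$ (the paper treats this explicitly). Relatedly, ``pure over $\Sigma_1 \cup \Sigma_2 \cup \Sigma_3$'' should read simply ``over $\Sigma_1 \cup \Sigma_2 \cup \Sigma_3$''. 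Second, the references to Lemmas \ref{indeterminatelemma}, \ref{closedundersubs}, and \ref{complexlemma} are misplaced: the paper's proof of this lemma uses none of them. What the paper \emph{does} use, precisely in the deepest sub-case you flag as ``the real work'' (alternative 5 for $s_j$, with $s_j' \equiv D_j(t_1', \ldots, t_r')$ having root in $\Sigma_k$ and the collapse target $r$ lying in $\{t_q'\} \cup \Allaliens(t_q')$), are Lemmas \ref{existencesmallerranklemma} and \ref{functionsymbollemma}: together they give $\rank(t_q') \leq \rank(t_q[g(\overline{\x})/\x]) \leq \rank(t_q) + 1 \leq \rank(s_j) - 1 < m$, which is what licenses a further appeal to the induction hypothesis with $t_q'$ as the outer term. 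Your sketch gestures at a rank drop but does not name these two lemmas; they are the missing ingredient that makes the recursion terminate.
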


\begin{proof}
We prove this by induction on $\rank(u)$. If $\rank(u) = 0$, then $u$ is pure. Since $\Allaliens(u) = \emptyset$ in this case, it suffices to show that one of the statements in the lemma is true for just $u$. Since $u$ is pure, either $u \equiv \x$, or $u$ does not contain $\x$. In the first case, we have $u[g(\overline{\x})/\x] \equiv g(\overline{\x})$, so that \[ u[g(\overline{\x})/\x]^{\mathcal{A}} = g(\x_1, \ldots, \x_p)^{\mathcal{A}} = g^{\mathcal{A}}\left(\x_1^{\mathcal{A}}, \ldots, \x_p^{\mathcal{A}}\right) = g^{\mathcal{A}}([\x_1]_{\cong}, \ldots, [\x_p]_{\cong}), \] where the last equality holds by Lemma \ref{pureterminterpretation}, because $\x_i$ is pure for each $1 \leq i \leq p$. Now we must consider whether $g(\x_1, \ldots, \x_p)$ collapses to one of the alien subterms $\x_1, \ldots, \x_p$. If it does not, then we have \[ u[g(\overline{\x})/\x]^{\mathcal{A}} = g^{\mathcal{A}}([\x_1]_{\cong}, \ldots, [\x_p]_{\cong}) = [g(\overline{\x})]_{\cong}, \] so that statement (3) is satisfied. Otherwise, we have $u[g(\overline{\x})/\x]^{\mathcal{A}} = [\x_i]_{\cong}$ for some $1 \leq i \leq p$, so that statement (4) is satisfied. 

If $u$ does \emph{not} contain $\x$ and hence is a pure term over $\Sigma \setminus \{\x\}$, then we have $u[g(\overline{\x})/\x] \equiv u$ and hence $u[g(\overline{\x})/\x]^{\mathcal{A}} = u^{\mathcal{A}} = [u]_{\cong}$ by Lemma \ref{pureterminterpretation}. If $u \equiv \x_i$ for some $i \geq 1$, then statement (4) is satisfied. Otherwise, statement (2) is satisfied. This completes the base case. 

Now let $m \geq 0$ and suppose that $\rank(u) = m + 1$ and that the result holds for any closed term over $\Sigma$ of rank $\leq m$. Let $u \equiv C[s_1, \ldots, s_\ell]$, where $C$ is a proper $\Sigma_k$-context (for some $k \in \{1, 2\}$) and $s_1, \ldots, s_\ell$ are closed terms over $\Sigma$ of rank $\leq m$ such that $\mroot(s_j) \notin \Sigma_k$ for each $1 \leq j \leq \ell$. Moreover, for any $1 \leq j \leq \ell$, suppose that $s_j[g(\overline{\x})/\x]^{\mathcal{A}} = [s_j']_{\cong}$. Then we have
\begin{align*}
u[g(\overline{\x})/\x]^{\mathcal{A}}	&= C[s_1, \ldots, s_\ell][g(\overline{\x})/\x]^{\mathcal{A}} \\
							&= C(s_1[g(\overline{\x})/\x], \ldots, s_\ell[g(\overline{\x})/\x])^{\mathcal{A}} \\
							&= C^{\mathcal{A}}\left(s_1[g(\overline{\x})/\x]^{\mathcal{A}}, \ldots, s_\ell[g(\overline{\x})/\x]^{\mathcal{A}}\right) \\ 
							&= C^{\mathcal{A}}([s_1']_{\cong}, \ldots, [s_\ell']_{\cong}).
\end{align*}
Before we show that one of the statements (1) -- (5) holds for $u$, we first show that one of the statements (1) - (5) holds for any $v \in \Allaliens(u)$. If $v \in \Allaliens(u)$, then there is some $1 \leq j \leq \ell$ such that either $v \equiv s_j$ or $v \in \Allaliens(s_j)$. But $\rank(s_j) \leq m$, so by the induction hypothesis, the desired result follows. 

Now, to show that one of the statements (1) -- (5) holds for $u$ itself, we must consider whether $C(s_1', \ldots, s_\ell')$ collapses to any of its alien subterms. If not, then we have \[ u[g(\overline{\x})/\x]^{\mathcal{A}} = C^{\mathcal{A}}([s_1']_{\cong}, \ldots, [s_\ell']_{\cong}) = [C(s_1', \ldots, s_\ell')]_{\cong}, \] and so statement (1) is satisfied for $u$.
 
Otherwise, suppose that $C(s_1', \ldots, s_\ell')$ \emph{does} collapse to one of its alien subterms, say $s$. Then there is some $1 \leq j \leq \ell$ such that either $s \equiv s_j'$ (if $\mroot(s_j') \notin \Sigma_k$), or $s$ is an alien subterm of $s_j'$ (if $\mroot(s_j') \in \Sigma_k$). 
\begin{itemize}
\item Suppose first that $s \equiv s_j'$, so that $\mroot(s_j') \notin \Sigma_k$. Then we have \[ u[g(\overline{\x})/\x]^{\mathcal{A}} = [s]_{\cong} = [s_j']_{\cong} = s_j[g(\overline{\x})/\x]^{\mathcal{A}}. \] 
By the induction hypothesis, we know that one of the statements (1) -- (5) is true for $s_j$. 
\begin{itemize}
\item Suppose that statement (1) is true for $s_j$. So $\rank(s_j) > 0$ and $s_j \equiv D[t_1, \ldots, t_r]$ for some proper $\Sigma_{k'}$-context $D$ (for some $k' \in \{1, 2\}$) and closed terms $t_1, \ldots, t_r$ over $\Sigma$ such that $\mroot(t_q) \notin \Sigma_{k'}$ and $t_q[g(\overline{\x})/\x]^{\mathcal{A}} = [t_q']_{\cong}$ for each $1 \leq q \leq r$, and \[ s_j[g(\overline{\x})/\x]^{\mathcal{A}} = [D(t_1', \ldots, t_r')]_{\cong}. \] Since $s_j[g(\overline{\x})/\x]^{\mathcal{A}} = [s_j']_{\cong}$, we may assume without loss of generality that \[ s_j' \equiv D(t_1', \ldots, t_r'). \] Then we have \[ u[g(\overline{\x})/\x]^{\mathcal{A}} = s_j[g(\overline{\x})/\x]^{\mathcal{A}} = [s_j']_{\cong} = [D(t_1', \ldots, t_r')]_{\cong}. \] 

\noindent Since $\mroot(s_j') \notin \Sigma_k$ in this case, it follows that $k \neq k'$, since $\mroot(s_j') = \mroot(D) \in \Sigma_{k'}$. Now we show that statement (5) is satisfied for $u$. By the preceding discussion, it remains to show that $\rank(t_q) \leq \rank(u) - 2$ for each $1 \leq q \leq r$. But this follows because we have $\rank(t_q) < \rank(s_j) < \rank(u)$. 

\item If one of the statements (2) -- (4) is true for $s_j$, then since $u[g(\overline{\x})/\x]^{\mathcal{A}} = s_j[g(\overline{\x})/\x]^{\mathcal{A}}$, the corresponding statement is also true for $u$. 

\item Lastly, we show that it is \emph{not} possible for statement (5) to be true of $s_j$. If it \emph{were} true of $s_j$, then it would follow that $s_j[g(\overline{\x})/\x]^{\mathcal{A}} = [s_j']_{\cong} = [w]_{\cong}$ for some closed term $w$ with $\mroot(w) \in \Sigma_k$ (since $\mroot(s_j) \notin \Sigma_k$). But this is impossible, since then $s_j' \cong w$ and $\mroot(s_j') \notin \Sigma_k$ but $\mroot(w) \in \Sigma_k$. 
\end{itemize}
\item Now suppose that $s$ is an alien subterm of $s_j'$ for some $1 \leq j \leq \ell$, so that $\mroot(s_j') \in \Sigma_k$ and $C(s_1', \ldots, s_\ell')$ collapses to $s$. Then we have $u[g(\overline{\x})/\x]^{\mathcal{A}} = [s]_{\cong}$. Again, we know by the induction hypothesis that one of the statements (1) -- (5) is true for $s_j$.
\begin{itemize}
\item Suppose first that statement (1) is true for $s_j$. Then we would have $s_j[g(\overline{\x})/\x]^{\mathcal{A}} = [s_j']_{\cong} = [w]_{\cong}$ for some closed term $w$ with $\mroot(w) \notin \Sigma_k$, since $\mroot(s_j) \notin \Sigma_k$. But then we would have $s_j' \cong w$ and $\mroot(s_j') \in \Sigma_k$ and $\mroot(w) \notin \Sigma_k$, which is impossible. So statement (1) cannot be true for $s_j$ in this case. 

\item If statement (2) is true for $s_j$, then we would have $s_j[g(\overline{\x})/\x]^{\mathcal{A}} = [s_j']_{\cong} = [w]_{\cong}$ for some closed term $w$ over $\Sigma_1 \cup \Sigma_2 \cup \Sigma_3$. Without loss of generality, we may assume that $s_j' \equiv w$, so that $s_j'$ is a closed term over $\Sigma_1 \cup \Sigma_2 \cup \Sigma_3$. Since $s$ is an alien subterm of $s_j'$, it follows that $s$ is also a closed term over $\Sigma_1 \cup \Sigma_2 \cup \Sigma_3$, so that statement (2) is also true for $u$. 

\item If statement (3) is true for $s_j$, then we would have $s_j[g(\overline{\x})/\x]^{\mathcal{A}} = [s_j']_{\cong} = [g(\overline{\x})]_{\cong}$. Again, we may assume without loss of generality that $s_j' \equiv g(\overline{\x})$. Since $s$ is an alien subterm of $s_j' \equiv g(\x_1, \ldots, \x_p)$, it follows that $s \equiv \x_i$ for some $1 \leq i \leq p$. Then we have $u[g(\overline{\x})/\x]^{\mathcal{A}} = [s]_{\cong} = [\x_i]_{\cong}$ and thus statement (4) is true for $u$. 

\item If statement (4) is true for $s_j$, then we would have $s_j[g(\overline{\x})/\x]^{\mathcal{A}} = [s_j']_{\cong} = [\x_i]_{\cong}$ for some $i \geq 1$. Again, we may assume without loss of generality that $s_j' \equiv \x_i$. But then $s_j'$ has no alien subterms, contradicting the assumption that $s$ is an alien subterm of $s_j'$. So statement (4) cannot be true for $s_j$ in this case. 

\item Finally, suppose that statement (5) is true for $s_j$. Then we would have $s_j[g(\overline{\x})/\x]^{\mathcal{A}} = [s_j']_{\cong} = [D(t_1', \ldots, t_r')]_{\cong}$, where $D$ is a proper $\Sigma_k$-context (since $\mroot(s_j) \notin \Sigma_k$) and for every $1 \leq q \leq r$, there is a term $t_q$ of rank $\leq \rank(s_j) - 2$ such that $\mroot(t_q) \notin \Sigma_k$ and $t_q[g(\overline{\x})/\x]^{\mathcal{A}} = [t_q']_{\cong}$. Again, we may assume without loss of generality that $s_j' \equiv D(t_1', \ldots, t_r')$. Since $s$ is an alien subterm of $s_j'$, it follows that there is some $1 \leq q \leq r$ such that either $s \equiv t_q'$ (if $\mroot(t_q') \notin \Sigma_k$) or $s$ is an alien subterm of $t_q'$ (if $\mroot(t_q') \in \Sigma_k$). In other words, there is some $1 \leq q \leq r$ such that $s \in \{t_q'\} \cup \Allaliens(t_q')$. 

Since $t_q' \in [t_q']_{\cong} = t_q[g(\overline{\x})/\x]^{\mathcal{A}}$, by Lemma \ref{existencesmallerranklemma} we may assume without loss of generality that $\rank(t_q') \leq \rank(t_q[g(\overline{\x})/\x]) \leq \rank(t_q) + 1$, where the latter inequality comes from Lemma \ref{functionsymbollemma}. So we have \[ \rank(t_q') \leq \rank(t_q) + 1 \leq (\rank(s_j) - 2) + 1 = \rank(s_j) - 1 < m. \] So by the induction hypothesis, every term in $\{t_q'\} \cup \Allaliens(t_q')$ satisfies one of the statements (1) -- (5), and hence $s$ satisfies one of these statements, as desired.
\end{itemize} 
\end{itemize}
This completes the induction and hence the proof.
\end{proof}

\noindent We can now finally prove that $M_n := (\T_1 + \T_2)(\y_1, \ldots, \y_n)$, the free model of $\T_1 + \T_2$ on $n$ generators $\y_1, \ldots, \y_n$, has trivial isotropy group. 

\begin{theorem}
\label{maintheorem}
{\em Let $\T_1, \T_2$ be equational theories over respective disjoint signatures $\Sigma_1, \Sigma_2$, and suppose for each $i \in \{1, 2\}$ that $\T_i$ has at least one function symbol that is neither constant nor a projection in $\T_i$. Then for any $n \geq 0$, \[ G_{\T_1 + \T_2}(M_n) = \{ [\x] \}, \] where $M_n$ is the free model of $\T_1 + \T_2$ on $n$ generators $\y_1, \ldots, \y_n$. 

In other words, the logical isotropy group of the free model of $\T_1 + \T_2$ on $n$ generators is trivial. 
}
\end{theorem}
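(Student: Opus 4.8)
The plan is to show that every element $[t] \in G_{\T_1+\T_2}(M_n)$ equals $[\x]$. I would begin by applying Lemma \ref{canonicaltermlemma} to replace the chosen representative $t$ by one of minimal rank which, together with all its iterated alien subterms, interprets in $\mathcal{A}$ as its own $\cong$-class, and which is not isomorphic to any term over $\Sigma_1\cup\Sigma_2\cup\Sigma_3\cup\{\x\}$ of strictly smaller rank. If $t$ does not contain $\x$, then invertibility of $[t]$ forces $t \equiv t[t^{-1}/\x] \sim_{M_n,\x} \x$, so $[t]=[\x]$; and if $\rank(t)=0$ then $t$ is pure, hence $t\equiv\x$ or $t$ is $\x$-free, and we are done in either case. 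So I may assume $\rank(t)\geq 1$ and, after possibly swapping $\Sigma_1$ and $\Sigma_2$, that $\mroot(t)\in\Sigma_1$, and aim for a contradiction using that $[t]$ must commute generically with the symbol $f_2\in\Sigma_2$ (of arity $n_2\geq 1$, neither constant nor a projection in $\T_2$).

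Writing $g:=f_2$ and $g(\overline{\x}):=f_2(\x_1,\dots,\x_{n_2})$, generic commutativity translates — using that provable equality over $\Sigma$ is $\sim_{M_n,\infty}$, that $\sim_{M_n,\infty}$ agrees with $\cong$ on closed $\Sigma$-terms, and that $s\mapsto s^{\mathcal{A}}$ is a $\T_1+\T_2$-model homomorphism (all from \cite{Combining}) — into the identity
\[ t[g(\overline{\x})/\x]^{\mathcal{A}} \;=\; g\big(t[\x_1/\x],\dots,t[\x_{n_2}/\x]\big)^{\mathcal{A}} \;\in\; \mathcal{A}. \]
To understand the right-hand side, observe that since $[t]$ is invertible, the induced map $\varphi := [t]^{\mathcal{A},[\y_1]_{\cong},\dots,[\y_n]_{\cong}}\colon\mathcal{A}\to\mathcal{A}$ is a bijection, with inverse $[t^{-1}]^{\mathcal{A},[\y_1]_{\cong},\dots,[\y_n]_{\cong}}$, and by Lemma \ref{collapsinglemma} it carries the pairwise distinct elements $[\x_1]_{\cong},\dots,[\x_{n_2}]_{\cong}$ to $[t[\x_1/\x]]_{\cong},\dots,[t[\x_{n_2}/\x]]_{\cong}$; hence these are $n_2$ distinct $\cong$-classes, each with root in $\Sigma_1$. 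Therefore $g(t[\x_1/\x],\dots,t[\x_{n_2}/\x])$ has exactly these as its alien subterms, lying in $n_2$ distinct classes, and it cannot collapse: a collapse to some $t[\x_{i_0}/\x]$ would, by Lemma \ref{complexlemma} applied with the $\Sigma_2$-context $f_2(y_1,\dots,y_{n_2})$ against a single variable, give $\T_2\vdash f_2(w_1,\dots,w_{n_2})=w_{i_0}$ for distinct $w_j$, making $f_2$ a projection. Consequently the displayed element equals $[g(t[\x_1/\x],\dots,t[\x_{n_2}/\x])]_{\cong}$, which is represented by a term of rank $1+\rank(t)$ and every representative of which has root in $\Sigma_2$.

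Finally I would apply Lemma \ref{mostimportantlemma} with $u:=v:=t$: the value $t[g(\overline{\x})/\x]^{\mathcal{A}}$ must fall under one of its five alternatives. Alternatives (1) and (4) are impossible since they force its root into $\Sigma_1$ resp.\ $\Sigma_4$, whereas we have shown it lies in $\Sigma_2$. In each of the remaining alternatives (2), (3), (5), combining it with the displayed identity and abstracting the alien subterms of both sides over $\Sigma_2$ yields
\[ f_2\big([t[\x_1/\x]]_{\cong},\dots,[t[\x_{n_2}/\x]]_{\cong}\big) \;\approx_2\; E\big([b_1]_{\cong},\dots,[b_m]_{\cong}\big) \]
for some $\Sigma_2$-context $E$, where in alternative (3) the $b_i$ are among $\x_1,\dots,\x_{n_2}$, in (2) they are the alien subterms of an $\x_l$-free term, and in (5) they are alien subterms of a term of rank $\leq\rank(t)-1$ (bounding the ranks of the $t_q'$ via Lemmas \ref{existencesmallerranklemma} and \ref{functionsymbollemma}). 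One then checks that no $[b_i]_{\cong}$ can equal any $[t[\x_j/\x]]_{\cong}$: in (3) by a root-signature mismatch; in (2) because $\varphi^{-1}$ sends $[t[\x_j/\x]]_{\cong}$ to $[\x_j]_{\cong}$, while the interpretation in $\mathcal{A}$ of the $\x_l$-free term $t^{-1}[b_i/\x]$ involves only symbols occurring in it and so is never $[\x_j]_{\cong}$; and in (5) because $\rank(t[\x_j/\x])=\rank(t)$ and, via Lemma \ref{indeterminatelemma} together with the rank-minimality of $t$, $t[\x_j/\x]$ is rank-minimal in its $\cong$-class, whereas $\rank(b_i)<\rank(t)$. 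Granting this, Lemma \ref{complexlemma} — assigning the distinct variables $w_1,\dots,w_{n_2}$ to the distinct classes $[t[\x_j/\x]]_{\cong}$ and other variables to the classes of the $b_i$ — yields $\T_2\vdash f_2(w_1,\dots,w_{n_2}) = (\text{a term containing none of }w_1,\dots,w_{n_2})$, i.e.\ $f_2$ is constant in $\T_2$, contradicting its choice. Hence $\rank(t)\geq 1$ is impossible, so $[t]=[\x]$ and $G_{\T_1+\T_2}(M_n)=\{[\x]\}$. I expect this last paragraph to carry the real difficulty: correctly abstracting alien subterms in alternatives (2) and (5) and, above all, excluding the class collisions in alternative (5) by combining the rank bound $\rank(t_q')\leq\rank(t)-1$ with the rank-minimality of $t[\x_j/\x]$ — this is the crux of the argument.
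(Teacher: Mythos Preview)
Your overall strategy coincides with the paper's: normalize $t$ via Lemma~\ref{canonicaltermlemma}, transfer the generic-commutativity equation into $\mathcal{A}$, compute the right-hand side as $[f(t(\x_1),\dots,t(\x_m))]_{\cong}$, invoke Lemma~\ref{mostimportantlemma} on the left-hand side, and eliminate each of the five alternatives. Your handling of alternatives (1), (3), (4) matches the paper. For the distinctness of the classes $[t(\x_i)]_{\cong}$ and for alternative (2) you take a slightly different route, working with the bijection $\varphi$ on $\mathcal{A}$ and its inverse; the paper instead proves dedicated sublemmas (Lemma~\ref{distinctindeterminateslemma} and Lemma~\ref{isotropytermnotconstant}) that argue directly with provable equality and the non-triviality of $\T_1+\T_2$. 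Your route is valid once one checks the easy fact that the $\mathcal{A}$-interpretation of an indeterminate-free closed term is always the $\cong$-class of an indeterminate-free term (hence never $[\x_j]_{\cong}$).

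There is, however, a genuine gap in your treatment of alternative (5), precisely at the point you flag as the crux. You claim that Lemma~\ref{indeterminatelemma} together with the rank-minimality of $t$ shows that $t(\x_j)$ is rank-minimal in its $\cong$-class. But Lemma~\ref{indeterminatelemma} only transfers $\cong$ across the single substitution $[\x_i/\x]$ between terms over $\Sigma\setminus\{\x_i\}$, whereas in alternative (5) the competing terms $b_i$ (the $w_p$ in the paper) may contain \emph{several} indeterminates $\x_1,\dots,\x_{n_2}$ simultaneously, and the rank-minimality hypothesis on $t$ is only relative to terms over $\Sigma_1\cup\Sigma_2\cup\Sigma_3\cup\{\x\}$. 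From $t(\x_j)\cong b_i$ with $\rank(b_i)<\rank(t)$ you cannot conclude anything via Lemma~\ref{indeterminatelemma} alone, since $b_i$ need not be of the form $v[\x_j/\x]$ for some $v$ over the restricted signature. The paper closes this gap with two additional lemmas: Lemma~\ref{replacingindeterminateslemma}, which says $u\cong v \Rightarrow u[\x/\Ind(u)]\cong v[\x/\Ind(v)]$ (replace \emph{all} indeterminates by $\x$; same proof idea as Lemma~\ref{indeterminatelemma} but the needed generalization), and Lemma~\ref{rankindeterminateslemma}, which says this replacement preserves rank. Applying these to $t(\x_j)\cong b_i$ yields $t\cong b_i[\x/\Ind(b_i)]$, a term over $\Sigma_1\cup\Sigma_2\cup\Sigma_3\cup\{\x\}$ of rank $\rank(b_i)<\rank(t)$, contradicting condition~(4) of Lemma~\ref{canonicaltermlemma}. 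With this correction your argument goes through and is essentially the paper's.
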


\begin{proof}
Let $t \in \Term^c(\Sigma_1 \cup \Sigma_2 \cup \Sigma_3 \cup \{\x\})$ with $[t] \in G_{\T_1 + \T_2}(M_n)$. We want to show that $[t] = [\x]$, i.e. that $t \sim_{M_n, \x} \x$. If $\T_1 + \T_2$ is \emph{trivial}, meaning that it proves the equation $x = y$ for distinct variables $x, y$, then it easily follows that 
\[ (\T_1 + \T_2)(\x, \y_1, \ldots, \y_n) \vdash t = \x, \] which means $t \sim_{M_n, \x} \x$. 

So suppose that $\T_1 + \T_2$ is \emph{not} trivial, i.e. that $\T_1 + \T_2$ does \emph{not} prove the equation $x = y$ for distinct variables $x, y$. Without loss of generality, we may assume that $t$ satisfies statements (3) and (4) from Lemma \ref{canonicaltermlemma}. Because if $t$ did \emph{not} satisfy these statements, then by Lemma \ref{canonicaltermlemma} we could find a closed term $t' \in \Term^c(\Sigma_1 \cup \Sigma_2 \cup \Sigma_3 \cup \{\x\})$ satisfying statements (2) -- (4) of Lemma \ref{canonicaltermlemma}. Then since $[t]$ is an element of the logical isotropy group and $t \sim_{M_n, \x} t'$, it would follow that $[t'] = [t]$ is an element of the logical isotropy group as well. So if we proved $t' \sim_{M_n, \x} \x$, we would have $t \sim_{M_n, \x} t' \sim_{M_n, \x} \x$, as desired. 

So for every $u \in \{t\} \cup \Allaliens(t)$ we have $u^{\mathcal{A}} = [u]_{\cong}$, and $t$ is not isomorphic to any closed term over $\Sigma_1 \cup \Sigma_2 \cup \Sigma_3 \cup \{\x\}$ of strictly smaller rank. Given that $t$ satisfies statements (3) and (4) from Lemma \ref{canonicaltermlemma}, we will actually prove the stronger claim that $t \equiv \x$. 

So suppose towards a contradiction that $t \not\equiv \x$. Since $\T_1 + \T_2$ is not trivial and $[t] \in G_{\T_1 + \T_2}(M_n)$, it follows by \cite[Lemma 2.2.56]{thesis} that $t$ must contain $\x$, so that $\rank(t) > 0$. So we have $t \equiv C[s_1, \ldots, s_\ell]$ for some proper $\Sigma_k$-context $C$ (for some $k \in \{1, 2\}$) and closed terms $s_1, \ldots, s_\ell$ over $\Sigma_1 \cup \Sigma_2 \cup \Sigma_3 \cup \{\x\}$ with $\mroot(s_j) \notin \Sigma_k$ for each $1 \leq j \leq \ell$.

By assumption on $\T_1$ and $\T_2$, for $k' \in \{1, 2\}$ with $k \neq k'$ there is a function symbol $f$ of $\Sigma_{k'}$ of arity $m \geq 1$ such that $f$ is neither constant nor a projection in $\T_{k'}$. Since $[t]$ is an element of the logical isotropy group, we know that $[t]$ commutes generically with $f$. More precisely, writing $\sim_{M_n, m}$ for $\sim_{M_n, \{1, \ldots, m\}}$, we have \[ t[f(\x_1, \ldots, \x_m)/\x] \sim_{M_n, m} f(t(\x_1), \ldots, t(\x_m)), \] where $t(\x_i) \equiv t[\x_i/\x]$ for each $1 \leq i \leq m$. By the first fact of Lemma \ref{infinitecongruencelemma}, we then have \[ t[f(\x_1, \ldots, \x_m)/\x] \sim_{M_n, \infty} f(t(\x_1), \ldots, t(\x_m)), \]  and then by \cite[Theorem 3.1]{Combining}, it follows that \[ t[f(\x_1, \ldots, \x_m)/\x]^{\mathcal{A}} = f(t(\x_1), \ldots, t(\x_m))^{\mathcal{A}}. \] First we calculate the right side of this equation. So consider the term $f(t(\x_1), \ldots, t(\x_m))$. Since $\mroot(t) \in \Sigma_k$ and $f \notin \Sigma_k$, the alien subterms of this term are $t(\x_1), \ldots, t(\x_m)$. We claim that $f(t(\x_1), \ldots, t(\x_m))$ does \emph{not} collapse to any of these alien subterms. Suppose towards a contradiction that $f(t(\x_1), \ldots, t(\x_m))$ \emph{did} collapse to $t(\x_i)$ for some $1 \leq i \leq m$. Then this would mean that \[ [f(t(\x_1), \ldots, t(\x_m))]_{\cong}^{k'} \approx_{k'} [t(\x_i)]_{\cong}^{k'}, \] i.e. \[ f([t(\x_1)]_{\cong}, \ldots, [t(\x_m)]_{\cong}) \approx_{k'} [t(\x_i)]_{\cong}. \]

\noindent By Lemma \ref{distinctindeterminateslemma} below, we know that $t(\x_i) \not\cong t(\x_j)$ for any $1 \leq i \neq j \leq m$, and hence $[t(\x_i)]_{\cong} \neq [t(\x_j)]_{\cong}$. Then by Lemma \ref{complexlemma}, we obtain 
\[ \T_{k'} \vdash f(y_1, \ldots, y_m) = y_i \] for distinct variables $y_1, \ldots, y_m \in V$, which contradicts the assumption that $f$ is \emph{not} a projection in $\T_{k'}$. So $f(t(\x_1), \ldots, t(\x_m))$ does \emph{not} collapse to any alien subterm, as desired. Furthermore, since $t^{\mathcal{A}} = [t]_{\cong}$ and $v^{\mathcal{A}} = [v]_{\cong}$ for every $v \in \Allaliens(t)$, it follows by Lemma \ref{collapsinglemma} that $t(\x_i)^{\mathcal{A}} = [t(\x_i)]_{\cong}$ for every $1 \leq i \leq m$. Thus, we have
\begin{align*}
f(t(\x_1), \ldots, t(\x_m))^{\mathcal{A}}	&= f^{\mathcal{A}}\left(t(\x_1)^{\mathcal{A}}, \ldots, t(\x_m)^{\mathcal{A}}\right) \\
							&= f^{\mathcal{A}}([t(\x_1)]_{\cong}, \ldots, [t(\x_m)]_{\cong}) \\
							&= [f(t(\x_1), \ldots, t(\x_m))]_{\cong},
\end{align*}

\noindent where the last equality holds because $f(t(\x_1), \ldots, t(\x_m))$ does not collapse to any of its alien subterms, as just shown. 

\begin{lemma}
\label{distinctindeterminateslemma}
{\em
For any $1 \leq i \neq j \leq m$, we have $t(\x_i) \not\cong t(\x_j)$.
}
\end{lemma}

\begin{proof}
Suppose to the contrary that we had $t(\x_i) \cong t(\x_j)$ for some distinct $1 \leq i, j \leq m$. Then by \cite[Lemma 3.5]{Combining}, we would have $t(\x_i) \sim_{M_n, \infty} t(\x_j)$, and hence $t(\x_i) \sim_{M_n, \{i, j\}} t(\x_j)$ by Lemma \ref{infinitecongruencelemma}, meaning that
\[ (\T_1 + \T_2)(\y_1, \ldots, \y_n, \x_i, \x_j) \vdash t(\x_i) = t(\x_j). \] Since $[t]$ is an element of the logical isotropy group, it follows that $[t]$ is invertible, so that there is some closed term $t^{-1}$ over $\Sigma_1 \cup \Sigma_2 \cup \Sigma_3 \cup \{\x\}$ with
\[ (\T_1 + \T_2)(\y_1, \ldots, \y_n, \x) \vdash t[t^{-1}/\x] = \x = t^{-1}[t/\x]. \] From this and the previous equation, it then easily follows that
\[ (\T_1 + \T_2)(\y_1, \ldots, \y_n, \x, \x_i, \x_j) \vdash \x_i = \x_j. \] By \cite[Lemma 5.1.30]{thesis} and \cite[Theorem 10]{Horn}, we then obtain
\[ \T_1 + \T_2 \vdash y_i = y_j \] for distinct variables $y_i, y_j$ (since $\x_i \not\equiv \x_j$), which contradicts the assumption that $\T_1 + \T_2$ is not trivial. This completes the proof.
\end{proof}

\noindent We now have \[ t[f(\x_1, \ldots, \x_m)/\x]^{\mathcal{A}} = [f(t(\x_1), \ldots, t(\x_m))]_{\cong}, \]
\noindent so we turn to calculating the left side of this equation. By Lemma \ref{mostimportantlemma}, one of the following statements is true (recall that $t \equiv C[s_1, \ldots, s_\ell]$ for $C$ a $\Sigma_k$-context):
\begin{enumerate}

\item $t[f(\overline{\x})/\x]^{\mathcal{A}} = [C(s_1', \ldots, s_\ell')]_{\cong}$, where $s_j[f(\overline{\x})/\x]^{\mathcal{A}} = [s_j']_{\cong}$ for all $1 \leq j \leq \ell$.

\item $t[f(\overline{\x})/\x]^{\mathcal{A}} = [w]_{\cong}$, for some closed term $w$ over $\Sigma_1 \cup \Sigma_2 \cup \Sigma_3$. 

\item $t[f(\overline{\x})/\x]^{\mathcal{A}} = [f(\overline{\x})]_{\cong}$.

\item $t[f(\overline{\x})/\x]^{\mathcal{A}} = [\x_i]_{\cong}$ for some $i \geq 1$.

\item $t[f(\overline{\x})/\x]^{\mathcal{A}} = [D(t_1', \ldots, t_r')]_{\cong}$, where $D$ is a proper $\Sigma_{k'}$-context (for $k' \neq k \in \{1, 2\}$) and for every $1 \leq q \leq r$, there is a term $t_q$ of rank $\leq \rank(t) - 2$ such that $\mroot(t_q) \notin \Sigma_{k'}$ and $t_q[f(\overline{\x})/\x]^{\mathcal{A}} = [t_q']_{\cong}$. 
\end{enumerate}

\noindent To obtain a contradiction and complete the proof, we will show that, in fact, \emph{none} of these statements can be true, contrary to Lemma \ref{mostimportantlemma}. Suppose first that (1) is true. Then we would have 

\[ [C(s_1', \ldots, s_l')]_{\cong} = t[f(\overline{\x})/\x]^{\mathcal{A}} = [f(t(\x_1), \ldots, t(\x_m))]_{\cong}, \]

\noindent so that $C(s_1', \ldots, s_l') \cong f(t(\x_1), \ldots, t(\x_m))$. But this is impossible, because $\mroot(C) \in \Sigma_k$ and $f \notin \Sigma_k$. 

Now suppose that statement (2) is true. Then we have $t[f(\overline{\x})/\x]^{\mathcal{A}} = [w]_{\cong}$, for some closed term $w$ over $\Sigma_1 \cup \Sigma_2 \cup \Sigma_3$. If $\mroot(w) \notin \Sigma_{k'}$, then we already have a contradiction, because we would have \[ [w]_{\cong} = t[f(\overline{\x})/\x]^{\mathcal{A}} = [f(t(\x_1), \ldots, t(\x_m))]_{\cong}, \] so that $w \cong f(t(\x_1), \ldots, t(\x_m))$, which is impossible because $\mroot(f) \in \Sigma_{k'}$. So we must have $\mroot(w) \in \Sigma_{k'}$. Suppose first that $\rank(w) = 0$, so that $w$ is a pure $\Sigma_{k'}$-term. Then we have \[ f(t(\x_1), \ldots, t(\x_m)) \cong w, \] which implies that \[ [f(t(\x_1), \ldots, t(\x_m))]_{\cong}^{k'} \approx_{k'} [w]_{\cong}^{k'}, \] so that
\[ f([t(\x_1)]_{\cong}, \ldots, [t(\x_m)]_{\cong}) \approx_{k'} w. \] Then by Lemma \ref{complexlemma} we would obtain \[ \T_{k'} \vdash f(y_1, \ldots, y_m) = w \] for distinct variables $y_1, \ldots, y_m$, because $t(\x_i) \not\cong t(\x_j)$ for all $1 \leq i \neq j \leq m$ by Lemma \ref{distinctindeterminateslemma}. But $w$ is a closed $\Sigma_{k'}$-term (and hence in particular does not contain any of the variables $y_1, \ldots, y_m$), which contradicts the assumption that $f$ is not constant in $\T_{k'}$. 

Now suppose that $\rank(w) > 0$ (and $\mroot(w) \in \Sigma_{k'}$), and let $w \equiv D[v_1, \ldots, v_r]$ for some proper $\Sigma_{k'}$-context $D$ and closed terms $v_1, \ldots, v_r$ over $\Sigma_1 \cup \Sigma_2 \cup \Sigma_3$ such that $\mroot(v_q) \notin \Sigma_{k'}$ for all $1 \leq q \leq r$. Then we have \[ [D[v_1, \ldots, v_r]]_{\cong} = [w]_{\cong} = [f(t(\x_1), \ldots, t(\x_m))]_{\cong} \] and hence \[ f(t(\x_1), \ldots, t(\x_m)) \cong D[v_1, \ldots, v_r]. \] This implies that
\[ [f(t(\x_1), \ldots, t(\x_m))]_{\cong}^{k'} \approx_{k'} [D[v_1, \ldots, v_r]]_{\cong}^{k'}, \] which means that
\[ f([t(\x_1)]_{\cong}, \ldots, [t(\x_m)]_{\cong}) \approx_{k'} D([v_1]_{\cong}, \ldots, [v_r]_{\cong}). \]
\noindent By Lemma \ref{isotropytermnotconstant} below, we know that for any $1 \leq i \leq m$ there is no $1 \leq q \leq r$ such that $t(\x_i) \cong v_q$, since $v_q$ is a closed term over $\Sigma_1 \cup \Sigma_2 \cup \Sigma_3$. Now suppose that the terms $v_1, \ldots, v_r$ can be partitioned into $p \geq 1$ $\cong$-classes, and for any $1 \leq q \leq r$, let $1 \leq p_q \leq p$ be such that $v_q$ belongs to the $p_q^{\text{th}}$ such $\cong$-class. Let $y_1, \ldots, y_m, z_1, \ldots, z_p$ be pairwise distinct variables. Then by Lemma \ref{complexlemma} (and Lemma \ref{distinctindeterminateslemma}), the equality \[ [f(t(\x_1), \ldots, t(\x_m))]_{\cong}^{k'} \approx_{k'} [D[v_1, \ldots, v_r]]_{\cong}^{k'} \] implies
\[ \T_{k'} \vdash f(y_1, \ldots, y_m) = D\left(z_{p_1}, \ldots, z_{p_r}\right). \] But $D(z_{p_1}, \ldots, z_{p_r})$ does not contain any of the variables $y_1, \ldots, y_m$, so this contradicts the assumption that $f$ is not constant in $\T_{k'}$. This finishes the argument that statement (2) cannot be true for $t$. 

\begin{lemma}
\label{isotropytermnotconstant}
{\em
For any $1 \leq i \leq m$, there is no closed term $v$ over $\Sigma_1 \cup \Sigma_2 \cup \Sigma_3$ such that $t(\x_i) \cong v$. 
}
\end{lemma}

\begin{proof}
Suppose towards a contradiction that there is some $1 \leq i \leq m$ and some closed term $v$ over $\Sigma_1 \cup \Sigma_2 \cup \Sigma_3$ with $t(\x_i) \cong v$. Since neither $t$ nor $v$ contains $\x_i$, it follows by Lemma \ref{indeterminatelemma} that $t \cong v$ (since $v$ does not contain $\x$). Then by \cite[Lemma 3.5]{Combining}, it follows that $t \sim_{M_n, \infty} v$, which in turn implies that $t \sim_{M_n, \x} v$ by Lemma \ref{infinitecongruencelemma}. So $(\T_1 + \T_2)(\y_1, \ldots, \y_n, \x) \vdash t = v$, but this is impossible by \cite[Lemma 2.2.56]{thesis}, since $\T_1 + \T_2$ is not trivial and $[t]$ is an element of the logical isotropy group and $v$ does not contain $\x$. 
\end{proof}

Suppose now that statement (3) is true for $t$, so that $t[f(\overline{\x})/\x]^{\mathcal{A}} = [f(\overline{\x})]_{\cong}$. Then we have \[ [f(\x_1, \ldots, \x_m)]_{\cong} = t[f(\overline{\x})/\x]^{\mathcal{A}} = [f(t(\x_1), \ldots, t(\x_m))]_{\cong}, \] so that $f(\x_1, \ldots, \x_m) \cong f(t(\x_1), \ldots, t(\x_m))$ and hence \[ [f(\x_1, \ldots, \x_m)]_{\cong}^{k'} \approx_{k'} [f(t(\x_1), \ldots, t(\x_m))]_{\cong}^{k'}. \] Now, the alien subterms of $f(\x_1, \ldots, \x_m)$ are $\x_1, \ldots, \x_m$, and those of $f(t(\x_1), \ldots, t(\x_m))$ are $t(\x_1), \ldots, t(\x_m)$. For any $1 \leq i \neq j \leq m$ we have $\x_i \not\cong \x_j$, since $[\x_i]_{\cong}^4 = \x_i \not\approx_4 \x_j = [\x_j]_{\cong}^4$. Also, by Lemma \ref{distinctindeterminateslemma} we have $t(\x_i) \not\cong t(\x_j)$ for $1 \leq i \neq j \leq m$. Finally, for any $1 \leq i \leq m$ there is no $1 \leq j \leq m$ such that $t(\x_i) \cong \x_j$, because $\mroot(t(\x_i)) \in \Sigma_k$ and $\mroot(\x_j) \in \Sigma_4$ and $k \neq 4$. So the terms $\x_1, \ldots, \x_m, t(\x_1), \ldots, t(\x_m)$ are pairwise non-isomorphic. Now let $y_1, \ldots, y_m, z_1, \ldots, z_m$ be pairwise distinct variables. Then because \[ [f(\x_1, \ldots, \x_m)]_{\cong}^{k'} \approx_{k'} [f(t(\x_1), \ldots, t(\x_m))]_{\cong}^{k'}, \] Lemma \ref{complexlemma} implies that
\[ \T_{k'} \vdash f(y_1, \ldots, y_m) = f(z_1, \ldots, z_m). \] But this means that $f$ is constant in $\T_{k'}$, contrary to assumption. So statement (3) cannot be true for $t$. 

Statement (4) cannot be true for $t$, because if we had $t[f(\overline{\x})/\x]^{\mathcal{A}} = [\x_i]_{\cong}$ for some $i \geq 1$, then we would have $[\x_i]_{\cong} = t[f(\overline{\x})/\x]^{\mathcal{A}} = [f(t(\x_1), \ldots, t(\x_m))]_{\cong}$, and hence $f(t(\x_1), \ldots, t(\x_m)) \cong \x_i$, which is impossible because $f$ and $\x_i$ belong to different signatures. 

Lastly, suppose that statement (5) were true for $t$. So \[ t[f(\overline{\x})/\x]^{\mathcal{A}} = [D(t_1', \ldots, t_r')]_{\cong}, \] where $D$ is a proper $\Sigma_{k'}$-context for $k' \neq k \in \{1, 2\}$ and for every $1 \leq q \leq r$, there is a term $t_q$ of rank $\leq \rank(t) - 2$ such that $\mroot(t_q) \notin \Sigma_{k'}$ and $t_q[f(\overline{\x})/\x]^{\mathcal{A}} = [t_q']_{\cong}$. By Lemma \ref{existencesmallerranklemma}, we may assume without loss of generality that $\rank(t_q') \leq \rank(t_q[f(\overline{\x})/\x])$ for each $1 \leq q \leq r$. So for every $1 \leq q \leq r$ we have
\begin{align*}
\rank(t_q')	&\leq \rank(t_q[f(\overline{\x})/\x]) \\
		&\leq \rank(t_q) + 1 \\
		&\leq (\rank(t) - 2) + 1 \\
		&= \rank(t) - 1,
\end{align*}
\noindent where the second inequality comes from Lemma \ref{functionsymbollemma}. Now, there \emph{may} be some $1 \leq q \leq r$ such that $t_q'$ is \emph{not} an alien subterm of $D(t_1', \ldots, t_r')$, so let $D(t_1', \ldots, t_r') \equiv D'[w_1, \ldots, w_s]$ for some proper $\Sigma_{k'}$-context $D'$ and closed terms $w_1, \ldots, w_s$ such that for any $1 \leq p \leq s$, there is some $1 \leq q \leq r$ such that either $w_p \equiv t_q'$ (if $\mroot(t_q') \notin \Sigma_{k'}$) or $w_p$ is an alien subterm of $t_q'$ (if $\mroot(t_q') \in \Sigma_{k'}$). So for any $1 \leq p \leq s$, there is some $1 \leq q \leq r$ such that $\rank(w_p) \leq \rank(t_q') \leq \rank(t) - 1$. 

Now we have \[ [D'[w_1, \ldots, w_s]]_{\cong} = [D(t_1', \ldots, t_r')]_{\cong} = t[f(\overline{\x})/\x]^{\mathcal{A}} = [f(t(\x_1), \ldots, t(\x_m))]_{\cong}, \] and hence \[ D'[w_1, \ldots, w_s] \cong f(t(\x_1), \ldots, t(\x_m)), \] so that

\[ [D'[w_1, \ldots, w_s]]_{\cong}^{k'} \approx_{k'} [f(t(\x_1), \ldots, t(\x_m))]_{\cong}^{k'}. \] Now we want to show that for every $1 \leq i \leq m$, there is no $1 \leq p \leq s$ with $t(\x_i) \cong w_p$. To prove this, we will require the following definitions and lemma.

\begin{definition}
For any closed term $u$ over $\Sigma$, we define the set $\Ind(u)$ of all indeterminates occurring in $u$, by recursion on the structure of $u$:
\begin{itemize}

\item If $u$ is a constant symbol of $\Sigma$, then either $u \in \Sigma_4$ or $u \notin \Sigma_4$. In the first case, either $u \equiv \x$, in which case $\Ind(u) := \{\x\}$, or $u \equiv \x_i$ for some $i \geq 1$, in which case $\Ind(u) = \{\x_i\}$. Otherwise, if $u \notin \Sigma_4$, we set $\Ind(u) := \emptyset$.

\item If $u \equiv g(u_1, \ldots, u_m)$ for some function symbol $g \in \Sigma$ of arity $m \geq 1$ and closed terms $u_1, \ldots, u_m$ over $\Sigma$, then \[ \Ind(u) = \Ind(g(u_1, \ldots, u_m)) := \Ind(u_1) \cup \ldots \cup \Ind(u_m). \] \qed
\end{itemize}
\end{definition}

\begin{definition}
For any closed term $u$ over $\Sigma$, we define $u[\x/\Ind(u)]$, i.e. the term $u$ with $\x$ replacing all indeterminates of $u$, by recursion on the structure of $u$ as follows:
\begin{itemize}

\item If $u$ is a constant symbol of $\Sigma$, then either $u \in \Sigma_4$ or $u \notin \Sigma_4$. If $u \in \Sigma_4$, then $u[\x/\Ind(u)] := \x$. If $u \notin \Sigma_4$, then $u[\x/\Ind(u)] := u$.

\item If $u \equiv g(u_1, \ldots, u_m)$ for some function symbol $g \in \Sigma$ of arity $m \geq 1$ and closed terms $u_1, \ldots, u_m$ over $\Sigma$, then 
\[ u[\x/\Ind(u)] := g(u_1[\x/\Ind(u_1)], \ldots, u_m[\x/\Ind(u_m)]). \] \qed
\end{itemize}
\end{definition}

\noindent One can easily verify that if $u$ does not contain any indeterminates (i.e. if $\Ind(u) = \emptyset$), then $u[\x/\Ind(u)] \equiv u$, and that if $u \equiv C[s_1, \ldots, s_\ell]$ for some proper $\Sigma_k$-context $C \in \Term(\Sigma_k, V)$ ($k \in \{1, 2\}$) and closed terms $s_1, \ldots, s_\ell \in \Term^c(\Sigma)$ with $\mroot(s_j) \notin \Sigma_k$ for each $1 \leq j \leq \ell$, then \[ u[\x/\Ind(u)] \equiv C[s_1[\x/\Ind(s_1)], \ldots, s_\ell[\x/\Ind(s_\ell)]]. \] The proof of the following lemma is then almost identical to the proof of Lemma \ref{indeterminatelemma}: 

\begin{lemma}
\label{replacingindeterminateslemma}
{\em
For any closed terms $u, v$ over $\Sigma$, \[ u \cong v \ \Longrightarrow \ u[\x/\Ind(u)] \cong v[\x/\Ind(v)]. \] \qed
}
\end{lemma}

We can now show that for every $1 \leq i \leq m$, there is no $1 \leq p \leq s$ with $t(\x_i) \cong w_p$. Suppose towards a contradiction that $t(\x_i) \cong w_p$ for some $1 \leq i \leq m$ and $1 \leq p \leq s$. Then by Lemma \ref{replacingindeterminateslemma}, we obtain \[ t(\x_i)[\x/\Ind(t(\x_i))] \cong w_p[\x/\Ind(w_p)]. \] Because $\Ind(t) = \{\x\}$, it follows that $\Ind(t(\x_i)) = \{\x_i\}$, so that $t(\x_i)[\x/\Ind(t(\x_i))] \equiv t(\x_i)[\x/\x_i] \equiv t$. So then we have $t \cong w_p[\x/\Ind(w_p)]$. Now, we know that $w_p[\x/\Ind(w_p)]$ is a term over $\Sigma_1 \cup \Sigma_2 \cup \Sigma_3 \cup \{\x\}$, and by Lemma \ref{rankindeterminateslemma} below, we have $\rank(w_p[\x/\Ind(w_p)]) = \rank(w_p) < \rank(t)$. So $t$ is isomorphic to a term over $\Sigma_1 \cup \Sigma_2 \cup \Sigma_3 \cup \{\x\}$ of strictly smaller rank, which contradicts our assumptions on $t$. It follows that there can be no $1 \leq i \leq m$ and $1 \leq p \leq s$ with $t(\x_i) \cong w_p$, as desired. 

\begin{lemma}
\label{rankindeterminateslemma}
{\em
For any closed term $u$ over $\Sigma$, \[ \rank(u[\x/\Ind(u)]) = \rank(u). \]
}
\end{lemma}

\begin{proof}
We prove this by induction on $\rank(u)$. First suppose that $\rank(u) = 0$, so that $u$ is pure. So either $u$ is an indeterminate, or $u$ does not contain any indeterminates. If $u$ is an indeterminate, then we have 
\[ \rank(u[\x/\Ind(u)]) = \rank(\x) = 0 = \rank(u). \] Otherwise, we have $u[\x/\Ind(u)] \equiv u$, from which the result obviously follows. 

Now suppose that $\rank(u) > 0$, and let $u \equiv C[s_1, \ldots, s_\ell]$ for some proper $\Sigma_k$-context $C$ (for some $k \in \{1, 2\}$) and closed terms $s_1, \ldots, s_\ell$ over $\Sigma$ with $\mroot(s_j) \notin \Sigma_k$ for each $1 \leq j \leq \ell$. Since $\rank(s_j) < \rank(u)$ for each $1 \leq j \leq \ell$, the induction hypothesis yields $\rank(s_j[\x/\Ind(s_j)]) = \rank(s_j)$ for each $1 \leq j \leq \ell$. Then we have
\begin{align*}
\rank(u[\x/\Ind(u)])	&= \rank(C[s_1[\x/\Ind(s_1)], \ldots, s_\ell[\x/\Ind(s_\ell)]]) \\
				&= 1 + \bmax\{\rank(s_1[\x/\Ind(s_1)]), \ldots, \rank(s_\ell[\x/\Ind(s_\ell)])\} \\
				&= 1 + \bmax\{\rank(s_1), \ldots, \rank(s_\ell)\} \\
				&= \rank(C[s_1, \ldots, s_\ell]) \\
				&= \rank(u),
\end{align*}
\noindent as required. This completes the induction and the proof.
\end{proof}

\noindent Recall that we have \[ [D'[w_1, \ldots, w_s]]_{\cong}^{k'} \approx_{k'} [f(t(\x_1), \ldots, t(\x_m))]_{\cong}^{k'}. \] By Lemma \ref{distinctindeterminateslemma} and the conclusion reached just before Lemma \ref{rankindeterminateslemma}, Lemma \ref{complexlemma} implies that
\[ \T_{k'} \vdash f(y_1, \ldots, y_m) = v, \]
\noindent where $y_1, \ldots, y_m$ are distinct variables and $v \in \Term(\Sigma_{k'}, V)$ does not contain any of $y_1, \ldots, y_m$. But this means that $f$ is constant in $\T_{k'}$, contrary to assumption. \par

This completes the proof that none of the statements (1) -- (5) of Lemma \ref{mostimportantlemma} can be true for $t$. Since this contradicts Lemma \ref{mostimportantlemma}, this means that our assumption that $t \not\equiv \x$ was wrong, and hence we must have $t \equiv \x$ and \emph{a fortiori} $t \sim_{M_n, \x} \x$ after all. So $[t] \in G_{\T_1 + \T_2}(M_n)$ implies $[t] = [\x]$, which implies that $G_{\T_1 + \T_2}(M_n) = [\x]$, as desired.
\end{proof}

\noindent We can now deduce the following characterization of the \emph{categorical} isotropy groups of the free, finitely generated models of $\T_1 + \T_2$, which follows immediately from Theorem \ref{maintheorem} and the two bullet points at the end of Section 1. 

\begin{corollary}
{\em Let $\T_1, \T_2$ be equational theories over respective disjoint signatures $\Sigma_1, \Sigma_2$, and suppose for each $i \in \{1, 2\}$ that $\T_i$ has at least one function symbol that is neither constant nor a projection in $\T_i$. Let $n \geq 0$, with $M_n$ the free model of $\T_1 + \T_2$ on $n$ generators. 

\begin{itemize}
\item Let \[ \pi = \left(\pi_h : \mathsf{cod}(h) \to \mathsf{cod}(h)\right)_{\mathsf{dom}(h) = M_n} \] be a (not necessarily natural) family of endomorphisms in $(\T_1 + \T_2)\mathsf{mod}$ indexed by morphisms with domain $M_n$. Then $\pi \in \mathcal{Z}_{\T_1 + \T_2}(M_n)$ iff 
\[ \pi_h = \mathsf{id}_M : M \xrightarrow{\sim} M \] for each $h : M_n \to M$ in $(\T_1 + \T_2)\mathsf{mod}$.

\item Let $h : M_n \xrightarrow{\sim} M_n$ be an automorphism of $M_n$. Then $h$ is a (categorical) inner automorphism of $M_n$ iff $h = \mathsf{id}_{M_n}$.  
\end{itemize} \qed
} 
\end{corollary}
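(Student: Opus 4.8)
The plan is to obtain the Corollary as a formal consequence of Theorem \ref{maintheorem} together with the two displayed characterizations recalled at the end of Section 1, which express membership in $\mathcal{Z}_{\T_1 + \T_2}(M_n)$ and the categorical inner automorphisms of $M_n$ entirely in terms of the logical isotropy group $G_{\T_1 + \T_2}(M_n)$. The crucial input is that Theorem \ref{maintheorem} gives $G_{\T_1 + \T_2}(M_n) = \{[\x]\}$, so every term $[t]$ appearing as a witness in those characterizations must in fact be $[\x]$.

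First I would note the trivial computation that for any model $N$ of $\T_1 + \T_2$ with distinguished elements $a_1, \ldots, a_n \in N$, the induced function $[\x]^{N, a_1, \ldots, a_n} : N \to N$ is the identity: substituting $a$ for $\x$ (and the $a_i$ for the $\y_i$) in the term $\x$ and evaluating in $N$ simply returns $a$. I would also recall, from Section 1, that $[\x]$ is always the unit element of $G_{\T_1 + \T_2}(M_n)$, so it genuinely lies in that group; this is what makes the converse directions of the two biconditionals go through.

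For the first bullet: if $\pi \in \mathcal{Z}_{\T_1 + \T_2}(M_n)$, then by the first characterization there is some $[t] \in G_{\T_1 + \T_2}(M_n)$ with $\pi_h = [t]^{M, h_1, \ldots, h_n}$ for every $h : M_n \to M$; Theorem \ref{maintheorem} forces $[t] = [\x]$, so $\pi_h = [\x]^{M, h_1, \ldots, h_n} = \id_M$ for each $h$. Conversely, if $\pi_h = \id_M$ for every $h$, then $\pi_h = [\x]^{M, h_1, \ldots, h_n}$ with $[\x] \in G_{\T_1 + \T_2}(M_n)$, and the same characterization yields $\pi \in \mathcal{Z}_{\T_1 + \T_2}(M_n)$. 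The second bullet is handled identically using the second characterization, specialized to the identity morphism $\id : M_n \to M_n$ (whose effect on the generators is $[\y_1], \ldots, [\y_n]$): an automorphism $h$ of $M_n$ is a categorical inner automorphism iff $h = [t]^{M_n, \id_1, \ldots, \id_n}$ for some $[t] \in G_{\T_1 + \T_2}(M_n)$, which by Theorem \ref{maintheorem} means exactly $h = [\x]^{M_n, \id_1, \ldots, \id_n} = \id_{M_n}$.

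Since all the real difficulty has already been absorbed into Theorem \ref{maintheorem}, I do not expect any genuine obstacle here. The only points requiring a little care are matching the forward and reverse directions of each biconditional to the corresponding direction of the Section 1 characterizations, and not overlooking that $[\x]$ does belong to $G_{\T_1 + \T_2}(M_n)$, which is what secures the converse implications.
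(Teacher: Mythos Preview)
Your proposal is correct and is exactly the approach the paper takes: the corollary is stated with a \qed and the paper explicitly says it ``follows immediately from Theorem \ref{maintheorem} and the two bullet points at the end of Section 1,'' which is precisely what you unpack. The only additional detail you supply---that $[\x]^{N,a_1,\ldots,a_n} = \id_N$ and that $[\x] \in G_{\T_1+\T_2}(M_n)$ secures the converse directions---is the obvious filling-in the paper leaves implicit.
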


\noindent Therefore, the only `coherently extendible' automorphism of the free model of $\T_1 + \T_2$ on $n$ generators is the \emph{identity} automorphism.

For a single-sorted equational theory $\T$, let $\Aut\left(\Id_{\Tmod}\right)$ be the group of all natural automorphisms of the identity functor $\Id_{\Tmod} : \Tmod \to \Tmod$ of $\Tmod$ (which is also the group of invertible elements of the \emph{centre} of the category $\Tmod$, which is the monoid $\mathsf{End}\left(\Id_{\Tmod}\right)$ of natural \emph{endo}morphisms of $\Id_{\Tmod}$). We may refer to $\Aut\left(\Id_{\Tmod}\right)$ as the \emph{global isotropy group} of $\T$. Since $M_0$, the free $\T$-model on $0$ generators, i.e. the \emph{initial} $\T$-model, is an initial object of $\Tmod$, it is not difficult to see that
\[ \Aut\left(\Id_{\Tmod}\right) = \mathcal{Z}_\T(M_0). \] We thus obtain:

\begin{corollary}
{\em Let $\T_1, \T_2$ be equational theories over respective disjoint signatures $\Sigma_1, \Sigma_2$, and suppose for each $i \in \{1, 2\}$ that $\T_i$ has at least one function symbol that is neither constant nor a projection in $\T_i$. Then the global isotropy group
\[ \Aut\left(\Id_{(\T_1+\T_2)\mathsf{mod}}\right) = \mathcal{Z}_{\T_1 + \T_2}(M_0) \]
of $(\T_1 + \T_2)\mathsf{mod}$ is trivial. 

So the only natural automorphism of $\Id_{(\T_1+\T_2)\mathsf{mod}}$ is the identity natural transformation. 
\qed
}
\end{corollary}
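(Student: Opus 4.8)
The plan is to derive this corollary directly from Theorem \ref{maintheorem}, together with the two facts recalled in Section 1 and in the paragraph immediately preceding the corollary: namely, that $\mathcal{Z}_\T(M_n) \cong G_\T(M_n)$ for every equational theory $\T$ and every $n \geq 0$ (the case $n = 0$ being all we need), and that $\Aut(\Id_{\Tmod}) = \mathcal{Z}_\T(M_0)$ because $M_0$ is initial in $\Tmod$.

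First I would spell out the identification $\Aut\left(\Id_{(\T_1 + \T_2)\mathsf{mod}}\right) = \mathcal{Z}_{\T_1 + \T_2}(M_0)$. Since $M_0$ is an initial object of $(\T_1 + \T_2)\mathsf{mod}$, every model $M$ admits a unique homomorphism $M_0 \to M$; hence the forgetful functor $M_0/(\T_1 + \T_2)\mathsf{mod} \to (\T_1 + \T_2)\mathsf{mod}$ is an isomorphism of categories (its inverse sends $M$ to the unique arrow $M_0 \to M$), and under this isomorphism it corresponds to the identity functor $\Id_{(\T_1 + \T_2)\mathsf{mod}}$. Therefore the group $\mathcal{Z}_{\T_1 + \T_2}(M_0)$ of natural automorphisms of that forgetful functor equals the group $\Aut\left(\Id_{(\T_1 + \T_2)\mathsf{mod}}\right)$ of natural automorphisms of $\Id_{(\T_1 + \T_2)\mathsf{mod}}$, which is precisely the group of invertible elements of the centre $\mathsf{End}\left(\Id_{(\T_1 + \T_2)\mathsf{mod}}\right)$.

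Next I would invoke Theorem \ref{maintheorem} in the case $n = 0$: since $\T_1$ and $\T_2$ each have a function symbol that is neither constant nor a projection, we obtain $G_{\T_1 + \T_2}(M_0) = \{[\x]\}$, the trivial group. Combining this with the isomorphism $\mathcal{Z}_{\T_1 + \T_2}(M_0) \cong G_{\T_1 + \T_2}(M_0)$ from \cite[Corollary 2.4.15]{thesis} and the identification from the previous paragraph yields
\[ \Aut\left(\Id_{(\T_1 + \T_2)\mathsf{mod}}\right) = \mathcal{Z}_{\T_1 + \T_2}(M_0) \cong G_{\T_1 + \T_2}(M_0) = \{[\x]\}, \]
so the global isotropy group is trivial, and the only natural automorphism of $\Id_{(\T_1 + \T_2)\mathsf{mod}}$ is the identity natural transformation.

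There is no real obstacle here: once Theorem \ref{maintheorem} is in hand, the corollary is pure bookkeeping. The only step requiring any care is the first one — checking that the category-theoretic definition of the global isotropy group as $\Aut(\Id)$ coincides with the covariant isotropy group $\mathcal{Z}_{\T_1 + \T_2}(M_0)$ of the initial model — and even that is a short, standard argument about slice categories under a strictly initial object, which the paper has already asserted; note also that no hypotheses beyond those of Theorem \ref{maintheorem} are needed.
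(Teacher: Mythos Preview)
Your proposal is correct and follows exactly the paper's approach: the paper states the corollary with a \qed, having just noted that $\Aut(\Id_{\Tmod}) = \mathcal{Z}_\T(M_0)$ because $M_0$ is initial, so the result is immediate from Theorem \ref{maintheorem} together with the isomorphism $\mathcal{Z}_{\T_1+\T_2}(M_0) \cong G_{\T_1+\T_2}(M_0)$ recalled in Section 1. If anything, you supply slightly more detail on the slice-category identification than the paper does.
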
   

\begin{remark}
The assumption in Theorem \ref{maintheorem} that both $\T_1$ and $\T_2$ have function symbols that are neither constant nor projections is necessary to obtain the result in general. For a trivial counterexample to Theorem \ref{maintheorem} when this assumption is \emph{not} satisfied, let $\T_1$ be any equational theory for which there is some $n \geq 0$ such that the free model of $\T_1$ on $n$ generators has non-trivial isotropy group (e.g. the theory of groups, cf. \cite[Example 4.1]{MFPSpaper}, \cite[Corollary 3.2.9]{thesis}), and let $\T_2$ be the empty theory over the empty signature. Then $\T_2$ clearly does not have any function symbol that is neither constant nor a projection, and moreover $\T_1 + \T_2 = \T_1$, so the free model of $\T_1 + \T_2 = \T_1$ on $n$ generators will have non-trivial isotropy group, contrary to the result of Theorem \ref{maintheorem}.

For a slightly less trivial counterexample (where neither theory is empty), if $\T_1$ is a theory with the same properties as in the first counterexample, and $\T_2$ is an equational theory over a non-empty signature $\Sigma_2$ containing only function symbols of arity $\geq 1$ that are all \emph{projections} in $\T_2$, then it is easy to see that if $t \in \Term^c(\Sigma_1(\x, \y_1, \ldots, \y_n))$ and $[t] \in G_{\T_1}(M_{\T_1, n})$ (where $M_{\T_1, n}$ is the free model of $\T_1$ on $n$ generators) and $[t] \neq [\x]$, then $[t]$ will commute generically with every function symbol of $\Sigma_2$ (because these are all projections in $\T_2$), so that we will have $[t] \in G_{\T_1 + \T_2}(M_n)$ and $[t] \neq [\x]$, contrary to the result of Theorem \ref{maintheorem}. \qed     
\end{remark}

\section{Conclusions}

Using techniques and results from \cite[Chapter 9]{Rewriting} and \cite{Combining}, which essentially show that the word problem for free models of a disjoint union of equational theories is solvable if the word problems for the free models of the component theories are solvable, we have shown that the isotropy groups of all free, finitely generated models of a disjoint union of equational theories (satisfying minimal assumptions) are \emph{trivial}. A natural next step would be to try to extend the results herein to apply to all \emph{finitely presented} models of $\T_1 + \T_2$; however, given the apparent lack of results like those in \cite{Combining} for \emph{finitely presented} models of unions of equational theories, there is no obvious way of doing this. 

\medskip

\end{document}